\date{October 26, 2016}
\newtheorem{theorem}{Theorem}[section]
\newtheorem{theorem*}{Theorem}
\newtheorem{lemma}[theorem]{Lemma}
\newtheorem{proposition}[theorem]{Proposition}
\newtheorem{corollary}[theorem]{Corollary}
\theoremstyle{remark}
\newcommand{\EE}{{\mathbb E}}
\newcommand{\NN}{{\mathbb N}}
\newcommand{\PP}{{\mathbb P}}
\newcommand{\RR}{{\mathbb R}}
\newcommand{\bdN}{{\boldsymbol N}}
\newcommand{\bdS}{{\boldsymbol S}}
\newcommand{\bdT}{{\boldsymbol T}}
\newcommand{\bdX}{{\boldsymbol X}}
\newcommand{\bdkappa}{{\boldsymbol \kappa}}
\renewcommand{\a}{\alpha}
\title{Double spend races}
\subjclass[2010]{\tiny 68M01, 60G40, 91A60, 33B20.}
\keywords{\tiny Bitcoin, blockchain, double spend, mining, proof-of-work, Regularized Incomplete Beta Function.}
\date{\tiny February 9th 2017}
\author[C. Grunspan]{Cyril Grunspan}
\address{Cyril Grunspan\newline{}\indent L\'eonard de Vinci P\^ole Univ, Research Center, Labex R\'efi
\newline{}\indent 92 916 Paris-La D\'efense, France}
\email{cyril.grunspan@devinci.fr}
\author[R. P\'{e}rez-Marco]{Ricardo P\'{e}rez-Marco}
\address{Ricardo P\'{e}rez-Marco\newline{}\indent CNRS, IMJ-PRG, Labex R\'efi
, Labex MME-DDII\newline{}\indent B\^at. Sophie Germain, Case 7012, 75205-Paris Cedex 13, France}
\email{ricardo.perez.marco@gmail.com}
\address{\tiny Author's Bitcoin Beer Address (ABBA)\footnote{\tiny Send some bitcoins to support our research at the pub.}:\newline{}\indent 1KrqVxqQFyUY9WuWcR5EHGVvhCS841LPLn} 
\address{\includegraphics[scale=0.34]{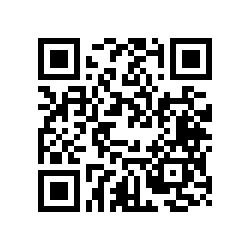}}
\begin{document}

\begin{abstract}
We correct the double spend race analysis given in Nakamoto's foundational Bitcoin article and give 
a closed-form formula for the probability of success of a double spend attack using
the regularized incomplete beta function. We give a proof of the exponential decay on the number of 
confirmations, often cited in the literature, and find an asymptotic formula. Larger number of confirmations are necessary compared to those given by Nakamoto. 
We also compute the probability conditional to the known validation time of the blocks. This provides a finer 
risk analysis than the classical one.
\end{abstract}

\maketitle

\noindent 
\emph{\footnotesize To the memory of our beloved teacher Andr\'e Warusfel who taught us how to 
have fun with the applications of mathematics.}

\section{Introduction.}

The main breakthrough in \cite{N} is the solution to the \textit{double spend problem}. Before this discovery no one knew how to avoid 
the double spending of an electronic currency unit without the supervision of a central authority. This made Bitcoin the first 
form of \textit{peer-to-peer} (P2P) electronic currency. 

A double spend attack can only be attempted with a substantial fraction of the 
hashrate used in the \textit{proof-of-work} of the Bitcoin network. The attackers will start a \textit{double spend race}
against the rest of the network to replace the last blocks of the blockchain by secretly mining an alternate blockchain. 
The last section of the Bitcoin's white paper \cite{N} computes the probability that the attackers catch up. However Nakamoto's analysis is not accurate since he makes the simplifying assumption that honest miners validate blocks at the expected rate. We present a correct analysis and give a closed-form formula for the exact probability.

\pagebreak

\begin{theorem*}
 Let $0<q<1/2$, respectively $p=1-q$, be the relative hash power of the group of the attackers, respectively of honest miners. 
 After $z$ blocks have been validated by the honest miners, the probability of success of the attackers is 
 $$
 P(z)=I_{4pq}(z,1/2) \ 
 $$
 where $I_x(a,b)$ is the regularized incomplete beta function
 $$
 I_x(a,b)=\frac{\Gamma(a+b)}{\Gamma(a) \Gamma(b)} \int_0^x t^{a-1}(1-t)^{b-1} \, dt \ .
 $$
\end{theorem*}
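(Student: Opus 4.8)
The plan is to first reduce $P(z)$ to an explicit finite sum by a direct analysis of the race, and then to identify that sum with $I_{4pq}(z,1/2)$ by an elementary calculus argument.

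\emph{Step 1 (the race).} I will model blockchain growth so that successive blocks are discovered by the honest miners with probability $p$ and by the attackers with probability $q$, independently. The event ``$z$ blocks have been validated by the honest miners'' is the random instant at which the $z$-th honest block appears; at that instant the number $N$ of blocks already found by the attackers follows a negative binomial law, $\PP(N=k)=\binom{z+k-1}{k}p^{z}q^{k}$, $k\geq 0$. This is exactly the point where Nakamoto's analysis is to be corrected: he replaces $N$ by a Poisson variable with the same mean. If $N=k\geq z$ the attackers have already caught up; if $k<z$ they are behind by $z-k$ blocks, and from then on the difference of chain lengths is a random walk with up-probability $p$ and down-probability $q$, so by the gambler's ruin formula (the relevant root of $p\rho^{2}-\rho+q=0$ being $\rho=q/p<1$) the probability that they ever erase the deficit is $(q/p)^{z-k}$. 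Using $p^{z}q^{k}(q/p)^{z-k}=p^{k}q^{z}$, the fact that the tail $\sum_{k\geq z}\binom{z+k-1}{k}p^{z}q^{k}$ equals $1$ minus the partial sum, and recognizing $\sum_{k=0}^{z-1}\binom{z+k-1}{k}p^{k}q^{z}$ as $F(1-p)$, one obtains
$$
P(z)=1-\bigl(F(p)-F(1-p)\bigr),\qquad F(p):=\sum_{k=0}^{z-1}\binom{z+k-1}{k}\,p^{z}(1-p)^{k},
$$
with the convention $q=1-p$.

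\emph{Step 2 (a telescoping derivative).} Viewing $F$ as a polynomial in $p$, I claim that $\frac{d}{dp}\sum_{k=0}^{m}\binom{z+k-1}{k}p^{z}(1-p)^{k}=z\binom{z+m}{m}p^{z-1}(1-p)^{m}$ for every $m\geq 0$. This is proved by induction on $m$: the increment of the left side when $m-1$ is replaced by $m$ is $\binom{z+m-1}{m}\bigl(zp^{z-1}q^{m}-mp^{z}q^{m-1}\bigr)$, and after dividing by $p^{z-1}q^{m-1}$ and using $q=1-p$, Pascal's identity, and $(z+m)\binom{z+m-1}{m}=z\binom{z+m}{m}$, one checks that this equals the corresponding increment of the right side. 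Taking $m=z-1$ gives $F'(p)=z\binom{2z-1}{z-1}\bigl(p(1-p)\bigr)^{z-1}$, and since $p(1-p)$ is invariant under $p\mapsto 1-p$ (so that $F'(1-p)=F'(p)$ and the two minus signs cancel),
$$
\frac{d}{dp}\bigl(F(p)-F(1-p)\bigr)=F'(p)+F'(1-p)=2z\binom{2z-1}{z-1}(pq)^{z-1}=z\binom{2z}{z}(pq)^{z-1}.
$$

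\emph{Step 3 (matching the Beta function).} On the Beta side, $1-4pq=(p-q)^{2}$, so by the Fundamental Theorem of Calculus, the chain rule with $\tfrac{d}{dp}(4pq)=-4(p-q)$, the identity $(1-4pq)^{-1/2}=(p-q)^{-1}$ (valid since $p>\tfrac12>q$), and the value $\tfrac{\Gamma(z+1/2)}{\Gamma(z)\Gamma(1/2)}=\tfrac{z}{4^{z}}\binom{2z}{z}$, one gets, for $p\in(1/2,1)$,
$$
\frac{d}{dp}\bigl(1-I_{4pq}(z,1/2)\bigr)=z\binom{2z}{z}(pq)^{z-1}.
$$
Both $F(p)-F(1-p)$ and $1-I_{4pq}(z,1/2)$ vanish at $p=1/2$ (for the second because $I_{1}(z,1/2)=1$), so integrating from $1/2$ to $p$ shows they agree on $(1/2,1)$; hence $P(z)=I_{4pq}(z,1/2)$, which is the theorem.

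The main obstacle is Step~2: the telescoping that collapses $F'(p)$ to the single monomial $z\binom{2z-1}{z-1}(pq)^{z-1}$. Step~1 is a standard first-passage computation (the only conceptual novelty being the negative-binomial law for $N$, which is precisely Nakamoto's oversight), and Step~3 is routine Gamma-function bookkeeping. An alternative to Steps~2--3 is to rewrite the sum of Step~1 via the negative-binomial/Beta correspondence as $P(z)=I_{q}(z,z+1)+I_{q}(z+1,z)$ and invoke a quadratic transformation of the incomplete Beta function, but the elementary route above is more self-contained.
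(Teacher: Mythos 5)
Your proof is correct, and while Step 1 coincides with the paper's derivation (the negative binomial law for the attackers' block count at the $z$-th honest block, the gambler's-ruin factor $(q/p)^{z-k}$, and the resulting finite sum $P(z)=1-\sum_{k=0}^{z-1}(p^zq^k-q^zp^k)\binom{k+z-1}{k}$), your Steps 2--3 take a genuinely different route to identify that sum with $I_{4pq}(z,1/2)$. The paper proceeds by quoting three special-function identities from Abramowitz--Stegun: the expression of the negative-binomial cumulative distribution through the regularized incomplete Beta function, the reflection symmetry $I_p(a,b)+I_q(b,a)=1$ (giving $P(z)=2I_q(z,z)$), and finally the duplication formula $I_q(z,z)=\tfrac12 I_{4pq}(z,1/2)$ -- the last being the only non-elementary ingredient. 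You instead differentiate both candidate expressions in $p$, collapse $F'(p)$ to the single monomial $z\binom{2z-1}{z-1}(pq)^{z-1}$ by a telescoping induction (which checks out, via Pascal's rule and $(z+m)\binom{z+m-1}{m}=z\binom{z+m}{m}$), match it against the chain-rule derivative of $1-I_{4pq}(z,1/2)$ using $(1-4pq)^{1/2}=p-q$ and $\Gamma(z+\tfrac12)/(\Gamma(z)\Gamma(\tfrac12))=z4^{-z}\binom{2z}{z}$, and conclude by comparing values at $p=\tfrac12$. Your route is fully self-contained and in effect reproves the needed case of the duplication formula, at the cost of working only for integer $z$ (which is all the theorem requires) and of a slightly longer computation; the paper's route is shorter but leans on the cited identities, one of which it even states with $p$ and $q$ transposed before arriving at the correct conclusion. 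Your closing remark that the sum also equals $I_q(z,z+1)+I_q(z+1,z)$ is consistent with the paper's intermediate form $2I_q(z,z)$.
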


In general, for $z\geq 2$, these probabilities $P(z)$ are larger than those $P_{SN}(z)$ obtained by Nakamoto. 
From the standpoint of bitcoin security, this shows than larger confirmation times $z$ are necessary compared to those $z_{SN}$ 
given by Nakamoto, in particular this happens when the share of hashrate $q$ of the attackers is important. The following 
table shows the number $z$ of confirmations to wait compared to those $z_{SN}$ given by Nakamoto for an attacking hashrate 
of $10\%$ (or $q=0.1$) and a probability of success of the attackers less than $0.1\%$.

\medskip

$$
\begin{array}{|c||c|c|c|c|c|c|c|c|}
 \hline
 q & 0.10 & 0.15 & 0.20 & 0.25 & 0.30 & 0.35 & 0.40 & 0.45 \\ 
 \hline
 z & 6 & 9 & 13 & 20 & 32 & 58 & 133 & 539\\ 
 \hline
 z_{SN} & 5 & 8 & 11 & 15 & 24 & 41 & 81 & 340 \\ 
 \hline
\end{array}
$$

\smallskip

{\centerline {\textbf {\footnotesize{Table 1. Comparison of number of confirmations.}}} 

\medskip

Nakamoto claims in \cite{N} that the probability $P(z)$ converges  
exponentially to $0$ with $z$. This result is intuitively expected and 
cited at large but there is no proof available in the literature. We give here a rigorous proof of this result. More 
precisely we give precise asymptotics both for $P_{SN}(z)$ and $P(z)$ showing the exponential decay.

\begin{theorem*}
 When $z\to +\infty$ we have , with $s=4pq<1$,
$$
P(z) \sim \frac{s^{z}}{\sqrt{\pi (1-s) z}} \ .
$$
and with $\lambda =q/p$, and $c(\lambda) =\lambda -1-\log \lambda >0$, 
$$
P_{SN}(z)\sim \frac{e^{-z c(\lambda)}}{2}
$$
\end{theorem*}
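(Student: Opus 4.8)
The plan is to establish the two equivalences separately, each via a classical Laplace/Stirling analysis: for $P(z)$ starting from the closed form $P(z)=I_s(z,1/2)$ of the first theorem, with $s=4pq\in(0,1)$ (note incidentally $1-s=(p-q)^2$), and for $P_{SN}(z)$ starting from Nakamoto's explicit series.

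For the term $P(z)$: by the first theorem,
$$P(z)=I_s(z,\tfrac12)=\frac{\Gamma(z+\tfrac12)}{\Gamma(z)\,\Gamma(\tfrac12)}\int_0^s t^{z-1}(1-t)^{-1/2}\,dt ,\qquad \Gamma(\tfrac12)=\sqrt\pi .$$
Since $t^{z-1}$ is increasing, the integral concentrates at the (non-degenerate) upper endpoint $t=s$; the cleanest route is one integration by parts,
$$\int_0^s t^{z-1}(1-t)^{-1/2}\,dt=\Big[\tfrac{t^z}{z}(1-t)^{-1/2}\Big]_0^s-\frac{1}{2z}\int_0^s t^z(1-t)^{-3/2}\,dt=\frac{s^z}{z\sqrt{1-s}}-\frac{1}{2z}\int_0^s t^z(1-t)^{-3/2}\,dt ,$$
after which the bound $\int_0^s t^z(1-t)^{-3/2}\,dt\le(1-s)^{-3/2}\int_0^s t^z\,dt=O(s^z/z)$ shows the integral equals $\frac{s^z}{z\sqrt{1-s}}\,(1+O(1/z))$. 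Using Stirling in the form $\Gamma(z+\tfrac12)/\Gamma(z)=\sqrt z\,(1+O(1/z))$ and multiplying the factors gives $P(z)=\frac{\sqrt z}{\sqrt\pi}\cdot\frac{s^z}{z\sqrt{1-s}}\,(1+O(1/z))=\dfrac{s^z}{\sqrt{\pi(1-s)z}}\,(1+O(1/z))$, the asserted equivalence. (Alternatively substitute $t=s\,e^{-u/z}$ and apply Watson's lemma; only the leading endpoint term survives.)

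For the term $P_{SN}(z)$: recall Nakamoto's Poisson--gambler's-ruin formula, with $\lambda=q/p<1$,
$$P_{SN}(z)=1-\sum_{k=0}^{z}\frac{(z\lambda)^k e^{-z\lambda}}{k!}\Big(1-\lambda^{\,z-k}\Big),$$
and split it exactly as $P_{SN}(z)=A(z)+B(z)$, where
$$A(z)=\sum_{k=z+1}^{\infty}\frac{(z\lambda)^k e^{-z\lambda}}{k!},\qquad B(z)=\lambda^{z}e^{-z\lambda}\sum_{k=0}^{z}\frac{z^k}{k!},$$
using $\frac{(z\lambda)^k e^{-z\lambda}}{k!}\lambda^{z-k}=\lambda^z e^{-z\lambda}\frac{z^k}{k!}$ and recognizing the complementary sum as a Poisson upper tail. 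The key algebraic identity is $\lambda^z e^{-z\lambda}e^{z}=e^{-z(\lambda-1-\log\lambda)}=e^{-zc(\lambda)}$. For $B(z)$ one writes $B(z)=e^{-zc(\lambda)}\big(e^{-z}\sum_{k=0}^{z}z^k/k!\big)$ and invokes the classical limit $e^{-z}\sum_{k=0}^{z}z^k/k!\to\tfrac12$ (a problem of Ramanujan; equivalently $Q(z+1,z)\to\tfrac12$ for the regularized incomplete gamma function, provable by the central limit theorem applied to a sum of $z$ i.i.d.\ Poisson$(1)$ variables), so that $B(z)\sim\tfrac12\,e^{-zc(\lambda)}$. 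For $A(z)$, the ratio of consecutive Poisson$(z\lambda)$ weights with index $\ge z+1$ is at most $z\lambda/(z+2)<\lambda$, whence $A(z)\le\frac{(z\lambda)^{z+1}e^{-z\lambda}}{(z+1)!}\cdot\frac{1}{1-\lambda}$, and Stirling gives $\frac{(z\lambda)^{z+1}e^{-z\lambda}}{(z+1)!}=\Theta\!\big(e^{-zc(\lambda)}/\sqrt z\big)$; hence $A(z)=O\!\big(e^{-zc(\lambda)}/\sqrt z\big)=o\!\big(e^{-zc(\lambda)}\big)$. Adding the two pieces yields $P_{SN}(z)\sim\tfrac12\,e^{-zc(\lambda)}$.

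The Stirling and Gamma-ratio estimates, the endpoint Laplace analysis, and the geometric-tail bound for $A(z)$ are all routine; the one genuinely non-elementary ingredient --- and the step I would be most careful with --- is the limit $e^{-z}\sum_{k\le z}z^k/k!\to\tfrac12$, because it is exactly what produces the constant $1/2$ in front of $e^{-zc(\lambda)}$ and explains why $P_{SN}(z)$ is asymptotic to a \emph{constant} times $e^{-zc(\lambda)}$, rather than carrying an extra $1/\sqrt z$ factor like $A(z)$ and like $P(z)$. Before concluding I would verify that the error terms appearing in $A(z)$, in $B(z)$, in the Gamma ratio, and in the endpoint integral are each genuinely negligible against their respective leading terms.
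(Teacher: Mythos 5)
Your proof is correct. For the first asymptotic your route is essentially the paper's: both start from the closed form $P(z)=I_s(z,1/2)$, extract the endpoint contribution $\frac{s^z}{z}(1-s)^{-1/2}$ from $B_s(z,1/2)$ by an integration by parts (the paper packages this as an elementary Watson's lemma after the substitution $t=se^{-u}$, you integrate by parts directly in the $t$ variable), and finish with $\Gamma(z+\tfrac12)/\Gamma(z)\sim\sqrt z$. For the second asymptotic the decomposition is also the same one the paper uses --- your $A(z)$ and $B(z)$ are exactly $1-Q(z,z\lambda)$ and $\lambda^z e^{z(1-\lambda)}Q(z,z)$ up to a single Poisson weight of size $O(e^{-zc(\lambda)}/\sqrt z)$ shifted between them --- but where the paper quotes the uniform incomplete-gamma asymptotics from the DLMF ((8.11.6) for the lower tail and (8.11.12) for $Q(z,z)\to\tfrac12$), you prove what is needed from scratch: a geometric-series domination of the Poisson upper tail for $A(z)$, and the Ramanujan/CLT limit $e^{-z}\sum_{k\le z}z^k/k!\to\tfrac12$ for $B(z)$. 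This makes your argument self-contained and correctly isolates the one non-elementary ingredient (the $\tfrac12$), at the cost of giving only the order of magnitude of $A(z)$ rather than its precise constant $\frac{1}{1-\lambda}\frac{1}{\sqrt{2\pi z}}e^{-zc(\lambda)}$, which the paper needs later anyway for the finer asymptotics of $P(z,\kappa)$ and for the explicit comparison rank $z_0$; for the present statement your weaker bound $A(z)=O(e^{-zc(\lambda)}/\sqrt z)$ suffices.
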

We can check that $-\log s >c(\lambda)$ which means that $P_{SN}(z) \prec P(z)$ for $z$ large.

\pagebreak

\textbf{A finer risk analysis.}

\medskip

We analyze a new parameter in the risk of a double spend. The probability of success of the attackers increases with the 
time $\tau_1$ it takes to validate the $z$ transactions since they have more time to secretly mine their alternate blockchain. 
On the other hand the task of the attackers is more difficult if the validations happen faster than the expected time. 
The value of $\tau_1$ is known, therefore what is really relevant is the 
conditional probability assuming $\tau_1$ is known. We introduce the dimensionless parameter $\kappa$ which measures 
the deviation from average time:
$$
\kappa=\frac{\tau_1}{zt_0} \ ,
$$
where $t_0$ is the average time of block validation by honest miners ($t_0= \tau_0 /p$, where $\tau_0= 10 \, {\hbox{\rm min}}$ for the Bitcoin network).

\begin{center}
\includegraphics[scale=0.45]{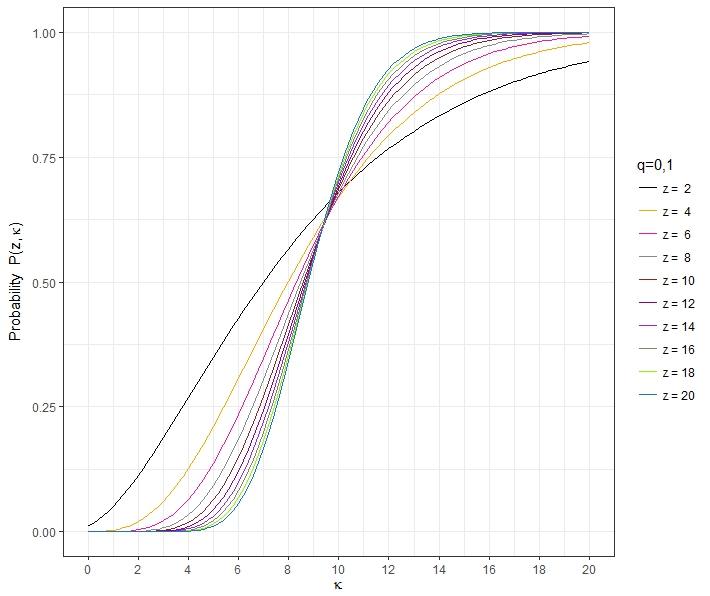} \\
{\footnotesize Figure 1. Probability of success as a function of $\kappa$ with $q=0.1$}
\end{center}


We study the probability $P(z, \kappa)$ of success of the attackers. We can recover the previous probabilities with the  
$P(z,\kappa )$, $0<\kappa <1$.

\begin{theorem*}
We have
$$
P_{SN}(z)=P(z,1) \ ,
$$ 
and 
$$
P(z)=\int_0^{+\infty } P(z, \kappa) \, d\rho_z(\kappa) \ ,
$$
with the density function
$$
d\rho_z(\kappa) = \frac{z^z}{(z-1)!}\kappa^{z-1} e^{-z\kappa}\, d\kappa \ .
$$ 
\end{theorem*}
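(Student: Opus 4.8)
\emph{Proof proposal.} The plan is to read $P(z,\kappa)$ as a probability conditioned on the validation time $\tau_1$, and to recover $P(z)$ by integrating against the law of $\tau_1$. First recall the meaning of $P(z,\kappa)$: it is the probability that the attackers succeed \emph{given} that the honest miners needed exactly the time $\tau_1 = z t_0 \kappa$ to validate the $z$ blocks. The attackers mine along their own Poisson process of intensity $q/\tau_0$, independent of the honest miners' Poisson process of intensity $p/\tau_0 = 1/t_0$; hence, conditionally on the honest history, the only relevant data are the pair $(z,\tau_1)$: the number of blocks the attackers have secretly found by time $\tau_1$ is Poisson of parameter $\tau_1\, q/\tau_0 = \lambda z \kappa$, and the ensuing race depends only on the resulting deficit. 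This is exactly the quantity $P(z,\kappa)$ computes. Nakamoto's simplifying hypothesis is precisely that the honest miners validate at the expected rate, i.e. $\tau_1 = z t_0$, equivalently $\kappa = 1$; therefore $P_{SN}(z) = P(z,1)$.

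Next I would identify the law of $\tau_1$. The inter-block validation times of the honest miners are i.i.d. exponential of mean $t_0$, so $\tau_1$, the epoch of the $z$-th validation, follows the Erlang (Gamma) law of shape $z$ and rate $1/t_0$, with density
$$
f_{\tau_1}(t) = \frac{t^{z-1} e^{-t/t_0}}{t_0^{z}\,(z-1)!}, \qquad t>0.
$$
Substituting $\kappa = \tau_1/(z t_0)$, that is $t = z t_0 \kappa$ and $dt = z t_0\, d\kappa$, the push-forward of $f_{\tau_1}$ is
$$
d\rho_z(\kappa) = \frac{(z t_0 \kappa)^{z-1} e^{-z\kappa}}{t_0^{z}\,(z-1)!}\; z t_0\, d\kappa = \frac{z^{z}}{(z-1)!}\,\kappa^{z-1} e^{-z\kappa}\, d\kappa,
$$
which is the announced density; it is the Gamma$(z,z)$ law, of total mass $\tfrac{z^z}{(z-1)!}\cdot\tfrac{\Gamma(z)}{z^z}=1$. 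Applying the law of total probability to the partition of the sample space according to the value of $\kappa$ then gives
$$
P(z) = \EE\bigl[\,P(z,\kappa)\,\bigr] = \int_0^{+\infty} P(z,\kappa)\, d\rho_z(\kappa).
$$

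The main obstacle is the first step, not the computation: one must justify rigorously that the attackers' success probability, conditioned on the \emph{entire} honest validation history, depends on that history only through the pair $(z,\tau_1)$, so that $P(z,\kappa)$ is legitimately the conditional probability $P(\text{success}\mid \tau_1 = z t_0\kappa)$. This follows from the independence of the two Poisson mining processes together with the strong Markov property of the biased random walk governing the race, but it should be stated carefully (in particular, that conditioning on $\tau_1$ does not bias the attackers' count beyond making it Poisson of parameter $\lambda z\kappa$). Once that reduction is secured, the remaining steps are the routine Gamma-distribution change of variables and the tower property displayed above.
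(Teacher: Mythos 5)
Your argument is correct and is essentially the paper's own (second, ``conceptual'') proof: the paper sets $\bdkappa=\frac{p}{z\tau_0}\bdS_z$, notes $\bdkappa\sim\Gamma(z,z)$ so that $d\rho_z$ is its law, and concludes $P(z)=\EE[P(z,\bdkappa)]$ by the tower property, exactly as you do. The reduction you flag as the main obstacle is closed in the paper without any appeal to the strong Markov property of the race: one starts from the already-established identity $P(z)=1-\sum_{k=0}^{z-1}(1-q_{z-k})\,\PP[\bdN'(\bdS_z)=k]$ and conditions each term on $\bdS_z$, using only the independence of $\bdN'$ and $\bdS_z$ to get $\PP[\bdN'(\bdS_z)=k\mid\bdS_z]=\frac{(\alpha'\bdS_z)^k}{k!}e^{-\alpha'\bdS_z}$, after which the sum reassembles into $P(z,\bdkappa)$ term by term.
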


We give a closed-form formula for $P(z,\kappa)$. 

\begin{theorem*}
We have
$$
P(z, \kappa)=1-Q(z, \kappa z q/p) + \left (\frac{q}{p}\right )^{z} e^{\kappa z\frac{p-q}{p}} Q(z, \kappa z) \ .
$$
\end{theorem*}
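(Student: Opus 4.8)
The quantity $P(z,\kappa)$ is the probability that the attackers, starting $z$ blocks behind, eventually catch up, conditioned on the honest miners having taken total time $\tau_1 = \kappa z t_0$ to validate their $z$ blocks. The plan is to condition on the number $N$ of blocks the attackers find during this time window $[0,\tau_1]$, and then apply the classical gambler's-ruin argument for the remainder of the race. The attackers' block discovery is a Poisson process with rate $\lambda_q = q/\tau_0$ (since the honest rate is $p/\tau_0$ and hashrates are proportional), so during time $\tau_1$ the number of attacker blocks $N$ is Poisson with parameter $\mu := \lambda_q \tau_1 = q\tau_1/\tau_0 = \kappa z q/p$. Once the race proper begins, the attackers are $z - N$ blocks behind (or already ahead if $N > z$), and by the standard random-walk result the probability that a group with relative hash power $q < 1/2$ ever makes up a deficit of $k \geq 0$ blocks is $(q/p)^k$, while if they are already ahead they succeed with probability $1$.

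**The summation.** Writing $r = q/p$, this gives
$$
P(z,\kappa) = \sum_{n=0}^{\infty} e^{-\mu}\frac{\mu^n}{n!}\, \min\!\bigl(1, r^{\,z-n}\bigr)
= \sum_{n=0}^{z-1} e^{-\mu}\frac{\mu^n}{n!}\, r^{\,z-n} \;+\; \sum_{n=z}^{\infty} e^{-\mu}\frac{\mu^n}{n!}\ ,
$$
with $\mu = \kappa z q/p$. The second sum is $1 - Q(z,\mu)$ where $Q(z,x) = \sum_{n=0}^{z-1} e^{-x} x^n/n!$ is the (normalized) upper incomplete-gamma / Poisson tail, so that term already matches $1 - Q(z,\kappa z q/p)$ in the statement. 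For the first sum, I pull out $r^z$ and write $r^{z}\sum_{n=0}^{z-1} e^{-\mu}\mu^n/(n! r^n) = r^z e^{-\mu}\sum_{n=0}^{z-1}(\mu/r)^n/n!$; since $\mu/r = \kappa z q/p \cdot p/q = \kappa z$, this becomes $r^z e^{-\mu} e^{\kappa z} Q(z,\kappa z) = r^z e^{\kappa z - \mu} Q(z,\kappa z)$. Finally $\kappa z - \mu = \kappa z - \kappa z q/p = \kappa z (p-q)/p$, yielding exactly $(q/p)^z e^{\kappa z (p-q)/p} Q(z,\kappa z)$, and the two pieces combine to the claimed formula.

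**Where the care is needed.** The routine algebra above is not the real content; the substantive point is justifying the two probabilistic inputs. First, that conditional on $\tau_1$ the attacker count $N$ is genuinely Poisson$(\kappa z q/p)$ and independent of the subsequent race — this uses the independence of the honest and attacker Poisson processes and the strong Markov property, so that conditioning on the honest process's completion time does not distort the attacker process on $[0,\tau_1]$, and the post-$\tau_1$ dynamics depend only on the net lead $z - N$. Second, the gambler's-ruin identity $\mathbb{P}(\text{deficit } k \text{ ever erased}) = (q/p)^k$ for $q<1/2$: this is the classical biased random walk / branching result, and it is precisely the ingredient already implicit in Nakamoto's original heuristic. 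I expect the main obstacle to be stating the conditioning cleanly — in particular checking that "the honest miners have validated exactly $z$ blocks by time $\tau_1$, and $\tau_1$ is the validation time of the $z$-th block" is the correct event to condition on, and that it makes $N \sim \mathrm{Poisson}(\lambda_q \tau_1)$ with $\lambda_q = q/\tau_0$. Once that is pinned down, everything else is the bookkeeping shown above, and one should also record the boundary check $P(z,1) = P_{SN}(z)$ and consistency with $P(z) = \int_0^\infty P(z,\kappa)\,d\rho_z(\kappa)$ from the previous theorem as a sanity test.
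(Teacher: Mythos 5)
Your proposal is correct and follows essentially the same route as the paper: condition on the attackers' block count during $\tau_1$ being Poisson with parameter $\alpha'\tau_1=\kappa zq/p$, weight each term by $\min(1,(q/p)^{z-k})$ from the gambler's-ruin formula, and rewrite the two partial sums via $Q(z,\cdot)$ exactly as you do. The only difference is that you spell out the probabilistic justification of the conditioning more explicitly than the paper, which simply asserts the Poisson law for the attacker count during $\tau_1$.
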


Here $Q$ denotes the incomplete gamma function
$$
Q(s,x)=\frac{\Gamma(s,x)}{\Gamma(x)} \ ,
$$
where 
$$
\Gamma(s,x)=\int_x^{+\infty} t^{s-1}e^{-t} \, dt 
$$

We find also the asymptotics for $z\to +\infty$ for different values of $\kappa$. 

\begin{theorem*} 
The following hold for $z\to +\infty$,
\begin{enumerate}
  \item For $0<\kappa<1$, 
  $$
  P(z,\kappa) \sim \frac{1}{1-\kappa \lambda} \frac{1}{\sqrt{2\pi z}}
  e^{-zc(\kappa \lambda)} \ .
  $$
  \item For $\kappa=1$, 
  $$
  P(z,1)=P_{SN}(z)\sim \frac{1}{2} e^{-zc(\lambda)} \ .
  $$
  \item For $1<\kappa<p/q$, 
  $$
  P(z,\kappa)\sim \frac{\kappa (1-\lambda)}{(\kappa -1)(1-\kappa \lambda)} 
  \frac{1}{\sqrt{2\pi z}} e^{-zc(\kappa \lambda)} \ .
  $$
  \item For $\kappa=p/q$, $P(z, p/q)\to 1/2$ and 
  $$
  P(z, p/q)-1/2 \sim \frac{1}{2\pi z} \left (\frac13+\frac{q}{p-q}\right ) \ .
  $$
  \item For $p/q<\kappa$, $P(z,\kappa) \to 1$ and
  $$
  1-P(z,\kappa)\sim \frac{\kappa (1-\lambda)}{(\kappa -1)(\kappa \lambda-1)} 
  \frac{1}{\sqrt{2\pi z}} e^{-zc(\kappa \lambda)} \ .
  $$
\end{enumerate}
\end{theorem*}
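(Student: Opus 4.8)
The plan is to read the asymptotics directly off the closed form of the previous theorem. Writing $q/p=\lambda$ and $\tfrac{p-q}{p}=1-\lambda$, that formula is
$$
P(z,\kappa)=\bigl(1-Q(z,\kappa\lambda z)\bigr)+\lambda^{z}\,e^{\kappa z(1-\lambda)}\,Q(z,\kappa z),
$$
and the idea is to feed both occurrences of the regularized incomplete Gamma function $Q(a,x)=\Gamma(a,x)/\Gamma(a)=\Pr[\mathrm{Poisson}(x)\le a-1]$ into one uniform estimate. The single analytic input I would quote is the endpoint--Laplace expansion of $Q(a,\mu a)$ for fixed $\mu>0$ as $a\to+\infty$: from the integral representations $Q(a,\mu a)=\frac{a^{a}}{\Gamma(a)}\int_{\mu}^{\infty}e^{a(\log u-u)}\,\frac{du}{u}$ and $1-Q(a,\mu a)=\frac{a^{a}}{\Gamma(a)}\int_{0}^{\mu}e^{a(\log u-u)}\,\frac{du}{u}$, Laplace's method at the endpoint $u=\mu$ --- which is a maximum of $\log u-u$ on the relevant interval precisely when $\mu\neq1$ --- together with Stirling gives
$$
1-Q(a,\mu a)\sim\frac{1}{(1-\mu)\sqrt{2\pi a}}\,e^{-a\,c(\mu)}\quad(\mu<1),\qquad
Q(a,\mu a)\sim\frac{1}{(\mu-1)\sqrt{2\pi a}}\,e^{-a\,c(\mu)}\quad(\mu>1),
$$
with $c(\mu)=\mu-1-\log\mu\ge0$, $c$ strictly decreasing on $(0,1)$, strictly increasing on $(1,\infty)$, and $c(1)=0$. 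At the boundary ratio $\mu=1$ the endpoint $u=\mu$ is a stationary point of $\log u-u$, the estimate degenerates, and it must be replaced by the classical (Ramanujan) expansion $Q(a,a)=\tfrac12-\frac{1}{3\sqrt{2\pi a}}+O(1/a)$.

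Next I substitute $\mu=\kappa\lambda$ in the first term of $P(z,\kappa)$ and $\mu=\kappa$ in the second. Since $\lambda<1<1/\lambda=p/q$, the sign of $\kappa-1$ controls the behaviour of $Q(z,\kappa z)$ and the sign of $\kappa\lambda-1$ that of $1-Q(z,\kappa\lambda z)$; together these produce exactly the five regimes $0<\kappa<1$, $\kappa=1$, $1<\kappa<p/q$, $\kappa=p/q$, $\kappa>p/q$ of the statement. The two terms are tied together by the elementary identity
$$
\log\lambda+\kappa(1-\lambda)-c(\kappa)=1-\kappa\lambda+\log(\kappa\lambda)=-\,c(\kappa\lambda),
\qquad\text{i.e.}\qquad
\lambda^{z}e^{\kappa z(1-\lambda)}\,e^{-z c(\kappa)}=e^{-z c(\kappa\lambda)} ,
$$
whose point is that whenever $\kappa>1$ --- so that $Q(z,\kappa z)$ carries the factor $e^{-zc(\kappa)}$ --- the second term of $P(z,\kappa)$ again has exponential rate $c(\kappa\lambda)$, the same as the first, and the two leading constants simply add. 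The partial fractions
$$
\frac{1}{1-\kappa\lambda}+\frac{1}{\kappa-1}=\frac{\kappa(1-\lambda)}{(\kappa-1)(1-\kappa\lambda)},\qquad
\frac{1}{\kappa\lambda-1}-\frac{1}{\kappa-1}=\frac{\kappa(1-\lambda)}{(\kappa-1)(\kappa\lambda-1)}
$$
then give cases (3) and (5) (for (5), where $\kappa\lambda>1$, one works with $1-P(z,\kappa)=Q(z,\kappa\lambda z)-\lambda^{z}e^{\kappa z(1-\lambda)}Q(z,\kappa z)$, both of whose pieces are $O(z^{-1/2}e^{-zc(\kappa\lambda)})$). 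Case (2) is the degenerate value $\kappa=1$: there $Q(z,z)\to\tfrac12$ contributes a full constant $\tfrac12 e^{-zc(\lambda)}$ that dominates the $O(z^{-1/2})$ contribution of $1-Q(z,\lambda z)$, which in particular recovers $P(z,1)=P_{SN}(z)\sim\tfrac12 e^{-zc(\lambda)}$. Finally, for $0<\kappa<1$ one has $Q(z,\kappa z)\to1$, so the second term sits at order $\lambda^{z}e^{\kappa z(1-\lambda)}=e^{-z(c(\kappa\lambda)-c(\kappa))}$, to be weighed against the first term's order $z^{-1/2}e^{-zc(\kappa\lambda)}$; the dominant contribution with its leading constant then gives case (1).

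The one genuinely delicate point --- and the only place the bare leading-order estimate is not enough --- is the pair of boundary values $\kappa=1$ and $\kappa=p/q$, at which the leading asymptotic \emph{constant} of $P(z,\kappa)$ jumps discontinuously in $\kappa$, so that the naive substitution loses the true leading term. At $\kappa=p/q$ we have $\kappa\lambda=1$: the first term is $1-Q(z,z)=\tfrac12+\frac{1}{3\sqrt{2\pi z}}+O(1/z)$ by the Ramanujan expansion, while the exponential rate $c(\kappa\lambda)=c(1)$ of the second term degenerates to $0$, leaving it at order $\frac{1}{p/q-1}\cdot\frac{1}{\sqrt{2\pi z}}=\frac{q}{p-q}\cdot\frac{1}{\sqrt{2\pi z}}$; one must expand both pieces to this next order before adding, and that is exactly where the constant $\tfrac13+\frac{q}{p-q}$ of (4) comes from. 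In summary the steps are, in order: (i) record the endpoint--Laplace estimate of $Q(a,\mu a)$ for $\mu\neq1$ and the Ramanujan expansion at $\mu=1$; (ii) substitute $\mu=\kappa\lambda$ and $\mu=\kappa$ and invoke the gluing identity; (iii) in each of the five regimes identify the dominant of the two terms, carrying one extra order exactly when $\kappa\in\{1,\,p/q\}$; (iv) collect the leading constants by partial fractions. Steps (iii)--(iv) at $\kappa=1$ and $\kappa=p/q$ are the crux; everything else is bookkeeping controlled by the identity above.
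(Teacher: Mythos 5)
Your overall route is the same as the paper's: substitute the endpoint--Laplace/DLMF asymptotics of $Q(z,\mu z)$ (for $\mu<1$, $\mu=1$, $\mu>1$) into the closed form $P(z,\kappa)=1-Q(z,\kappa\lambda z)+\lambda^{z}e^{\kappa z(1-\lambda)}Q(z,\kappa z)$, use the identity $\lambda^{z}e^{\kappa z(1-\lambda)}e^{-zc(\kappa)}=e^{-zc(\kappa\lambda)}$ to put both terms on the same exponential scale, and collect the constants by partial fractions. Cases (2)--(5) come out correctly this way; in (4) your computation in fact yields $P(z,p/q)-\tfrac12\sim\frac{1}{\sqrt{2\pi z}}\left(\tfrac13+\tfrac{q}{p-q}\right)$, which is the normalization the paper's own proof derives (the $\frac{1}{2\pi z}$ in the displayed statement is a typo).

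The gap is in case (1), and it is not small. You correctly record that for $0<\kappa<1$ the second term has order $\lambda^{z}e^{\kappa z(1-\lambda)}=e^{-z(c(\kappa\lambda)-c(\kappa))}$ while the first has order $z^{-1/2}e^{-zc(\kappa\lambda)}$, and then assert that ``the dominant contribution gives case (1)''. But $c(\kappa)>0$ for $\kappa\neq1$, so the ratio of the second term to the first is of order $\sqrt{z}\,e^{zc(\kappa)}\to+\infty$: it is the \emph{second} term that dominates, and your own bookkeeping therefore yields $P(z,\kappa)\sim\lambda^{z}e^{\kappa z(1-\lambda)}$ rather than the stated $\frac{1}{1-\kappa\lambda}\frac{1}{\sqrt{2\pi z}}e^{-zc(\kappa\lambda)}$. (Sanity check with $q=0.1$, $\kappa=0.5$, $z=20$: the first term is about $1.2\cdot10^{-18}$ while $\lambda^{z}e^{\kappa z(1-\lambda)}Q(z,\kappa z)$ is about $6\cdot10^{-16}$.) You must either explain why the second term is negligible --- it is not --- or conclude that the asymptotic of case (1) has to be corrected. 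For what it is worth, the paper's proof of (1) stumbles at exactly this spot: it reduces the ratio of the two terms to $(1-\kappa\lambda)\sqrt{2\pi z}\,e^{-z(1-\kappa-\log\kappa)}$ and declares it $o(1)$, whereas the exponent is actually $-z(1-\kappa+\log\kappa)=+zc(\kappa)$; a sign slip on $\log\kappa$ converts a divergent factor into a vanishing one. So your proposal reproduces the paper's method faithfully, but at the one point where the method demands an honest comparison of the two exponential rates, the comparison is left undone, and doing it overturns the claimed conclusion.
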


Using a concavity argument we show that 
$P(1) \leq P_{SN}(1)$, but in general, for $z\geq z_0$, we have $P_{SN}(z)\leq P(z)$. We do compute an explicit, non sharp, 
value $z_0$ for which this inequality holds:

\begin{theorem*}
  Let $z \in \NN$. A sufficient condition for having
  $P_{SN} (z) < P ( z)$ is $z \geq z_0$ with $z_0=\lceil z_0^*\rceil$ being the smallest integer greater or equal
  to
$$ 
 z_0^* = \max \left( \frac{2}{\pi ( 1 -
     q/p )^2}\, , \, \frac{1}{2 \sqrt{2}} - \frac{\left( 1 +
     \frac{1}{\sqrt{2}} \right)}{2}  \frac{\log \left( \frac{2 \psi (
     p)}{\pi}  \right)}{\psi ( p)} \right) 
$$
  where $\psi ( p) = \frac{q}{p} - 1 - \log \left(  \frac{q}{p}
  \right) - \log \left( \frac{1}{4 pq} \right) > 0$.
\end{theorem*}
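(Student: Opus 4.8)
The plan is a \emph{sandwich}: bound $P_{SN}(z)$ from above and $P(z)$ from below by explicit elementary expressions whose ratio is of order $\sqrt z\,e^{-z\psi(p)}$ — where $\psi(p)=c(\lambda)+\log(4pq)>0$ is as in the statement, with $\lambda=q/p$ and $c(\lambda)=\lambda-1-\log\lambda$ — and then turn the comparison into a transcendental inequality of the shape $A\sqrt z<e^{z\psi(p)}$, which is solved for $z$ in closed form. Heuristically this is the quantitative form of the fact that $\kappa\mapsto P(z,\kappa)$ becomes convex near $\kappa=1$ for large $z$, so the average $P(z)=\int P(z,\kappa)\,d\rho_z(\kappa)$ overshoots $P_{SN}(z)=P(z,1)$ — the reverse of the situation at $z=1$, where a concavity argument gave $P(1)\le P_{SN}(1)$.

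\emph{Upper bound on $P_{SN}(z)$.} Start from $P_{SN}(z)=P(z,1)=1-Q(z,z\lambda)+e^{-zc(\lambda)}\,Q(z,z)$, using $\lambda^{z}e^{z(1-\lambda)}=e^{-zc(\lambda)}$. For integral $z$ one has $1-Q(z,z\lambda)=\PP\bigl(\mathrm{Poisson}(z\lambda)\ge z\bigr)$, which a Chernoff estimate bounds by $e^{-zc(\lambda)}$; and $Q(z,z)=\PP\bigl(\mathrm{Poisson}(z)\le z-1\bigr)<\tfrac12$ by Ramanujan's inequality $\sum_{k=0}^{z-1}z^{k}/k!<e^{z}/2$ (equivalently, $z$ is the median of $\mathrm{Poisson}(z)$). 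Hence $P_{SN}(z)\le\tfrac32\,e^{-zc(\lambda)}$; the constant can be pushed towards the asymptotic $\tfrac12$ with a geometric‑ratio tail bound plus Stirling, but this is not needed.

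\emph{Lower bound on $P(z)$.} From the integral representation
$$
P(z)=I_{4pq}(z,1/2)=\frac{\Gamma(z+1/2)}{\Gamma(z)\,\Gamma(1/2)}\int_{0}^{s}t^{z-1}(1-t)^{-1/2}\,dt,\qquad s=4pq,
$$
the bound $(1-t)^{-1/2}\ge 1$ on $[0,s]$ gives $\int_0^s t^{z-1}(1-t)^{-1/2}\,dt\ge s^{z}/z$, and the Gautschi--Wendel inequality gives $\Gamma(z+1/2)/\Gamma(z)\ge\sqrt{z-1/2}$; so $P(z)\ge\dfrac{\sqrt{z-1/2}}{\sqrt\pi\,z}\,s^{z}$. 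If instead one wants the sharp prefactor $1/\sqrt{\pi(1-s)}$ of the asymptotics, one refines $(1-t)^{-1/2}\ge(1-s)^{-1/2}\bigl(1-\tfrac{s-t}{2(1-s)}\bigr)$, which is only legitimate once $z$ is past a quantity of order $1/(1-s)=1/(1-4pq)$; this is (up to constants) the source of the first entry $\tfrac{2}{\pi(1-q/p)^{2}}=\tfrac{2p^{2}}{\pi(1-4pq)}$ of the maximum defining $z_0^{*}$.

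\emph{Comparison and the main obstacle.} Combining, $P_{SN}(z)<P(z)$ follows once $\tfrac32 e^{-zc(\lambda)}<\frac{\sqrt{z-1/2}}{\sqrt\pi\,z}\,s^{z}$, i.e., after using $z/\sqrt{z-1/2}\le\sqrt{z+1}\le\sqrt{2z}$ for $z\ge1$, once $\frac{3\sqrt{2\pi}}{2}\sqrt z<e^{z\psi(p)}$. Taking logarithms and linearising $\tfrac12\log z$ through $\log z\le z/t-1+\log t$ with the choice $t=a/\psi(p)$, $a=\tfrac12+\tfrac1{\sqrt2}$ (the value making the coefficient of $-\psi(p)^{-1}\log\psi(p)$ equal to $\tfrac12(1+\tfrac1{\sqrt2})$ while keeping the residual linear‑in‑$z$ coefficient positive), one isolates $z$ and, collecting constants, reaches the second entry of the maximum; the inequality then holds for every $z\ge z_0=\lceil z_0^{*}\rceil$. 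The main obstacle is precisely this final calibration: the chain of elementary inequalities (the bound on $z/\sqrt{z-1/2}$, the linearisation of $\log z$, the constant $a$, and the choice between the crude and the refined lower bound for $P(z)$) must be arranged so that the closed form obtained is at once clean and genuinely \emph{sufficient} — not so lossy that the stated $z_0$ fails to dominate the true crossover — and in particular the constant in the $P_{SN}$‑bound must be kept strictly below the $1/\sqrt\pi$‑type constant in the $P$‑bound, which is exactly what Ramanujan's inequality buys.
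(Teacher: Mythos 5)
Your overall strategy is the same sandwich the paper uses (an explicit upper bound on $P_{SN}(z)$ of size $e^{-zc(\lambda)}$, a lower bound on $P(z)$ of size $s^z/\sqrt{\pi z}$ via the Beta integral and Gautschi's inequality, then a closed-form solution of $A\sqrt{z}<e^{z\psi(p)}$), and your individual inequalities are all correct. But the theorem asserts a \emph{specific} $z_0^*$, and your chain is too lossy to reach it. With the Chernoff bound $1-Q(z,z\lambda)\leq e^{-zc(\lambda)}$ you only get $P_{SN}(z)<\tfrac32 e^{-zc(\lambda)}$, so your comparison becomes $\tfrac{3\sqrt{2\pi}}{2}\sqrt{z}<e^{z\psi(p)}$, whereas the stated formula (with its $\log(2\psi(p)/\pi)$, i.e. $\mu=1/\sqrt{\pi}$) is calibrated to the strictly easier inequality $\sqrt{\pi(z+\tfrac12)}<e^{z\psi(p)}$. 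Solving your version by the same linearisation produces a threshold that exceeds the stated $z_0^*$ by roughly $\log(3\sqrt{2}/2)/\bigl((2-\sqrt2)\psi(p)\bigr)$, which is unbounded as $q\to 1/2$; so for $z$ between the stated $z_0^*$ and your threshold your argument concludes nothing, and the theorem as stated is not proved. The missing ingredient is the sharper bound $1-Q(z,\lambda z)<\frac{1}{1-\lambda}\frac{1}{\sqrt{2\pi z}}e^{-zc(\lambda)}$ (obtained in the paper from the series $\gamma(a,x)=e^{-x}x^a\sum_n \Gamma(a)x^n/\Gamma(a+n+1)$ plus $1/\Gamma(z)<e^z/(\sqrt{2\pi z}\,z^{z-1})$): the extra $1/\sqrt{z}$ lets the first entry of the max, $z\geq 2/(\pi(1-q/p)^2)$, force $\frac{1}{1-q/p}\frac{1}{\sqrt{2\pi z}}\leq\frac12$, whence $P_{SN}(z)<e^{-zc(\lambda)}$ with constant $1$ rather than $3/2$, and the second entry of the max then drops out exactly.

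Relatedly, you misattribute the first entry of the maximum. It does not come from refining $(1-t)^{-1/2}$ toward the sharp prefactor $(1-s)^{-1/2}$ in the Beta integral — the paper deliberately uses the crude bound $(1-t)^{-1/2}\geq 1$, giving $P(z)\geq s^z/\sqrt{\pi(z+\tfrac12)}$, and never needs the refined prefactor for the lower bound. The term $\frac{2}{\pi(1-q/p)^2}$ arises solely from absorbing the $\frac{1}{1-q/p}\frac{1}{\sqrt{2\pi z}}e^{-zc(\lambda)}$ term of the $P_{SN}$ upper bound into $\tfrac12 e^{-zc(\lambda)}$. Your final calibration step is also only sketched (the choice $t=a/\psi(p)$ reproduces the coefficient of $-\psi(p)^{-1}\log\psi(p)$ but not the constant $\tfrac{1}{2\sqrt2}$ nor the argument $2\psi(p)/\pi$ of the logarithm); the paper instead packages this step as the elementary lemma that $e^x>\alpha x$ for $x>(1+\tfrac{1}{\sqrt2})\log\alpha$, applied to $x=2\psi(p)(z+\tfrac12)$.
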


We also provide a double entry tables of the $P(z,\kappa )$ 
for different values of $(q,\kappa )$ for $z=3$ and $z=6$.  For a complete set of tables for $z=1,2,\ldots 9$ of practical use we refer to the companion article \cite{GPM}.

\section{Mathematics of mining.}

We review some basic results in probability (see \cite{F} vol.2, p.8) 
and of Bitcoin mining (see \cite{PM} for an overview of Bitcoin protocol).

A hashing algorithm digest any file into a fixed length string of bits. 
The slightiest modification of the original file produces a completely different output.
The bits of the output appear with a random frequency and it is computationnally hard to find 
collisions (different inputs yielding the same output).
Hashing algorithms are used for example to check the integrity and non-tampering of files. 

The two main hashing algorithms used in the Bitcoin protocol are RIPEMD-160 and SHA-256 that produce 
outputs of $160$ and $256$ bits respectively. The mining algorithm 
consists in performing the double SHA-256 of the block header (doubled to prevent ``padding attacks''). 

The consensus protocol and security in the Bitcoin network relies on the process of bitcoin mining and validating transactions.
It consists on the iteration of computation of block header hashes changing a nonce 
\footnote{More precisely, a double hash SHA256(SHA256(header)) is computed, changing a nonce and an extra-nonce.} in order to find a 
hash below a predefined threshold, the \textit{difficulty} \cite {N}. For each new hash 
the work is started from scratch, therefore 
the random variable $\boldsymbol T$ measuring the time it takes to mine a block is memoryless, which means that for 
any $t_1, t_2 >0$
$$
\PP [\boldsymbol T > t_1+t_2 | T > t_2] = \PP [\boldsymbol T > t_1] \ .
$$
Therefore we have 
\begin{equation*}
\PP [\boldsymbol T > t_1+t_2]=\PP [\boldsymbol T > t_1+t_2 | T > t_2] . \PP [\boldsymbol T > t_2]=
\PP [\boldsymbol T > t_1] . \PP [\boldsymbol T > t_2] \ .
\end{equation*}
This equation and a continuity argument determines the exponential function and implies that $\boldsymbol T$ 
is an exponentially distributed random variable:
$$
f_{\boldsymbol T} (t) =\a e^{-\a t} \,
$$
for some parameter $\a >0$, the mining speed, with $t_0=1/\a = \EE [ \boldsymbol T]$. 

\medskip

If $(\bdT_1, \ldots , \bdT_n)$ is a sequence of independent identically distributed exponential random variables 
(for example $\bdT_k$ is the mining time of the $k$-th block), then the sum
$$
\bdS_n=\bdT_1+\ldots +\bdT_n
$$
is a random variable following a gamma density with parameters $(n,\alpha )$ (obtained by convolution of the 
exponential density):
$$
f_{\bdS_n} (t)= \frac{\a^n}{(n-1)!} t^{n-1}e^{-\a t} \ ,
$$
and cumulative distribution
$$
F_{\bdS_n}(t)=\int_0^t f_{\bdS_n}(u) du= 1-e^{-\a t} \sum_{k=0}^{n-1} \frac{(\a t)^k}{k!} \ .
$$

\medskip

We define the random process $\bdN(t)$ as the number of mined blocks at time $t$. Setting $S_0=0$, we have 
$$
\bdN(t)=\# \{ k\geq 1; \bdS_k \leq t\} =\max\{n\geq 0; \bdS_n <t\} \ .
$$

Since $\bdN(t)=n$ is equivalent to $\bdS_n\leq  t$ and $\bdS_{n+1} >t$ we get  
$$
\PP[\bdN(t)=n]=F_{\bdS_n}(t)-F_{\bdS_{n+1}}(t)=\frac{(\a t)^n}{n!}\ e^{-\a t} \  ,
$$
which means that $\bdN(t)$ has a Poisson distribution with expectation $\a t$.

\section{Mining race.}

We consider the situation described in section 11 of \cite {N} where a group of attacker miners 
attempts a double spend attack. The attacker group has a fraction $0<q<1/2$
of the total hash rate, and the rest, the honest miners, has a fraction $p=1-q$. Thus 
the probability that the attackers find the next block is $q$ while the probability for the honest miners is $p$.
Nakamoto computes the probability for the attackers to catch up when $z$ blocks have been mined by 
the honest group. In general to replace the chain mined by the honest 
miners and succeed a double spend the attackers need to mine $z+1$ blocks, i.e. to mine a longer chain. 
In the analysis it is assumed that we are not near an update of the difficulty which remains constant
\footnote{The difficulty is adjusted every 2016 blocks.}.

\medskip

The first discussion in section 11 of \cite {N} is about computing the probability $q_z$ of the attacker catching up
when they lag by $z$ blocks behind the honest miners. The analysis is correct and is similar to the 
Gamblers Ruin problem. We review this.

\medskip

\begin{lemma}
 Let $q_n$ be the probability of the event $E_n$, ``catching up from $n$ blocks behind''. We have
 $$
 q_n =(q/p)^n \ .
 $$
\end{lemma}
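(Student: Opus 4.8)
The plan is to model the race as a homogeneous nearest‑neighbour random walk, use the strong Markov property to reduce the whole family $(q_n)$ to the single number $q_1$, compute $q_1$ by a one‑step conditioning that produces a quadratic equation, and finally select the correct root by a transience (drift) argument.

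First I would set up the model. After $n$ blocks have been validated by the honest group, record the attackers' lag by a process with $X_0 = n$, and $X_{k+1} = X_k - 1$ with probability $q$ (the attackers find the next block), $X_{k+1} = X_k + 1$ with probability $p$ (the honest miners find it), the steps being independent. The event $E_n$ of ``catching up from $n$ behind'' is exactly $\{X_k = 0 \text{ for some } k \geq 0\}$. Because the walk is spatially homogeneous and its increments are i.i.d., to reach lag $0$ from lag $n$ the walk must first reach lag $n-1$, then lag $n-2$, and so on down to $0$; by the strong Markov property these successive first passages are independent and each has probability $q_1$. Hence $q_n = q_1^{\,n}$, and it remains only to identify $q_1$.

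Next I would compute $q_1$ by conditioning on who finds the first block: with probability $q$ the attackers catch up at once, and with probability $p$ the lag becomes $2$, from which the catch‑up probability is $q_2 = q_1^2$. This gives $q_1 = q + p\,q_1^2$, i.e. $p\,q_1^2 - q_1 + q = 0$. Since $1 - 4pq = (p-q)^2$, the two roots are $q_1 = 1$ and $q_1 = q/p$.

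The main obstacle is to decide which root is the probability we want. Since $0 < q < 1/2 < p$, the walk has strictly positive drift $p-q$, so by the strong law of large numbers $X_k/k \to p-q$ and $X_k \to +\infty$ almost surely; in particular, starting from lag $1$ the walk hits $0$ with probability strictly less than $1$ (if that probability were $1$, homogeneity together with the strong Markov property would force $0$ to be visited infinitely often from every starting point, contradicting $X_k \to +\infty$). Therefore $q_1 = q/p$ and $q_n = (q/p)^n$. Equivalently, and more self‑contained, one may observe that $M_k = (q/p)^{X_k}$ is a martingale bounded by $1$ (as $X_k \geq 0$ and $0 < q/p < 1$) and apply optional stopping at the first hitting time $\tau$ of $0$: using $M_k \to 0$ on $\{\tau = +\infty\}$ one gets $(q/p)^n = M_0 = \PP[\tau < +\infty] = q_n$ directly, bypassing the quadratic. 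The algebraic identity $1 - 4pq = (p-q)^2$ and the factorization $q_n = q_1^n$ are routine; the only point requiring care is this root selection / transience step.
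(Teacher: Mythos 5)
Your proposal is correct, and its first half is exactly the paper's argument: the identity $q_{n+m}=q_n q_m$ obtained from the (strong) Markov property and spatial homogeneity, giving $q_n=q_1^n$. Where you genuinely add something is in the determination of $q_1$. The paper simply asserts $q_1=q/p$ inside the proof, and only remarks afterwards that the recurrence $q_n=q\,q_{n-1}+p\,q_{n+1}$ with $q_0=1$ and $q_n\to 0$ forces $q_n=(q/p)^n$ --- a boundary condition that itself requires the transience of the walk, which the paper does not justify. You supply precisely this missing step: the first-step equation $q_1=q+p\,q_1^2$ with roots $1$ and $q/p$, followed by the root selection via positive drift ($X_k\to+\infty$ a.s. by the law of large numbers, so the level-crossing probability cannot be $1$), or alternatively the optional-stopping argument for the bounded martingale $M_k=(q/p)^{X_{k\wedge\tau}}$, which yields $q_n=(q/p)^n$ in one stroke without the quadratic. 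The only cosmetic imprecision is that the boundedness $0\le M_k\le 1$ holds for the process stopped at the first hitting time of $0$ (the unstopped lag can go negative), but that is how optional stopping is applied anyway, so the argument stands. In short: same decomposition as the paper, plus a complete and correct justification of the value of $q_1$ that the paper leaves implicit.
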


\begin{proof}

Note that after one more block has been mined, we have for $n\geq 1$,
$$
q_n= q q_{n-1}+p q_{n+1} \ ,
$$
and the only solution to this recurrence with $q_0=1$ and $q_n \to 0$ is $q_n =(q/p)^n$ (see \cite{F}).

%
%
%
%
\end{proof}

\medskip

We consider the random variables $\bdT$ and $\bdS_n$, resp. $\bdT'$ and $\bdS'_n$, associated 
to the group of honest, resp. attacker, miners.
And also consider the random Poisson process $\bdN (t)$, resp. $\bdN'(t)$. The random variables 
$\bdT$ and $\bdT'$ are clearly 
independent and have exponential distributions with parameters $\a$ and $\a'$. We have
$$
\PP[ \bdT' < \bdT] =\frac{\a'}{\a+\a'} \ ,
$$
so
\begin{align*}
 p&=\frac{\a}{\a+\a'} \ ,\\
 q&=\frac{\a'}{\a+\a'}\ .
\end{align*}
 
\medskip

Moreover, $\inf (\bdT, \bdT')$ is an exponentially distributed random variable with parameters $\alpha + \alpha'$
which represents the mining speed of the entire network, honest and attacker miners together. The Bitcoin 
protocol is calibrated such that $\alpha+\alpha'=\tau_0$ with $\tau_0 = 10 \, {\hbox{\rm min}}$. So we 
have 
\begin{align*}
 \EE[ \bdT]&=\frac1\alpha=\frac{\tau_0}{p} \ ,\\
 \EE[ \bdT']&=\frac{1}{\alpha'}=\frac{\tau_0}{q} \ .
\end{align*}

These results can also be obtained in the following way.
The hash function used in bitcoin block validation is $h(x)=\hbox{\rm {SHA256}}(\hbox{\rm {SHA256}}(x))$. 
The hashrate is the number of hashes per second 
performed by the miners. At a stable hashrate regime, the average time it takes to validate a block by the network 
is $\tau_0=10 \ {\hbox{\rm min}}$. If the difficulty is set to be $d \in (0,2^{256}-1]$, we validate a block 
when $h(BH)<d$, where $BH$ is the block header. 
The pseudo-random output of $\hbox{\rm {SHA256}}$ shows that we need to compute an average 
number of $m=2^{256}/d$ hashes to find a solution. Let 
$h$, resp. $h'$, be the hashrates of the honest miners, resp. the attackers. 
The total hashrate of the network is $h+h'$, and 
we have
\begin{align*}
p &=\frac{h}{h+h'} \ ,\\
q &=\frac{h'}{h+h'} \ .
\end{align*}

Let $t_0$, resp. $t'_0$, be the average time it 
takes to validate a block by the honest miners, resp. the attackers. We have 
\begin{align*} 
(h+h') \, \tau_0 &= m \ , \\
h \, t_0 &= m \ , \\
h' \, t'_0 &= m \ ,
\end{align*}
and from this we get that $\tau_0$ is half the harmonic mean of $t_0$ and $t'_0$,
$$
\tau_0=\frac{t_0 t'_0}{t_0+t'_0} \ ,
$$
and also
\begin{align*} 
 p &=\frac{t'_0}{t_0+t'_0}=\frac{\tau_0}{t_0}  \ ,\\
 q &=\frac{t_0}{t_0+t'_0}=\frac{\tau_0}{t'_0}  \ .
\end{align*}
Going back to the Poisson distribution parameters, we have
\begin{align*} 
 \a &=\frac{1}{t_0}=\frac{p}{\tau_0}  \ ,\\
 \a' &=\frac{1}{t'_0}=\frac{q}{\tau_0}  \ ,
\end{align*}
and we recover the relations
\begin{align*}
 p&=\frac{\a}{\a+\a'} \ ,\\
 q&=\frac{\a'}{\a+\a'}\ .
\end{align*}
 
\newpage 

\section{Nakamoto's analysis.}

Once the honest miners mine the $z$-th block, the attackers have mined $k$ blocks with 
a probability computed in the next section (Proposition \ref{prop_proba}). If $k>z$, then the attackers chain is adopted and the attack succeeds. 
Otherwise the probability they catch up is $(q/p)^z$ as computed above, therefore the probability $P$ of success 
of the attack is
$$
P=\PP[\bdN'(\bdS_z) \geq z]+\sum_{k=0}^{z-1} \PP[\bdN'(\bdS_z)=k] . q_{z-k} \ .
$$
Then Nakamoto makes the simplifying assumption that the blocks have been 
mined according to average expected time per block. This is asymptotically true when $z\to +\infty$ but false 
otherwise. More precisely, he approximates $\bdN'(\bdS_z)$ by $\bdN'(t_z)$ where 
$$
t_z=\EE[\bdS_z]=z\EE[\bdT] = \frac{z\tau_0}{p} \ .
$$
As we have seen above, the random variable $\bdN'(t_z)$ follows a Poisson distribution with parameter 
$$
\lambda = \alpha' t_z =\frac{z \alpha' \tau_0}{p}=\frac{zq}{p} \ .
$$
The final calculus in \cite {N} is then
\begin{align*}
 P_{SN}(z) &= \PP[\bdN'(t_z) \geq z]+\sum_{k=0}^{z-1} \PP[\bdN'(t_z)=k] . q_{z-k} \\
           &= 1-\sum_{k=0}^{z-1} \PP[\bdN'(t_z)=k] + \sum_{k=0}^{z-1}  \PP[\bdN'(t_z)=k] . q_{z-k} \\
           &= 1-\sum_{k=0}^{z-1} e^{-\lambda} \frac{\lambda^k}{k!} (1-q_{z-k}) \ .
\end{align*}

However, this analysis is not correct since $\bdN'(\bdS_z)\not= \bdN'(t_z)$.

\section{The correct analysis.}

Let $\bdX_n=\bdN'(\bdS_n)$ be the number of blocks mined by the attackers when the honest 
miners have just mined the $n$-th block. We compute the distribution for $\bdX_n$. 

\begin{proposition}\label{prop_proba}
 The random variable $\bdX_n$ has a negative binomial distribution with parameters $(n, p)$, i.e. for $k\geq 0$,
 $$
 \PP[\bdX_n=k] = p^n q^k  \binom{k+n-1}{k} \ .
 $$
\end{proposition}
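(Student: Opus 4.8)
The plan is to condition on the time $\bdS_n$ at which the honest miners complete their $n$-th block, and then use the independence of the honest and attacker processes together with the explicit gamma density for $\bdS_n$ and the Poisson law of $\bdN'$ recalled in Section~2.

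First I would use that $\bdS_n$ has density $f_{\bdS_n}(t)=\frac{\a^n}{(n-1)!}t^{n-1}e^{-\a t}$ and that, conditionally on $\bdS_n=t$, the attacker count $\bdN'(t)$ is Poisson with parameter $\a' t$; since $\bdT$ and $\bdT'$ are independent, so are $\bdS_n$ and the process $\bdN'$, and hence for each $k\ge 0$,
$$
\PP[\bdX_n=k]=\int_0^{\infty}\PP[\bdN'(t)=k]\,f_{\bdS_n}(t)\,dt=\frac{\a'^{k}\a^{n}}{k!\,(n-1)!}\int_0^{\infty}t^{n+k-1}e^{-(\a+\a')t}\,dt \ .
$$
The remaining integral is a Gamma integral equal to $(n+k-1)!/(\a+\a')^{n+k}$; substituting this, and regrouping $\a^n/(\a+\a')^n=p^n$, $\a'^{k}/(\a+\a')^{k}=q^k$, and $\frac{(n+k-1)!}{k!(n-1)!}=\binom{n+k-1}{k}$, yields exactly $\PP[\bdX_n=k]=p^n q^k\binom{k+n-1}{k}$. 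Continuity of the distributions makes the choice of strict versus non-strict inequality in the definition of $\bdN'$ irrelevant, since ties have probability zero.

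I would then record an alternative, more combinatorial derivation that also explains the name ``negative binomial'': the superposition of the two independent Poisson processes is a Poisson process whose successive blocks are, independently of each other and of the arrival times, ``honest'' with probability $p=\a/(\a+\a')$ and ``attacker'' with probability $q$, using $\PP[\bdT'<\bdT]=\a'/(\a+\a')$ from Section~3. The event $\{\bdX_n=k\}$ is precisely the event that the $n$-th honest block occupies position $n+k$ in the merged stream, i.e. that among the first $n+k-1$ blocks there are $n-1$ honest and $k$ attacker blocks (in some order) and the $(n+k)$-th block is honest, which has probability $\binom{n+k-1}{k}p^{n-1}q^{k}\cdot p$.

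The main obstacle is not an analytic estimate but making this second derivation fully rigorous — justifying, via the memoryless property, that the ``colors'' of successive merged blocks form an i.i.d.\ Bernoulli$(p)$ sequence independent of the arrival times. The first, integral-based argument avoids this subtlety entirely, so I would present it as the main proof and relegate the combinatorial picture to a remark.
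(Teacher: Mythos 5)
Your main argument is exactly the paper's proof: condition on $\bdS_n$, use independence of $\bdN'$ and $\bdS_n$ to integrate the Poisson probability $\PP[\bdN'(t)=k]$ against the gamma density $f_{\bdS_n}(t)$, and evaluate the resulting Gamma integral to produce $p^n q^k \binom{k+n-1}{k}$. The additional combinatorial derivation via the superposition of the two Poisson processes is a nice remark but not needed; the integral computation you give as the main proof is complete and matches the paper's.
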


\begin{proof}
  Let $k \geq 0$. We have that  $\bdN'$ and $\bdS_n$ are independent, therefore
  \begin{align*}
    \PP [ \bdX_n= k] & =  \int_0^{+ \infty} \PP [
    \bdN' ( \bdS_n) = k|\bdS_n \in [ t, t + dt]] \cdot \PP [
    \bdS_n \in [ t, t + dt]]\\
    & =  \int_0^{+ \infty} \PP [ \bdN' ( t)  = k] \cdot
    f_{\bdS_n}  ( t) dt\\
    & =  \int_0^{+ \infty}  \frac{( \a' t)^k}{k!} e^{- \a' t} \cdot
    \frac{\a^n}{( n - 1) !} t^{n - 1} e^{- \a t} dt\\
    & =  \frac{p^n q^k}{( n - 1) !k!} \cdot \int_0^{+ \infty} t^{k + n - 1}
    e^{- t} dt\\
    & =  \frac{p^n q^k}{( n - 1) !k!} \cdot ( k + n - 1) !
  \end{align*}
\end{proof}

Thus, contradicting to Nakamoto's claim, we have proved that the distribution of $\bdX_n$ is not a Poisson law with parameter $nq/p$.
Rosenfeld \cite{R} noticed the inacuracy of Nakamoto's approximation and proposed empirically the negative binomial distribution 
as a better approximation, not realizing that  this was the exact distribution\footnote{From \cite{R}. ``We will not use this assumption (Nakamoto's one), but 
rather model $m$ more accurately as a negative binomial variable'', and from \cite{R2}
``Instead of a Poisson distribution, I used a more accurate negative binomial distribution.''}. Only asymptotically we 
have convergence to the Poisson distribution:

\begin{proposition}
In the limit $n\to +\infty$, $q\to 0$, and $l_n = nq/p\to \lambda$ we have:
$$
\PP [ X_n = k] \to \frac{\lambda^k}{k!} e^{-\lambda} \ .
$$ 
\end{proposition}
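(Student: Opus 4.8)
The plan is to start from the closed-form expression $\PP[\bdX_n = k] = p^n q^k \binom{k+n-1}{k}$ supplied by Proposition \ref{prop_proba} and to split it into factors whose limits can be controlled separately: the binomial coefficient together with $q^k$ on one side, and the power $p^n$ on the other.

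First I would treat the binomial part. Writing $\binom{k+n-1}{k} = \frac{1}{k!}\prod_{j=0}^{k-1}(n+j)$, for fixed $k$ this is a product of $k$ factors, each of the form $n(1+j/n) = n(1+o(1))$, so that $\binom{k+n-1}{k} = \frac{n^k}{k!}(1+o(1))$ as $n\to+\infty$. Multiplying by $q^k$ gives $q^k\binom{k+n-1}{k} = \frac{(nq)^k}{k!}(1+o(1))$. Since $l_n = nq/p \to \lambda$ and $p = 1-q \to 1$ as $q\to 0$, we have $nq = p\,l_n \to \lambda$, hence $q^k\binom{k+n-1}{k} \to \frac{\lambda^k}{k!}$.

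Next I would handle $p^n = (1-q)^n$. Taking logarithms, $n\log(1-q) = -nq\bigl(1 + \tfrac{q}{2} + \tfrac{q^2}{3} + \cdots\bigr) = -nq\,(1+o(1))$ because $q\to 0$; combined with $nq \to \lambda$ this yields $n\log(1-q) \to -\lambda$, i.e. $p^n \to e^{-\lambda}$. Multiplying the two limits — which is legitimate since each factor converges — gives $\PP[\bdX_n = k] \to \frac{\lambda^k}{k!}e^{-\lambda}$, as claimed.

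There is no serious obstacle here; the computation is essentially the classical passage from a negative binomial to a Poisson law. The only point deserving a little care is that the limit is a joint limit in the pair $(n,q)$: one must check that each elementary estimate used ($nq\to\lambda$ from $l_n\to\lambda$ and $p\to1$, the product of $k$ near-$n$ factors, the logarithmic expansion of $(1-q)^n$) relies only on the combined hypotheses $n\to+\infty$, $q\to 0$, $nq/p\to\lambda$, and not on any one of these in isolation.
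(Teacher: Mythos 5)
Your proof is correct and follows essentially the same route as the paper: both start from the closed-form negative binomial probability of Proposition \ref{prop_proba} and take the limit factor by factor, with $q^k\binom{k+n-1}{k}\to \lambda^k/k!$ and $p^n\to e^{-\lambda}$. The only cosmetic difference is that the paper substitutes $p=n/(n+l_n)$, $q=l_n/(n+l_n)$ and invokes $\left(1+\frac{l_n}{n}\right)^n\to e^{\lambda}$, whereas you obtain $p^n\to e^{-\lambda}$ from the expansion of $n\log(1-q)$; both are valid under the stated joint limit.
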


\begin{proof} We have
\begin{align*}
    \PP [ X_n = k] & = \frac{n^n}{( n + l_n)^n}  \frac{l_n^k}{( n + l_n)^k} 
    \frac{( k + n - 1) !}{( n - 1) !k!}\\
    & = \frac{l_n^k}{k!}  \frac{1}{\left( 1 + \frac{l_n}{n} \right)^n}  \frac{n
    ( n + 1) \ldots ( n + k - 1)}{( n + l_n)^k}
\end{align*}
and the result follows using $\left( 1 + \frac{l_n}{n} \right)^n \rightarrow e^\lambda$.
\end{proof}

We can now compute the probability of success of the attackers catching up a longer chain. This computation was previously done in \cite{R}.
  
\begin{proposition} \textbf{(Probability of success of the attackers)}
 The probability of success by the attackers after $z$ blocks have been mined by the honest miners is
 $$
 P(z)=1-\sum_{k=0}^{z-1} \left (p^z q^k-q^z p^k\right ) \binom{k+z-1}{k}  \ .
 $$
\end{proposition}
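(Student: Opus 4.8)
The plan is to start from the decomposition of the success probability already recorded at the top of Section 4. Conditioning on the number $\bdX_z=\bdN'(\bdS_z)$ of blocks the attackers have mined at the instant the honest miners complete their $z$-th block, one has
$$
P(z)=\PP[\bdX_z\geq z]+\sum_{k=0}^{z-1}\PP[\bdX_z=k]\cdot q_{z-k},
$$
where the first term accounts for the attackers already being ahead (the race is then won), and in the second term $q_{z-k}=(q/p)^{z-k}$ is the gambler's-ruin catch-up probability supplied by the Lemma of Section 3. The two ingredients needed are thus $q_{z-k}$, from that Lemma, and $\PP[\bdX_z=k]=p^zq^k\binom{k+z-1}{k}$, from Proposition \ref{prop_proba} (the negative binomial law, which replaces Nakamoto's Poisson approximation). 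If one wants to be careful, the single point to verify is that this conditional decomposition is exact, i.e.\ that conditionally on $\bdN'(\bdS_z)=k<z$ the remainder of the race is independent of the past and governed by the same per-block odds $q:p$; this is the strong Markov/memorylessness property already used to prove the Lemma.

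Next I would rewrite the ``already ahead'' term as a complement,
$$
\PP[\bdX_z\geq z]=1-\sum_{k=0}^{z-1}\PP[\bdX_z=k]=1-\sum_{k=0}^{z-1}p^zq^k\binom{k+z-1}{k},
$$
and substitute both expressions to obtain
$$
P(z)=1-\sum_{k=0}^{z-1}p^zq^k\binom{k+z-1}{k}+\sum_{k=0}^{z-1}p^zq^k\binom{k+z-1}{k}\left(\frac{q}{p}\right)^{z-k}.
$$

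The only computation left is the elementary simplification $p^zq^k(q/p)^{z-k}=p^{\,z-(z-k)}q^{\,k+(z-k)}=p^kq^z$, after which the two sums share the common factor $\binom{k+z-1}{k}$ and combine into $\sum_{k=0}^{z-1}\bigl(p^zq^k-p^kq^z\bigr)\binom{k+z-1}{k}$, which is exactly the claimed closed form. There is no genuine obstacle: all the substance sits in Proposition \ref{prop_proba} and the gambler's-ruin Lemma, and what remains is bookkeeping. (The binomial coefficient identity $\binom{k+z-1}{k}=\binom{k+z-1}{z-1}$ makes the symmetry $k\leftrightarrow$ ``$z$-role'' in the summand transparent, but it is not needed for the derivation.)
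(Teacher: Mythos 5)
Your proof is correct and follows essentially the same route as the paper's: both start from the decomposition $P(z)=\PP[\bdX_z\geq z]+\sum_{k=0}^{z-1}\PP[\bdX_z=k]\,q_{z-k}$, plug in the negative binomial law from Proposition \ref{prop_proba} together with $q_{z-k}=(q/p)^{z-k}$, and finish with the simplification $p^zq^k(q/p)^{z-k}=p^kq^z$. The only cosmetic difference is that the paper places the $k=z$ term in the catch-up sum (using $q_0=1$) and then drops it because $p^zq^z-q^zp^z=0$, whereas you absorb it into $\PP[\bdX_z\geq z]$ from the start; the two are trivially equivalent.
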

  
\begin{proof}
As explained before, we have
\begin{align*}
P(z) &= \sum_{k>z} p^z q^k \binom{k+z-1}{k} + \sum_{k=0}^z \left (\frac{q}{p}\right )^{z-k} p^z q^k \binom{k+z-1}{k} \\
          &=1-\sum_{k=0}^z \left (p^z q^k-q^z p^k\right ) \binom{k+z-1}{k}\\
          &=1-\sum_{k=0}^{z-1} \left (p^z q^k-q^z p^k\right ) \binom{k+z-1}{k}
\end{align*}
\end{proof}

\begin{center}
\includegraphics[scale=0.45]{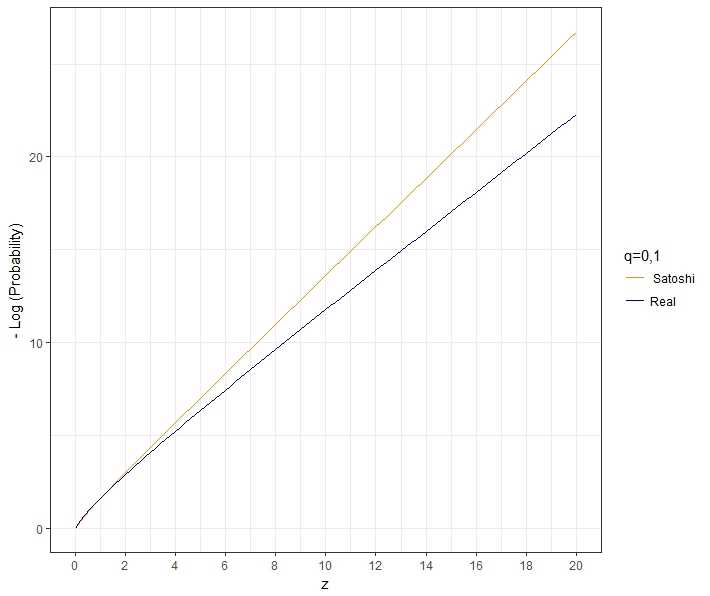} \\
{\footnotesize Figure 2. Nakamoto's and real probability}
\end{center}

\bigskip

\textbf{Numerical application.}

\bigskip

Converting to R code, given $0<q<1/2$ and $z\geq 0$, this simple 
function computes our probability $P(z)$:

\bigskip

{\tt

prob<-function(z,q)\{ 

p=1-q;

sum=1; 

for (k in 0:(z-1)) \{sum=sum-(p\^{}z*q\^{}k-q\^{}z*p\^{}k)*choose(k+z-1,k)\} ; 
 
return(sum) 

\}
}

\bigskip

We can compare with the probability $P_{SN}$ computed in \cite {N}.

\bigskip

%

\centerline{
\begin{tabular}{|l|l|l|}
\hline
$z$ &$P(z)$ & $P_{SN}(z)$ \\
\hline \hline
$0$ & $1.0000000$ & $1.0000000$ \\
$1$ & $0.2000000$ & $0.2045873$ \\
$2$ & $0.0560000$ & $0.0509779$ \\
$3$ & $0.0171200$ & $0.0131722$ \\
$4$ & $0.0054560$ & $0.0034552$ \\
$5$ & $0.0017818$ & $0.0009137$ \\
$6$ & $0.0005914$ & $0.0002428$ \\
$7$ & $0.0001986$ & $0.0000647$ \\
$8$ & $0.0000673$ & $0.0000173$ \\
$9$ & $0.0000229$ & $0.0000046$ \\
$10$ & $0.0000079$ & $0.0000012$ \\
\hline
\end{tabular}
}
\medskip

\centerline{\footnotesize{\textbf{Table 2. Probabilities for $q=0.1$.}}}

\bigskip
%
%

\centerline{
\begin{tabular}{|l|l|l|}
\hline
$z$ &$P(z)$ & $P_{SN}(z)$ \\
\hline \hline
$0$ & $1.0000000$ & $1.0000000$ \\
$5$ & $0.1976173$ & $0.1773523$ \\
$10$ & $0.0651067$ & $0.0416605$ \\
$15$ & $0.0233077$ & $0.0101008$ \\
$20$ & $0.0086739$ & $0.0024804$ \\
$25$ & $0.0033027$ & $0.0006132$ \\
$30$ & $0.0012769$ & $0.0001522$ \\
$35$ & $0.0004991$ & $0.0000379$ \\
$40$ & $0.0001967$ & $0.0000095$ \\
$45$ & $0.0000780$ & $0.0000024$ \\
$50$ & $0.0000311$ & $0.0000006$ \\
\hline
\end{tabular}
}

\medskip

\centerline{\footnotesize{\textbf{Table 3. Probabilities for $q=0.3$.}}}


$$
\begin{array}{|c||c|c|c|c|c|c|c|c|}
 \hline
 q & 0.10 & 0.15 & 0.20 & 0.25 & 0.30 & 0.35 & 0.40 & 0.45 \\ 
 \hline
 z & 6 & 9 & 13 & 20 & 32 & 58 & 133 & 539\\ 
 \hline
 z_{SN} & 5 & 8 & 11 & 15 & 24 & 41 & 81 & 340 \\ 
 \hline
\end{array}
$$
\medskip

\centerline{\footnotesize{\textbf{Table 4. Solving for $P$ less than 0.1\%.}}}

Therefore the correct results for bitcoin security are worse than those given in \cite {N}.
The explanation is that Nakamoto's result is correct only if the mining time by the honest 
miners is exactly the expected time. Times longer than average help the 
attackers.

\section{Closed-form formula.}

We give a closed-form formula for $P(z)$ using the regularized incomplete beta 
function $I_x(a,b)$ (see \cite {AS} (6.6.2)).

\begin{theorem}\label{prop_close}
 We have, with $s=4pq$, 
$$
P(z)= I_s(z,1/2) \ .
$$
\end{theorem}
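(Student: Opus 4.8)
The plan is to show that the finite sum $\sum_{k=0}^{z-1}\left(p^z q^k - q^z p^k\right)\binom{k+z-1}{k}$ appearing in the previous proposition equals $1 - I_{4pq}(z,1/2)$, so that $P(z) = I_{4pq}(z,1/2)$. The natural route is to recognize the sum in terms of a known hypergeometric/beta identity. I would first split it as $p^z\sum_{k=0}^{z-1}q^k\binom{k+z-1}{k} - q^z\sum_{k=0}^{z-1}p^k\binom{k+z-1}{k}$ and look for a closed form of the truncated negative-binomial sum $\sum_{k=0}^{z-1}x^k\binom{k+z-1}{k}$. Since $\bdX_z$ is negative binomial $(z,p)$, the quantity $\sum_{k=0}^{z-1}p^z q^k\binom{k+z-1}{k} = \PP[\bdX_z \le z-1] = \PP[\bdX_z < z]$ is exactly the probability that after the honest miners mine $z$ blocks the attackers have mined strictly fewer than $z$; equivalently it is the probability that in a sequence of i.i.d. Bernoulli$(p)$ trials the $z$-th success occurs at or before trial $2z-1$, i.e. the probability that among the first $2z-1$ trials there are at least $z$ successes. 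This is a binomial tail, and the standard identity relating the negative binomial CDF to the regularized incomplete beta function gives $\PP[\bdX_z < z] = I_p(z,z)$.

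The second piece is handled by the same identity with the roles of $p$ and $q$ swapped: $\sum_{k=0}^{z-1}q^z p^k\binom{k+z-1}{k} = I_q(z,z)$. Hence the sum to be subtracted from $1$ is $I_p(z,z) - I_q(z,z)$, and the theorem reduces to the identity
$$
I_p(z,z) - I_q(z,z) = 1 - I_{4pq}(z,1/2),
$$
valid for $p+q=1$, $0<q<1/2$. I would prove this by a change of variables in the beta integral. Writing $I_p(z,z) - I_q(z,z) = \frac{\Gamma(2z)}{\Gamma(z)^2}\int_q^p t^{z-1}(1-t)^{z-1}\,dt$ (using $p = 1-q$ and the symmetry $t \mapsto 1-t$ of the integrand, together with $I_p(z,z)+I_q(z,z) - $ overlap bookkeeping — more carefully, $1 - I_p(z,z) - I_q(z,z) + [\text{central part}]$, which I will arrange so the central integral over $[q,p]$ survives), the integrand $t(1-t)$ suggests the substitution $u = 4t(1-t)$, which maps $[q,p]$ monotonically onto... no: $4t(1-t)$ is symmetric about $t=1/2$ and maps both $q$ and $p$ to $4pq$, so instead I substitute on a half-interval. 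Set $u = 4t(1-t)$ on $t\in[q,1/2]$; then $t = \tfrac12(1-\sqrt{1-u})$, $1-t = \tfrac12(1+\sqrt{1-u})$, $t(1-t) = u/4$, and $du = 4(1-2t)\,dt$ so $dt = -\,du/(4\sqrt{1-u})$. This converts $\int_q^{1/2} t^{z-1}(1-t)^{z-1}\,dt$ into a constant times $\int_0^{4pq} u^{z-1}(1-u)^{-1/2}\,du$, which is precisely the integral defining $I_{4pq}(z,1/2)$. Matching the Gamma-function prefactors via the duplication formula $\Gamma(2z) = \frac{2^{2z-1}}{\sqrt{\pi}}\Gamma(z)\Gamma(z+1/2)$ then yields the claimed equality.

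The main obstacle is the bookkeeping in the first step: correctly identifying which binomial/negative-binomial tail each of the two sums represents and getting the combinatorial identity $\PP[\bdX_z<z] = I_p(z,z)$ with the right parameters, since off-by-one errors in the tail index are easy here. Once that is pinned down, the remaining work — the substitution $u = 4t(1-t)$ and the Legendre duplication formula to reconcile the normalizing constants — is routine calculus, though one must be careful that the substitution is applied on $[q,1/2]$ (where $1-2t\ge 0$) and that the two symmetric halves of $[q,p]$ are combined correctly so that the factor of $2$ comes out right. An alternative to the whole argument, which I would keep in reserve, is to verify the identity $P(z) = I_{4pq}(z,1/2)$ by checking that both sides satisfy the same first-order recurrence in $z$ with the same initial value $P(0)=1$: differentiating $I_s(z,1/2)$ in $z$ or using contiguous-function relations for the incomplete beta function against the explicit three-term structure of the sum. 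But the beta-integral substitution is cleaner and is the approach I would write up.
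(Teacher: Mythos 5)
Your reduction of $P(z)$ to $1 - I_p(z,z) + I_q(z,z)$ is exactly the paper's first step: both you and the authors identify the two truncated sums as negative binomial cumulative distribution functions and invoke the standard identity expressing them through the regularized incomplete beta function (the paper derives that identity by integrating $B_x(a,b)$ by parts to obtain a recurrence in the first parameter, while you get it via the equivalent binomial-tail interpretation; both are fine, and your worry about off-by-one errors is resolved correctly, since $\PP[\bdX_z<z]=I_p(z,z)$ as you state). Where you genuinely diverge is the last step. The paper uses the symmetry $I_p(z,z)+I_q(z,z)=1$ to write $P(z)=2I_q(z,z)$ and then simply cites the quadratic transformation $I_q(z,z)=\frac{1}{2}I_{4pq}(z,1/2)$ from Abramowitz--Stegun (26.5.14), whereas you prove the equivalent identity $I_p(z,z)-I_q(z,z)=1-I_{4pq}(z,1/2)$ from scratch by the substitution $u=4t(1-t)$ together with the Legendre duplication formula; this makes the argument self-contained, at the cost of the bookkeeping you yourself flag. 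One concrete slip to fix: on $t\in[q,1/2]$ the map $u=4t(1-t)$ sends $q\mapsto 4pq$ and $1/2\mapsto 1$, so $2\int_q^{1/2}t^{z-1}(1-t)^{z-1}\,dt$ becomes $2\cdot 4^{-z}\int_{4pq}^{1}u^{z-1}(1-u)^{-1/2}\,du$, \emph{not} an integral over $[0,4pq]$; after dividing by $B(z,z)=2^{1-2z}B(z,1/2)$ this equals $1-I_{4pq}(z,1/2)$, which is precisely what your stated identity requires, so the final answer $P(z)=I_{4pq}(z,1/2)$ does come out correctly once the limits are written down properly. (Equivalently, and slightly more cleanly, apply the same substitution to $2I_q(z,z)=\frac{2}{B(z,z)}\int_0^q t^{z-1}(1-t)^{z-1}\,dt$: there $[0,q]$ really does map onto $[0,4pq]$ and you land directly on $I_{4pq}(z,1/2)$, recovering the paper's cited identity.)
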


We recall that the incomplete beta function is defined (see \cite {AS} (6.6.1)), for $a,b>0$ and $0\leq x\leq 1$, by 
$$
B_x(a,b)=\int_0^x t^{a-1}(1-t)^{b-1} \ dt \ , 
$$
and the classical beta function is defined (see \cite {AS} (6.2.1)) by $B(a,b)=B_1(a,b)$.

The Regularized Incomplete Beta Function is defined (see \cite {AS} (6.6.2) and (26.5.1)) by 
$$
I_x(a,b)=\frac{B_x(a.b)}{B(a,b)}= \frac{\Gamma(a+b)}{\Gamma(a) \Gamma(b)} B_x(a,b)\ .
$$
 
\begin{proof}
The cumulative distribution of a random variable $\bdX$ with negative binomial distribution, with 
$0<p<1$ and $q=1-p$ as usual (see \cite {AS} (26.5.26))) is given by
$$
F_\bdX (k)=\PP[\bdX\leq k]=\sum_{l=0}^k p^z q^l \binom{l+z-1}{l} =1-I_p(k+1,z)\ .
$$
This results from the formula (see \cite {AS} (6.6.1))
$$
I_p(k+1,z)=I_p(k,z) -\frac{p^k q^z}{kB(k,z)} \ ,
$$
that we prove by integrating by parts the definition of $B_x(a,b)$. 

Thus we get 
$$
P(z)= 1-I_p(z,z)+I_q(z,z) \ .
$$
Making the change of variables $t\mapsto 1-t$ in the integral definition, we also have 
a symmetry relation (see \cite {AS} (6.6.3))
$$
I_p(a,b)+I_q(b,a)=1 \ .
$$
Therefore we have $I_p(z,z)+I_q(z,z)=1$, and $P(z)= 2 I_q(z,z)$.
The result follows using (see \cite {AS} (26.5.14)),  $I_q(z,z)=\frac12 I_s(z, 1/2)$, where $s=4pq$.
 
\end{proof}

\section{Asymptotic and exponential decay.}

Nakamoto makes the observation (\cite {N} p.8), without proof, that the probability decreases 
exponentially to $0$ when $z\to +\infty$. We prove this fact for the true probability $P(z)$ 
using the closed-form formula from Proposition \ref{prop_close},

\begin{proposition}\label{prop_asymp}
When $z\to +\infty$ we have, with $s=4pq<1$,
$$
P(z) \sim \frac{s^{z}}{\sqrt{\pi (1-s) z}} \ .
$$
\end{proposition}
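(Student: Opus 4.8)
The plan is to start from the closed-form expression $P(z)=I_s(z,1/2)$ provided by Theorem \ref{prop_close}, unfold it using the integral definition of the Regularized Incomplete Beta Function, and then extract the large-$z$ behaviour by a Laplace-type argument concentrated at the endpoint $t=s$. Concretely, I would first write
$$
P(z)=I_s(z,1/2)=\frac{\Gamma(z+1/2)}{\Gamma(z)\,\Gamma(1/2)}\int_0^s t^{z-1}(1-t)^{-1/2}\,dt,
$$
recall that $\Gamma(1/2)=\sqrt{\pi}$, and rescale with $t=su$ to obtain
$$
P(z)=\frac{\Gamma(z+1/2)}{\Gamma(z)\sqrt{\pi}}\;s^z\int_0^1 u^{z-1}(1-su)^{-1/2}\,du,
$$
which keeps the exponential factor $s^z$ visible from the start. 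The gamma prefactor is then disposed of by the classical ratio asymptotics $\Gamma(z+1/2)/\Gamma(z)\sim\sqrt{z}$ as $z\to+\infty$, a direct consequence of Stirling's formula. Hence the statement reduces to showing that $\int_0^1 u^{z-1}(1-su)^{-1/2}\,du\sim \dfrac{1}{z\sqrt{1-s}}$.

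For that last integral I would use the substitution $u=e^{-v/z}$, which turns it into
$$
\frac{1}{z}\int_0^{+\infty} e^{-v}\bigl(1-s\,e^{-v/z}\bigr)^{-1/2}\,dv .
$$
Now for each fixed $v\ge 0$ one has $(1-s\,e^{-v/z})^{-1/2}\to(1-s)^{-1/2}$ as $z\to+\infty$, and since $e^{-v/z}\le 1$ the integrand is dominated by the integrable function $v\mapsto e^{-v}(1-s)^{-1/2}$. Dominated convergence then gives $\int_0^{+\infty}e^{-v}(1-s\,e^{-v/z})^{-1/2}\,dv\to(1-s)^{-1/2}$, which is exactly the required first-order asymptotics. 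Multiplying the two limits yields
$$
P(z)\sim\frac{\sqrt{z}}{\sqrt{\pi}}\cdot s^z\cdot\frac{1}{z\sqrt{1-s}}=\frac{s^z}{\sqrt{\pi(1-s)\,z}} .
$$

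The only genuinely delicate point is the passage to the limit inside the $u$-integral; everything else is bookkeeping with Stirling and elementary changes of variables. An alternative, equivalent route would be Watson's lemma at the endpoint $t=s$: expand $(1-t)^{-1/2}$ in powers of $(s-t)$ and integrate term by term, which also produces the full asymptotic series in $1/z$, but the dominated-convergence argument above is the shortest rigorous path to the stated leading term. One could also work directly with $\int_0^s$ (substituting $t=s\,e^{-w/z}$) without the preliminary rescaling $t=su$; I prefer the rescaled form purely so that the dominant $s^z$ is separated off cleanly before any estimation begins.
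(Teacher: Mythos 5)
Your proof is correct and takes essentially the same route as the paper: both reduce $I_s(z,1/2)$ via the substitution $t=s e^{-u}$ (yours in the rescaled form $t=s e^{-v/z}$) to a Laplace-type integral whose value at the endpoint, $(1-s)^{-1/2}$, yields the leading term, combined with Stirling asymptotics for the Gamma prefactor. The only difference is the justification of the key limit: you use dominated convergence after the rescaling $u=e^{-v/z}$, whereas the paper invokes an elementary Watson's lemma proved by integration by parts.
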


By integration by parts we get the following elementary version of Watson's Lemma:

\begin{lemma}\label{lemma_simpleWatson}
 Let $f\in C^1(\RR_+)$ with $f(0)\not= 0$ and absolutely convergent integral 
 $$
 \int_0^{+\infty} f(u)e^{-zu} \, du <+\infty \ ,
 $$
 then, when $z\to +\infty$, we have
 $$
 \int_0^{+\infty} f(u)e^{-zu} \, du \sim \frac{f(0)}{z} \ .
 $$
\end{lemma}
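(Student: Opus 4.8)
The plan is to localize the integral near $u=0$, where $f(u)\approx f(0)$, and to show that the contribution of $u$ bounded away from $0$ is exponentially negligible. I would carry this out in three short steps, after which the conclusion is immediate because $f(0)\neq 0$.

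First, I would exploit the absolute convergence hypothesis at a fixed reference value $z_0$ in the admissible range: setting $C=\int_0^{+\infty}|f(u)|e^{-z_0u}\,du<+\infty$, for every $\delta>0$ and every $z>z_0$ one has
$$
\Bigl|\int_\delta^{+\infty} f(u)e^{-zu}\,du\Bigr|\le e^{-(z-z_0)\delta}\int_\delta^{+\infty}|f(u)|e^{-z_0u}\,du\le Ce^{-(z-z_0)\delta}=o(1/z).
$$
This disposes of the tail. Second, I would use continuity of $f$ at $0$: given $\e>0$, pick $\delta>0$ with $|f(u)-f(0)|\le\e$ on $[0,\delta]$. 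Since $z\int_0^\delta e^{-zu}\,du=1-e^{-z\delta}$, a direct estimate gives
$$
\Bigl|\,z\int_0^\delta f(u)e^{-zu}\,du-f(0)\Bigr|\le \e+ |f(0)|e^{-z\delta}.
$$
Third, I would add the two estimates, let $z\to+\infty$ with $\delta$ fixed, and then let $\e\to0$, obtaining $z\int_0^{+\infty}f(u)e^{-zu}\,du\to f(0)$; since $f(0)\neq0$ this is exactly $\int_0^{+\infty}f(u)e^{-zu}\,du\sim f(0)/z$.

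The same conclusion can be reached through the integration-by-parts identity $\int_0^{+\infty}f(u)e^{-zu}\,du=\frac{f(0)}{z}+\frac1z\int_0^{+\infty}f'(u)e^{-zu}\,du$ (the boundary term at $+\infty$ vanishing by the absolute convergence), followed by showing that the remaining Laplace transform of $f'$ is $o(1)$ via the same near-$0$ plus tail decomposition, now using $f'\in C^0$ near $0$. The step I expect to require the most care is the tail: it is precisely the absolute convergence of the integral at a single value of $z$ that furnishes the uniform, exponentially small control for all larger $z$, and this is where the hypothesis is genuinely used. Everything else is routine bookkeeping.
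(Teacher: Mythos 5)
Your argument is correct, and your primary route differs from the paper's. The paper proves the lemma in one line by integration by parts, writing $\int_0^{+\infty}f(u)e^{-zu}\,du=\frac{f(0)}{z}+\frac1z\int_0^{+\infty}f'(u)e^{-zu}\,du$ and absorbing the second term into the error; this is the approach you only sketch as an alternative at the end. Your main proof instead localizes directly: the tail beyond $\delta$ is $O(e^{-(z-z_0)\delta})$ by comparison with the absolutely convergent integral at a fixed $z_0$, and near $0$ the normalization $z\int_0^\delta e^{-zu}\,du=1-e^{-z\delta}$ together with mere continuity of $f$ at $0$ gives $z\int_0^{+\infty}f(u)e^{-zu}\,du\to f(0)$. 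What your version buys is generality and robustness: it uses only continuity at $0$ plus the integrability hypothesis, never touching $f'$, and it sidesteps the small technical points the IBP route would otherwise require (convergence of $\int f'e^{-zu}$ and vanishing of the boundary term at $+\infty$, which you wave at but which absolute convergence of $\int fe^{-zu}$ does not by itself deliver). What the paper's route buys is brevity under the stated $C^1$ hypothesis and a natural path to higher-order expansions (full Watson's lemma), which is presumably why the authors phrased it that way. Your identification of the tail estimate as the step where the absolute-convergence hypothesis does real work is exactly right.
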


Then we get the following asymptotics (see also \cite{LS}):

\begin{lemma}
For $s,b\in \RR$, we have when $z\to +\infty$,
$$
B_s(z,b) \sim  \frac{s^z}{z} (1-s)^{b-1} \ .
$$
\end{lemma}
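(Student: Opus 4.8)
The plan is to reduce this to the elementary Watson Lemma (Lemma~\ref{lemma_simpleWatson}) by a single change of variable that moves the dominant endpoint $t=s$ of the integral to $u=0$. We work with $0<s<1$, which is the case of interest ($s=4pq$); in the integral
$$
B_s(z,b)=\int_0^s t^{z-1}(1-t)^{b-1}\,dt
$$
the factor $t^{z-1}$ concentrates essentially all the mass near $t=s$ as $z\to+\infty$, so the natural substitution is $t=se^{-u}$, which sends $t=s$ to $u=0$ and $t\to 0^{+}$ to $u\to+\infty$.

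First I would carry out the substitution $t=se^{-u}$, $dt=-se^{-u}\,du$, obtaining
$$
B_s(z,b)=s^{z}\int_0^{+\infty}e^{-zu}\,(1-se^{-u})^{b-1}\,du .
$$
Then I would set $f(u)=(1-se^{-u})^{b-1}$ and verify the hypotheses of Lemma~\ref{lemma_simpleWatson}: since $0<s<1$ we have $1-se^{-u}\in[1-s,1)$ for all $u\ge 0$, hence bounded away from $0$, so $f\in C^{1}(\RR_+)$, $f(0)=(1-s)^{b-1}\neq 0$, and $|f|$ is bounded, giving $\int_0^{+\infty}|f(u)|e^{-zu}\,du<+\infty$ for every $z>0$. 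Applying the lemma yields
$$
\int_0^{+\infty}e^{-zu}(1-se^{-u})^{b-1}\,du\sim\frac{(1-s)^{b-1}}{z}\qquad(z\to+\infty),
$$
and therefore $B_s(z,b)\sim \dfrac{s^{z}}{z}\,(1-s)^{b-1}$.

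The argument is essentially immediate once the substitution is in place; the only points requiring care are that $s$ must lie in $(0,1)$ for the boundary factor and the integrability to behave (so the statement should be read with that restriction), and the check that $f$ remains $C^{1}$ and nonvanishing at $0$ even when $b-1<0$, which is exactly where the bound $1-se^{-u}\ge 1-s>0$ is used. I do not expect a genuine obstacle here: the main work is bookkeeping in the change of variables and matching the integrand to the form required by the elementary version of Watson's Lemma.
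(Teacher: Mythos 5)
Your proof is correct and is essentially identical to the paper's: the same substitution $t=se^{-u}$ (i.e. $u=\log(s/t)$) followed by the elementary Watson's Lemma applied to $f(u)=(1-se^{-u})^{b-1}$. Your explicit verification of the hypotheses and the remark that the statement really requires $0<s<1$ are sensible additions, but the route is the same.
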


\begin{proof}
 Making the change of variable $u=\log (s/t)$ in the definition 
 $$
 B_s(z,b) = \int_0^s t^{z-1} (1-t)^{b-1} \, dt \ ,
 $$
 we get
 $$
 B_s(z,b) = s^z \int_0^{+\infty} (1-se^{-u})^{b-1} e^{-zu} \, du \ ,
 $$
 and the result follows applying Lemma \ref{lemma_simpleWatson} with $f(u)=(1-se^{-u})^{b-1}$.
\end{proof}
Now we end the proof of Proposition \ref{prop_asymp}. By Stirling asymptotics,
$$
B(z, 1/2)=\frac{\Gamma(z) \Gamma(1/2)}{\Gamma(z+1/2)}\sim \sqrt{\frac{\pi}{z}} \ ,
$$
so
$$
 I_s(z,1/2)= \frac{B_s(z,1/2)}{B(z,1/2)} \sim \frac{(1-s)^{-1/2} \frac{s^z}{z}}{\sqrt{\frac{\pi}{z}}} \sim \frac{s^{z}}{\sqrt{\pi (1-s) z}} \ .
$$

\section{A finer risk analysis.}

In practice, in order to avoid a double spend attack, the recipient of the bitcoin transaction waits for  $z\geq 1$ 
confirmations. But he also has the information on the time $\tau_1$ it took to confirm the transaction $z$ times. 
Obviously the probability of success of the attackers increases with $\tau_1$. The relevant parameter is the 
relative deviation from the expected time
$$
\kappa=\frac{\tau_1}{zt_0}=\frac{p \tau_1}{z\tau_0} \ .
$$

Our purpose is to compute the probability $P(z,\kappa )$ of success of the attackers. Note that $P(z,1)$ is the 
probability computed by Nakamoto \cite {N},
$$
P_{SN}(z)=P(z,1) \ .
$$

\medskip
\textbf{Computation of $P(z,\kappa)$.}
\medskip

The attackers mined $k\geq 0$ blocks during the time $\tau_1$ with probability that follows a Poisson 
distribution with parameter
$$
\lambda(z, \kappa) = \alpha' \tau_1=\kappa \frac{zq}{p} \ ,
$$
that means
$$
\PP[\bdN'(\tau_1)=k]=\frac{\left (\frac{zq}{p} \kappa\right)^k}{k!}\, e^{-\frac{zq}{p} \kappa} \ ,
$$
For $\kappa =1$ we recover Nakamoto's approximation.

\medskip

The cumulative Poisson distribution can be computed with the incomplete regularized gamma function 
(\cite {AS} (26.4))
$$
Q(s,x)=\frac{\Gamma(s,x)}{\Gamma(x)} \ ,
$$
where 
$$
\Gamma(s,x)=\int_x^{+\infty} t^{s-1}e^{-t} \, dt 
$$
is the incomplete gamma function and $\Gamma(s)=\Gamma(s,0)$ is the regular gamma function.
We have
$$
Q(z,\lambda)=\sum_{k=0}^{z-1}\frac{\lambda^k}{k!} e^{-\lambda} \ .
$$
We compute as before 
\begin{align*} 
 P(z, \kappa) &= \sum_{k=z}^{+\infty}  \frac{\left (\lambda(z, \kappa)
\right)^k}{k!}\, e^{-\lambda(z, \kappa)} + \sum_{k=0}^{z-1} \left (\frac{q}{p}\right )^{z-k} \, \frac{\left (\lambda(z, \kappa)
\right)^k}{k!}\, e^{-\lambda(z, \kappa)}  \\
 &=1-\sum_{k=0}^{z-1}\left ( 1 -\left (\frac{q}{p}\right )^{z-k}\right ) \, \frac{\left (\lambda(z, \kappa)
\right)^k}{k!}\, e^{-\lambda(z, \kappa)}  \\
 &= 1-Q(z, \kappa z q/p) + \left (\frac{q}{p}\right )^{z} e^{\kappa z\frac{p-q}{p}} Q(z, \kappa z)  \ .\\
\end{align*}

\begin{center}
\includegraphics[scale=0.5]{FirstGraphWithGreekLetterBis} \\
{\footnotesize Figure 3. Probability of success as a function of $\kappa$}
\end{center}

\bigskip

Thus we get a explicit closed-form formula for $P(z, \kappa)$,

\begin{theorem*}\label{thm_closedform2}
We have
$$
 P(z, \kappa)=1-Q(z, \kappa z q/p) + \left (\frac{q}{p}\right )^{z} e^{\kappa z\frac{p-q}{p}} Q(z, \kappa z) \ ,
$$
and 
$$
P_{SN}(z)=P(z, 1)=1-Q(z, z q/p) + \left (\frac{q}{p}\right )^{z} e^{ z\frac{p-q}{p}} Q(z, z) \ .
$$
\end{theorem*}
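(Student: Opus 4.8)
The plan is to condition on the number of blocks the attackers secretly mine during the observed confirmation interval of length $\tau_1$, and then to recognize the resulting partial Poisson sums as values of the regularized incomplete gamma function $Q$.

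First I would note that, because $\tau_1$ is here a \emph{fixed} (observed) time rather than the random variable $\bdS_z$, the count $\bdN'(\tau_1)$ is genuinely Poisson distributed by the computation of Section 2, with parameter $\alpha'\tau_1$; using $\alpha'=q/\tau_0$ and the definition $\kappa=p\tau_1/(z\tau_0)$ this parameter equals $\lambda(z,\kappa)=\kappa z q/p$. (This is precisely the point that distinguishes the present analysis from Nakamoto's, and it is already established.) Next I would decompose the success event according to $k=\bdN'(\tau_1)$ exactly as in the unconditional Proposition on the probability of success: if $k\ge z$ the attackers already hold a chain at least as long and win, while if $0\le k\le z-1$ they lag by $z-k$ blocks and eventually win with probability $q_{z-k}=(q/p)^{z-k}$ by the Lemma on catching up from $n$ blocks behind. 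This gives
\[
P(z,\kappa)=\sum_{k=z}^{+\infty}\frac{\lambda(z,\kappa)^k}{k!}e^{-\lambda(z,\kappa)}+\sum_{k=0}^{z-1}\left(\frac{q}{p}\right)^{z-k}\frac{\lambda(z,\kappa)^k}{k!}e^{-\lambda(z,\kappa)},
\]
and replacing the tail sum by $1$ minus the complementary partial sum yields $P(z,\kappa)=1-\sum_{k=0}^{z-1}\left(1-(q/p)^{z-k}\right)\frac{\lambda(z,\kappa)^k}{k!}e^{-\lambda(z,\kappa)}$.

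Then I would identify the two partial Poisson sums with $Q$ using the recalled identity $Q(z,x)=\sum_{k=0}^{z-1}\frac{x^k}{k!}e^{-x}$. The first is immediately $Q(z,\lambda(z,\kappa))=Q(z,\kappa z q/p)$. For the second, factoring out $(q/p)^z$ turns $(q/p)^{z-k}\frac{\lambda^k}{k!}$ into $(q/p)^z\frac{(p\lambda/q)^k}{k!}$, and since $p\lambda(z,\kappa)/q=\kappa z$ the sum becomes $(q/p)^z e^{-\lambda}\sum_{k=0}^{z-1}\frac{(\kappa z)^k}{k!}=(q/p)^z e^{\kappa z-\lambda}Q(z,\kappa z)$; the exponent simplifies via $\kappa z-\lambda=\kappa z-\kappa z q/p=\kappa z\frac{p-q}{p}$. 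Substituting both expressions back gives the claimed closed form for $P(z,\kappa)$. Finally, the formula for $P_{SN}(z)$ is obtained by setting $\kappa=1$, which corresponds exactly to Nakamoto's approximation $\bdS_z\approx\EE[\bdS_z]=z\tau_0/p=zt_0$, i.e. $\tau_1=zt_0$.

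I do not expect a genuine obstacle here: the argument is a direct conditioning computation resting on facts already proved. The only points requiring care are the bookkeeping at the boundary term $k=z$ (it appears once, with weight $1$, consistently with the treatment of the event $\bdN'(\bdS_z)\ge z$ in the unconditional case) and the routine exponent manipulation $\kappa z-\kappa z q/p=\kappa z(p-q)/p$; the conceptual content is entirely in the observation that $\bdN'(\tau_1)$ is Poisson once $\tau_1$ is a known constant.
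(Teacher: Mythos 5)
Your proposal is correct and follows essentially the same route as the paper: condition on the Poisson-distributed count $\bdN'(\tau_1)$ with parameter $\kappa z q/p$, split the success event into $k\geq z$ and the catch-up cases weighted by $(q/p)^{z-k}$, and rewrite the two partial Poisson sums as values of $Q$, with the same exponent simplification $\kappa z-\kappa zq/p=\kappa z(p-q)/p$. Nothing is missing.
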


\section{Asymptotics of $P(z,\kappa)$ and $P_{SN}(z)$.}

We find the asymptotics of $Q(z, \lambda z)$ when $z\to +\infty$ for different values of $\lambda >0$.

\begin{lemma} \label{lemma_asymp}
We have
 \begin{enumerate}
  \item For $0<\lambda <1$, $Q(z, \lambda z) \to 1$ and $1-Q(z, \lambda z)\sim \frac{1}{1-\lambda}\frac{1}{\sqrt{2\pi z}}
  e^{-z(\lambda -1-\log \lambda)}$.
  \item For $\lambda=1$, $Q(z, z) \to 1/2$ and $1/2-Q(z, z)\sim \frac{1}{3\sqrt{2\pi z}}$.
  \item For $\lambda>1$, $Q(z,\lambda z )\sim \frac{1}{\lambda -1}\frac{1}{\sqrt{2\pi z}}
  e^{-z(\lambda -1-\log \lambda)}$.
 \end{enumerate}
\end{lemma}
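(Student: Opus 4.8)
The plan is to reduce everything to the asymptotics of the cumulative Poisson distribution via the exact identity
$$
Q(z,\lambda z)=\sum_{k=0}^{z-1}\frac{(\lambda z)^k}{k!}e^{-\lambda z}=\PP[\bdN'(\lambda z/\a')\leq z-1],
$$
or equivalently, using the integral representation of the regularized incomplete gamma function,
$$
Q(z,\lambda z)=1-\frac{1}{\Gamma(z)}\int_0^{\lambda z} t^{z-1}e^{-t}\,dt
=\frac{1}{\Gamma(z)}\int_{\lambda z}^{+\infty} t^{z-1}e^{-t}\,dt.
$$
First I would substitute $t=zu$ to get $Q(z,\lambda z)=\frac{z^z}{\Gamma(z)}\int_\lambda^{+\infty} u^{z-1}e^{-zu}\,du$, and write the integrand as $e^{-z\,g(u)}/u$ with $g(u)=u-\log u$. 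The function $g$ has its unique minimum at $u=1$ with $g(1)=1$, and is strictly increasing on $(1,\infty)$, strictly decreasing on $(0,1)$. This single Laplace-type integral governs all three cases; the only thing that changes is whether the critical point $u=1$ lies outside the interval of integration (cases $0<\lambda<1$ and $\lambda>1$) or exactly at the endpoint ($\lambda=1$).

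For case (3), $\lambda>1$, the minimum of $g$ on $[\lambda,\infty)$ is attained at the left endpoint $u=\lambda$, which is not a critical point, so this is a standard endpoint Laplace estimate: substituting $u=\lambda e^{v}$ (or $v=g(u)-g(\lambda)$) turns the integral into $\int_0^{+\infty} h(v)e^{-zv}\,dv$ with $h(0)\neq 0$, and Lemma \ref{lemma_simpleWatson} gives the leading term $\sim \frac{1}{z}\cdot(\text{const})$. Keeping track of the constant, $h(0)=\frac{1}{\lambda}\cdot\frac{1}{g'(\lambda)}=\frac{1}{\lambda}\cdot\frac{1}{1-1/\lambda}=\frac{1}{\lambda-1}$, and combining with Stirling's formula $\frac{z^z}{\Gamma(z)}=\frac{z^z}{(z-1)!}\sim \frac{z^z e^{z}}{z^{z}\sqrt{2\pi/z}}\cdot\frac{1}{z}$... more precisely $\Gamma(z)\sim \sqrt{2\pi/z}\,z^z e^{-z}$, so $\frac{z^z}{\Gamma(z)}\sim \frac{e^z}{\sqrt{2\pi/z}}=\frac{e^z\sqrt{z}}{\sqrt{2\pi}}$, we obtain $Q(z,\lambda z)\sim \frac{e^z\sqrt{z}}{\sqrt{2\pi}}\cdot e^{-z g(\lambda)}\cdot\frac{1}{z}\cdot\frac{1}{\lambda-1}=\frac{1}{\lambda-1}\frac{1}{\sqrt{2\pi z}}e^{-z(\lambda-1-\log\lambda)}$, since $g(\lambda)-1=\lambda-1-\log\lambda$. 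Case (1), $0<\lambda<1$, is symmetric: here one writes $Q(z,\lambda z)=1-\frac{z^z}{\Gamma(z)}\int_0^{\lambda}u^{z-1}e^{-zu}\,du$, the minimum of $g$ on $[0,\lambda]$ is at the right endpoint $u=\lambda$, the same substitution and the same Watson estimate apply, and $g'(\lambda)=1-1/\lambda<0$ now produces the factor $\frac{1}{1-\lambda}$; hence $Q(z,\lambda z)\to 1$ with $1-Q(z,\lambda z)\sim\frac{1}{1-\lambda}\frac{1}{\sqrt{2\pi z}}e^{-z(\lambda-1-\log\lambda)}$.

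Case (2), $\lambda=1$, is the delicate one and I expect it to be the main obstacle, since the critical point of $g$ sits exactly at the endpoint of integration and the simple Watson lemma above (which needs $f(0)\neq 0$ and only gives the $1/z$ term) is too crude to produce the constant $1/3$. The cleanest route is to use the known central value $Q(z,z)\to 1/2$ (which follows, e.g., from the fact that for the Poisson law with integer mean $z$ the median is $z$, or from a CLT argument since $\bdN'$ has mean and variance $z$) and then compute the correction. I would write $\frac12-Q(z,z)=\frac{z^z}{\Gamma(z)}\int_0^{1}u^{z-1}e^{-zu}\,du-\frac12$, substitute $u=1-w$ to center at the endpoint, expand $g(1-w)-1=\frac{w^2}{2}+\frac{w^3}{3}+\cdots$ in the exponent and $\frac{1}{u}=1+w+w^2+\cdots$ in the amplitude, and perform a half-Gaussian Laplace expansion: the leading $\int_0^\infty e^{-zw^2/2}\,dw=\frac12\sqrt{2\pi/z}$ term cancels against the $1/2$ after multiplying by $\frac{z^z}{\Gamma(z)}\sim\frac{e^z\sqrt z}{\sqrt{2\pi}}$, and the next term comes from the combination of the cubic correction $-\frac{z w^3}{3}$ in the exponent and the linear term $w$ from the amplitude (the linear amplitude term alone contributes $\int_0^\infty w e^{-zw^2/2}dw=1/z$ with coefficient $1$, the cubic exponent term contributes $\int_0^\infty (-\frac{zw^3}{3})e^{-zw^2/2}dw=-\frac{2}{3z}$; the Stirling expansion of $\frac{z^z}{\Gamma(z)}$ to the next order supplies the remaining piece). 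Carefully summing these $O(1/z)$ contributions yields $\frac12-Q(z,z)\sim\frac{1}{3\sqrt{2\pi z}}$. This bookkeeping of the three competing $1/z$-order contributions — amplitude, cubic phase, and the Stirling correction — is the only genuinely computational part; I would organize it as a second-order Watson/Laplace expansion rather than grinding each integral by hand.
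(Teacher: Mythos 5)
Your proposal is correct, but it takes a genuinely different route from the paper. The paper proves all three parts by quoting ready-made asymptotic expansions of the incomplete gamma function from the DLMF --- (8.11.6) for $\gamma(z,\lambda z)$ with $\lambda<1$, (8.11.7) for $\Gamma(z,\lambda z)$ with $\lambda>1$, and (8.11.12) for the transition value $\Gamma(z,z)$ --- and then divides by $\Gamma(z)$ via Stirling. You instead derive these expansions from scratch by Laplace's method on $\frac{z^z}{\Gamma(z)}\int u^{z-1}e^{-zu}\,du$ with phase $g(u)=u-\log u$: the two off-critical cases reduce to the endpoint estimate of Lemma \ref{lemma_simpleWatson} (and your constants $h(0)=1/(\lambda-1)$ and $1/(1-\lambda)$ check out against $g'(\lambda)=1-1/\lambda$), while the transition case $\lambda=1$ is handled by a second-order half-Gaussian expansion at the coalescing endpoint/critical point. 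Your bookkeeping there is right: the amplitude term $w$ contributes $1/z$, the cubic phase term $-zw^3/3$ contributes $-2/(3z)$, their sum $1/(3z)$ times $\frac{z^ze^{-z}}{\Gamma(z)}\sim\sqrt{z/(2\pi)}$ gives exactly $\frac{1}{3\sqrt{2\pi z}}$. What each approach buys: the paper's is short and delegates the delicate uniform-transition computation to a standard reference, while yours is self-contained and makes transparent why the same exponent $c(\lambda)=\lambda-1-\log\lambda$ and the same $1/\sqrt{2\pi z}$ prefactor appear in all regimes.

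One small correction to your case (2): the next-order Stirling term $\frac{1}{12z}$ in $\frac{z^z e^{-z}}{\Gamma(z)}$ multiplies the leading $\tfrac12$ and therefore only contributes at order $z^{-1}$; it is not needed to get the $z^{-1/2}$ coefficient, which comes entirely from the two Laplace contributions you computed. So the "remaining piece" you attribute to Stirling is actually negligible at the order of the asymptotic being proved; everything else stands.
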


\begin{proof}
(1) By \cite {DLMF} (8.11.6) and Stirling formula,for $\lambda <1$ we have
\begin{align*}
 1-Q(z,\lambda z) &= \frac{\gamma(z,\lambda z)}{\Gamma (z)}\\
&\sim \frac{z^z\lambda^ze^{-z\lambda}}{z! (1-\lambda)}\\
&\sim \frac{1}{1-\lambda} \frac{1}{\sqrt{2\pi z}} e^{-z(\lambda-1-\log \lambda )}
\end{align*}

(2) Also by \cite {DLMF} (8.11.12) and Stirling formula,
\begin{align*}
Q(z,z) &= \frac{z^{z-1} e^{-z} \sqrt{\frac{\pi z}{2}}}{(z-1)!}\\
&\sim \frac12 \frac{(z/e)^z \sqrt{2\pi z}}{z!}\\
&\to \frac12
\end{align*}
 and
 \begin{align*}
 \frac12-Q(z,z) &= \frac12-\frac{z^{z-1} e^{-z} \sqrt{\frac{\pi z}{2}} \left ( 1-\frac13 \sqrt{\frac{2}{\pi z}} +o(z^{-1/2})\right )}{(z-1)!}\\
&=\frac12-\frac12 \frac{\sqrt{2\pi z}(z/e)^z}{z!}  \left (1-\frac13 \sqrt{\frac{2}{\pi z}} +o(z^{-1/2})\right )\\
&=\frac12-\frac12 \frac{\sqrt{2\pi z}(z/e)^z}{\sqrt{2\pi z}(z/e)^z (1+\frac{1}{12 z} +o(z^{-1})) }  \left (1-\frac13 \sqrt{\frac{2}{\pi z}} +o(z^{-1/2})\right )\\
&=\frac12-\frac12  \left (1+\frac{1}{12 z} +o(z^{-1})\right ) \cdot \left (1-\frac13 \sqrt{\frac{2}{\pi z}} +o(z^{-1/2})\right )\\
&=\frac{1}{3\sqrt{2\pi z}}+o(z^{-1/2})
\end{align*}

(3) By \cite {DLMF} (8.11.7) and Stirling formula,for $\lambda >1$ we have
\begin{align*}
Q(z,\lambda z) &= \frac{\Gamma(z,\lambda z)}{\Gamma (z)}\\
&\sim \frac{(\lambda z)^z e^{-z\lambda}}{z! (\lambda-1)}\\
&\sim \frac{1}{\lambda-1} \frac{1}{\sqrt{2\pi z}} e^{-z(\lambda-1-\log \lambda )}
\end{align*}

\end{proof}

For $x>0$ we define $c(x)=x-1-\log x$, which is positive since the graph of $x\mapsto 1-x$ is the tangent at $x=1$
to the concave graph of the logarithm function. We denote $0<\lambda=q/p<1$.

\medskip

We have that the Nakamoto probability $P_{SN}(z)$ also decreases exponentially with $z$ as claimed 
by Nakamoto in \cite {N} without proof.

\begin{proposition} \label{prop_exp}
We have for $z\to +\infty$,
$$
P_{SN}(z)\sim \frac{e^{-z c(\lambda)}}{2}
$$
\end{proposition}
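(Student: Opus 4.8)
The plan is to read off the asymptotics directly from the closed-form formula for $P_{SN}(z)$ established in Theorem~\ref{thm_closedform2} together with the expansions of $Q(z,\lambda z)$ collected in Lemma~\ref{lemma_asymp}. Recall that
$$
P_{SN}(z)=P(z,1)=1-Q(z, zq/p) + \left(\frac{q}{p}\right)^{z} e^{ z\frac{p-q}{p}} Q(z, z),
$$
and that $\lambda = q/p<1$, $c(\lambda)=\lambda-1-\log\lambda>0$.

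First I would simplify the prefactor of the last term. Since $\frac{p-q}{p}=1-\lambda$, taking logarithms gives
$$
\left(\frac{q}{p}\right)^{z} e^{ z\frac{p-q}{p}} = e^{z\log\lambda + z(1-\lambda)} = e^{-z(\lambda-1-\log\lambda)} = e^{-zc(\lambda)},
$$
an exact identity, so the last term of $P_{SN}(z)$ equals $e^{-zc(\lambda)}\,Q(z,z)$. By Lemma~\ref{lemma_asymp}(2) we have $Q(z,z)\to \tfrac12$, hence this term is $\sim \tfrac12 e^{-zc(\lambda)}$.

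Next I would show the remaining contribution $1-Q(z,zq/p)=1-Q(z,\lambda z)$ is of strictly smaller order. By Lemma~\ref{lemma_asymp}(1), since $0<\lambda<1$,
$$
1-Q(z,\lambda z)\sim \frac{1}{1-\lambda}\,\frac{1}{\sqrt{2\pi z}}\,e^{-zc(\lambda)},
$$
which carries the same exponential factor $e^{-zc(\lambda)}$ but an extra decaying factor $\tfrac{1}{\sqrt{2\pi z}}\to 0$. Therefore $1-Q(z,\lambda z)=o\!\left(e^{-zc(\lambda)}\right)$, and adding the two pieces yields $P_{SN}(z)\sim \tfrac12 e^{-zc(\lambda)}$, as claimed.

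There is essentially no serious obstacle here: the argument is a direct substitution into the closed-form formula followed by two applications of Lemma~\ref{lemma_asymp}. The only point requiring minimal care is the order comparison in the last step — one must note that although both surviving terms contain the factor $e^{-zc(\lambda)}$, only the term coming from $Q(z,z)$ has a nonvanishing limit coefficient $\tfrac12$, while the term from $1-Q(z,\lambda z)$ is damped by $z^{-1/2}$ and is thus asymptotically negligible.
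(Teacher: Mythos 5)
Your proof is correct and follows essentially the same route as the paper: substitute $\kappa=1$ into the closed-form formula of Theorem~\ref{thm_closedform2}, observe that $(q/p)^z e^{z(1-\lambda)}=e^{-zc(\lambda)}$ exactly so that Lemma~\ref{lemma_asymp}(2) gives the dominant term $\tfrac12 e^{-zc(\lambda)}$, and use Lemma~\ref{lemma_asymp}(1) to see that $1-Q(z,\lambda z)=o(e^{-zc(\lambda)})$ because of the extra factor $1/\sqrt{2\pi z}$. Nothing is missing.
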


\begin{proof}
The result follows from the closed-form formula from Theorem \ref{thm_closedform2}, 
$$
P(z,\kappa)=1-Q(z, \kappa zq/p)+(q/p)^z e^{\kappa z \frac{p-q}{q}} Q(z, \kappa z) \ ,
$$
and then from points (1) and (2) of Lemma \ref{lemma_asymp},
$$
1-Q\left(z, \frac{q}{p} z\right )  =o\left (e^{-zc(q/p)} \right )  \ ,
$$
and 
$$
\left (\frac{q}{p}\right )^{z} e^{ z (1-\frac{q}{p})} Q(z, z)\sim \frac{1}{2} e^{-zc(q/p)} \ .
$$
\end{proof}

More generally, we have five different regimes for the asymptotics of $P(z,\kappa)$ for $0<\kappa<1$, 
$\kappa =1$, $1<\kappa<p/q$, $\kappa = p/q$ and $\kappa >p/q$.

\begin{proposition} \label{prop_asymp2}
We have for $z\to +\infty$,
\begin{enumerate}
  \item For $0<\kappa<1$, 
  $$
  P(z,\kappa) \sim \frac{1}{1-\kappa \lambda} \frac{1}{\sqrt{2\pi z}}
  e^{-zc(\kappa \lambda)} \ .
  $$
  \item For $\kappa=1$, 
  $$
  P(z,1)=P_{SN}(z)\sim \frac{1}{2} e^{-zc(\lambda)} \ .
  $$
  \item For $1<\kappa<p/q$, 
  $$
  P(z,\kappa)\sim \frac{\kappa (1-\lambda)}{(\kappa -1)(1-\kappa \lambda)} 
  \frac{1}{\sqrt{2\pi z}} e^{-zc(\kappa \lambda)} \ .
  $$
  \item For $\kappa=p/q$, $P(z, p/q)\to 1/2$ and 
  $$
  P(z, p/q)-1/2 \sim \frac{1}{2\pi z} \left (\frac13+\frac{q}{p-q}\right ) \ .
  $$
  \item For $p/q<\kappa$, $P(z,\kappa) \to 1$ and
  $$
  1-P(z,\kappa)\sim \frac{\kappa (1-\lambda)}{(\kappa -1)(\kappa \lambda-1)} 
  \frac{1}{\sqrt{2\pi z}} e^{-zc(\kappa \lambda)} \ .
  $$
\end{enumerate}
\end{proposition}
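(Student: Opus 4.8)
The plan is to read everything off the closed-form expression of Theorem~\ref{thm_closedform2},
$$
P(z,\kappa)=1-Q(z,\kappa z q/p)+\left(\frac{q}{p}\right)^{z}e^{\kappa z\frac{p-q}{p}}\,Q(z,\kappa z)\ ,
$$
by substituting its two occurrences of $Q(z,\cdot)$ into Lemma~\ref{lemma_asymp}. With $\lambda=q/p\in(0,1)$ as in the statement, the first $Q$ has ``scaling parameter'' $\kappa\lambda$ and the second has scaling parameter $\kappa$; since $\lambda<1$ we have $\kappa\lambda<1$ precisely when $\kappa<p/q$, which is exactly why the five regimes are cut at $\kappa=1$ (from the second $Q$) and at $\kappa=p/q$ (from the first $Q$). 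I would set $g(\kappa)=\log\lambda+\kappa(1-\lambda)$, so that $(q/p)^{z}e^{\kappa z(p-q)/p}=e^{zg(\kappa)}$, and record the one algebraic identity that makes the computation close up,
$$
g(\kappa)=c(\kappa)-c(\kappa\lambda)\ ,
$$
which follows at once by expanding $c(x)=x-1-\log x$ on both sides; in particular $g(1)=-c(\lambda)$ and $g(p/q)=c(p/q)$.

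First I would record the asymptotics of the two summands separately. For $1-Q(z,\kappa\lambda z)$: Lemma~\ref{lemma_asymp}(1) gives $\sim\frac{1}{1-\kappa\lambda}\frac{1}{\sqrt{2\pi z}}e^{-zc(\kappa\lambda)}$ when $\kappa<p/q$; Lemma~\ref{lemma_asymp}(2) gives $\tfrac12$ with a correction $\sim\frac{1}{3\sqrt{2\pi z}}$ when $\kappa=p/q$; and Lemma~\ref{lemma_asymp}(3) gives $1-\frac{1}{\kappa\lambda-1}\frac{1}{\sqrt{2\pi z}}e^{-zc(\kappa\lambda)}(1+o(1))$ when $\kappa>p/q$. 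For the second summand $e^{zg(\kappa)}Q(z,\kappa z)$: when $\kappa<1$, $Q(z,\kappa z)\to1$, so it is $\sim e^{zg(\kappa)}=(q/p)^{z}e^{\kappa z(p-q)/p}$; when $\kappa=1$, $Q(z,z)\to\tfrac12$, so it is $\sim\tfrac12 e^{-zc(\lambda)}$; and when $\kappa>1$, Lemma~\ref{lemma_asymp}(3) gives $Q(z,\kappa z)\sim\frac{1}{\kappa-1}\frac{1}{\sqrt{2\pi z}}e^{-zc(\kappa)}$, so by the identity it is $\sim\frac{1}{\kappa-1}\frac{1}{\sqrt{2\pi z}}e^{-zc(\kappa\lambda)}$.

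Then I would add the two summands, regime by regime. For $1<\kappa<p/q$ both are of size $\frac{1}{\sqrt{2\pi z}}e^{-zc(\kappa\lambda)}$ with constants $\frac{1}{1-\kappa\lambda}$ and $\frac{1}{\kappa-1}$, and $\frac{1}{1-\kappa\lambda}+\frac{1}{\kappa-1}=\frac{\kappa(1-\lambda)}{(\kappa-1)(1-\kappa\lambda)}$; for $\kappa>p/q$ the first summand is $1$ minus a term of that size while the second adds back, so $1-P(z,\kappa)\sim\left(\frac{1}{\kappa\lambda-1}-\frac{1}{\kappa-1}\right)\frac{1}{\sqrt{2\pi z}}e^{-zc(\kappa\lambda)}=\frac{\kappa(1-\lambda)}{(\kappa-1)(\kappa\lambda-1)}\frac{1}{\sqrt{2\pi z}}e^{-zc(\kappa\lambda)}$; at $\kappa=1$ the second summand ($\sim\tfrac12 e^{-zc(\lambda)}$) swamps the first (of size $z^{-1/2}e^{-zc(\lambda)}$); and at $\kappa=p/q$ the first contributes $\tfrac12+\frac{1}{3\sqrt{2\pi z}}$ while the second is $\frac{\lambda}{1-\lambda}\frac{1}{\sqrt{2\pi z}}=\frac{q}{p-q}\frac{1}{\sqrt{2\pi z}}$, so $P(z,p/q)-\tfrac12\sim\left(\tfrac13+\frac{q}{p-q}\right)\frac{1}{\sqrt{2\pi z}}$. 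Finally, for $0<\kappa<1$ the comparison of exponents is $g(\kappa)-(-c(\kappa\lambda))=c(\kappa)>0$, so the second summand dominates and $P(z,\kappa)\sim(q/p)^{z}e^{\kappa z(p-q)/p}$.

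The step I expect to be the genuine obstacle is exactly this last bit of bookkeeping in the regimes $\kappa\le1$, where the two summands carry different exponential rates and one must be sure which one wins: the sign of $c(\kappa)$ decides it, and since $c$ vanishes only at $\kappa=1$, for every $0<\kappa<1$ the ``catch-up'' summand $(q/p)^{z}e^{\kappa z(p-q)/p}Q(z,\kappa z)$ is the leading term and the ``already ahead'' summand $1-Q(z,\kappa\lambda z)$ is exponentially smaller; it is only at $\kappa=1$ that $Q(z,\kappa z)$ stops tending to $1$ and drops to $\tfrac12$, which is what produces the clean factor $\tfrac12$ there. Everything else reduces to keeping the $z^{-1/2}$ prefactors straight, which is most delicate at the threshold $\kappa=p/q$: there the $O(z^{-1/2})$ correction to $Q(z,z)=\tfrac12$ from Lemma~\ref{lemma_asymp}(2) and the $Q(z,z/\lambda)$ asymptotics both sit at order $z^{-1/2}$ and must be added.
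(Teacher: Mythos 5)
Your treatment of cases (2)--(5) follows essentially the same route as the paper: substitute the two occurrences of $Q$ into Lemma \ref{lemma_asymp}, use the identity $\log\lambda+\kappa(1-\lambda)=c(\kappa)-c(\kappa\lambda)$ to put both summands on the common exponential scale $e^{-zc(\kappa\lambda)}$, and add the constants. Your constants agree with the paper's in (3) and (5), and in (4) you obtain $\bigl(\tfrac13+\tfrac{q}{p-q}\bigr)\tfrac{1}{\sqrt{2\pi z}}$, which is exactly what the paper's own proof derives; the $\tfrac{1}{2\pi z}$ in the displayed statement of (4) is a typo in the statement, not an error on your part.

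The real issue is case (1). Your argument there is internally sound, but it proves $P(z,\kappa)\sim(q/p)^z e^{\kappa z(p-q)/p}=e^{-z(c(\kappa\lambda)-c(\kappa))}$, which is \emph{not} the asymptotic claimed: since $c(\kappa)>0$ for $\kappa\neq 1$, your leading term is exponentially larger than the claimed $\frac{1}{1-\kappa\lambda}\frac{1}{\sqrt{2\pi z}}e^{-zc(\kappa\lambda)}$. So as a proof of the statement, case (1) fails --- but the fault lies with the statement. The paper's proof of (1) computes the ratio of the second summand to the first and finds it $\sim(1-\kappa\lambda)\sqrt{2\pi z}\,e^{-z(1-\kappa-\log\kappa)}=o(1)$; redoing that algebra, the exponent is actually $-z(1-\kappa+\log\kappa)=+zc(\kappa)$, so the ratio tends to $+\infty$ and the second summand dominates, exactly as you found. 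A numerical check (e.g.\ $q=0.1$, $\kappa=1/2$, $z=6$: the two summands are about $1.1\cdot 10^{-6}$ and $2.5\cdot 10^{-5}$) confirms your version. You should therefore state explicitly that you are \emph{correcting} part (1) rather than proving it, and, to make the correction airtight, record that the neglected summand $1-Q(z,\kappa\lambda z)$ is $O\bigl(z^{-1/2}e^{-zc(\kappa)}\bigr)$ times the dominant one, so the equivalence survives the addition.
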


\begin{proof}
(1) If $\kappa <1$ then also $\kappa q/p <1$, and
$$
1-Q(z,\kappa z q/p) \sim \frac{1}{1-\kappa q/p} \frac{1}{\sqrt{2\pi z}} e^{-z(\kappa q/p -1-\log(\kappa q/p))} \ ,
$$
and
\begin{align*}
 (q/p)^z e^{\kappa z \frac{p-q}{q}} &= e^{-z(\kappa q/p -1-\log(\kappa q/p))}  \\
&=e^{-z(1-\kappa)(1-q/p)} \cdot e^{-z(q/p -1-\log(q/p))} \ ,
\end{align*}
and then
\begin{align*}
 \frac{(q/p)^z e^{\kappa z \frac{p-q}{q}}}{1-Q(z,\kappa z q/p) } &\sim (1-\kappa q/p)\cdot \sqrt{2\pi z} \cdot e^{-z(1-\kappa)(1-q/p)} \cdot 
e^{-z(q/p -1-\log(q/p)-(\kappa q/p -1-\log(\kappa q/p)))} \\
&\sim (1-\kappa q/p)\cdot \sqrt{2\pi z} \cdot e^{-z(1-\kappa)(1-q/p)}\cdot e^{-z(1-\kappa)q/p} \cdot e^{-z\log \kappa} \\
&\sim (1-\kappa q/p)\cdot \sqrt{2\pi z} \cdot e^{-z(1-\kappa-\log \kappa)} =o(1) \ .
\end{align*}
Since $Q(z, \kappa z)\to 1$ we have,
\begin{align*}
P(z,\kappa)&= 1-Q(z, \kappa zq/p)+ (q/p)^z e^{\kappa z \frac{p-q}{q}} Q(z, \kappa z) \\
&\sim 1-Q(z, \kappa zq/p) \\
&\sim \frac{1}{(1-\kappa q/p) \sqrt{2\pi z}} \cdot e^{-z(\kappa q/p-1-\log (\kappa q/p))} \ .
\end{align*}

\medskip

(2) This was proved in Proposition \ref{prop_exp}.

\medskip

(3) When $1<\kappa < p/q$ then by Lemma \ref{lemma_asymp},
$$
(q/p)^z e^{\kappa z \frac{p-q}{q}} Q(z, \kappa z) \sim \frac{1}{(\kappa -1) \sqrt{2\pi z}} \cdot e^{-z(\kappa q/p-1-\log (\kappa q/p))} \ ,
$$
and
$$
1-Q(z, \kappa zq/p) \sim \frac{1}{(1-\kappa q/p) \sqrt{2\pi z}} \cdot e^{-z(\kappa q/p-1-\log (\kappa q/p))} \ .
$$
So we have
\begin{align*}
 P(z,\kappa) &\sim \left ( \frac{1}{1-\kappa q/p} +\frac{1}{\kappa-1}\right ) \cdot \frac{1}{\sqrt{2\pi z}} \cdot  e^{-z(\kappa q/p-1-\log (\kappa q/p))} \\
&\sim  \frac{\kappa (1-q/p)}{(\kappa-1)(1-\kappa q/p)}\, \frac{1}{\sqrt{2\pi z}} \cdot  e^{-z(\kappa q/p-1-\log (\kappa q/p))} \ .
\end{align*}

\medskip
(4) The previous asymptotic at the start of the proof of (3) is also valid for $1<\kappa=p/q$ and gives
$$
(q/p)^z e^{\kappa z \frac{p-q}{q}} Q(z, \kappa z) \sim \frac{q}{p-q}\, \frac{1}{\sqrt{2\pi z}}  \ ,
$$
and by Lemma \ref{lemma_asymp},
\begin{align*}
 P(z,p/q) &= 1-Q(z,z) + (q/p)^z e^{\kappa z \frac{p-q}{q}} Q(z, \kappa z) \\
&=\frac12 + \frac{1}{\sqrt{2\pi z}}\, \left ( \frac13 +\frac{q}{p-q}\right ) +o(1/\sqrt{z}) \ .
\end{align*}

\medskip

(5) For $\kappa > p/q$ we use again the same asymptotic of (3) to get 
$$
Q(z, \kappa z q/p) \sim \frac{1}{\kappa q/p -1}\, \frac{1}{\sqrt{2\pi z}} \, e^{-z(\kappa q/p-1-\log (\kappa q/p))} \ ,
$$
and again
$$
(q/p)^z e^{\kappa z \frac{p-q}{q}} Q(z, \kappa z) \sim \frac{1}{(\kappa -1) \sqrt{2\pi z}} \, e^{-z(\kappa q/p-1-\log (\kappa q/p))} \ ,
$$
so
\begin{align*}
1-P(z,\kappa) &\sim \left ( \frac{1}{\kappa q/p -1} - \frac{1}{\kappa -1} \right ) \sqrt{2\pi z} \, e^{-z(\kappa q/p-1-\log (\kappa q/p))} \\
&\sim \frac{\kappa (1-q/p)}{(\kappa q/p-1)(\kappa -1)} \sqrt{2\pi z} \, e^{-z(\kappa q/p-1-\log (\kappa q/p))} \ .
\end{align*}

\end{proof}

\section{Comparing asymptotics of $P(z)$ and $P_{SN}(z)$.}

We have an asymptotic comparison,
\begin{proposition}
 We have for $z\to +\infty$,
$$
P_{SN}(z) \prec P(z) \ . 
$$
\end{proposition}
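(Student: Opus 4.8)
The plan is to compare the two asymptotic formulas already established. From Proposition \ref{prop_asymp} we have $P(z) \sim \frac{s^z}{\sqrt{\pi(1-s)z}}$ with $s=4pq<1$, and from Proposition \ref{prop_exp} we have $P_{SN}(z)\sim \frac{1}{2}e^{-zc(\lambda)}$ with $\lambda=q/p$ and $c(\lambda)=\lambda-1-\log\lambda>0$. Since $P_{SN}(z)\prec P(z)$ means $P_{SN}(z)/P(z)\to 0$, it suffices to show that the dominant exponential factor of $P_{SN}(z)$ decays strictly faster than that of $P(z)$; the polynomial prefactors $\frac{1}{2}$ versus $\frac{1}{\sqrt{\pi(1-s)z}}$ are then irrelevant (in fact $P(z)$'s prefactor only decays polynomially, which makes the comparison even easier). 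Concretely, the ratio is, up to a factor tending polynomially,
$$
\frac{P_{SN}(z)}{P(z)} \asymp \sqrt{z}\, e^{-zc(\lambda)} s^{-z} = \sqrt{z}\, e^{-z\left(c(\lambda)+\log s\right)},
$$
so the whole claim reduces to the single inequality $c(\lambda) + \log s > 0$, equivalently $-\log s > c(\lambda)$, which is exactly the inequality asserted (without proof) right after the second theorem in the introduction.

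So the real content is proving $-\log s > c(\lambda)$. Write $s = 4pq = 4p(1-p)$ and $\lambda = q/p = (1-p)/p$ for $p\in(1/2,1)$. Expanding, $-\log s = -\log 4 - \log p - \log(1-p)$, while $c(\lambda) = \lambda - 1 - \log\lambda = \frac{1-p}{p} - 1 - \log\frac{1-p}{p} = \frac{1-2p}{p} - \log(1-p) + \log p$. Subtracting, the $\log(1-p)$ terms partly cancel and one is left with an inequality in the single variable $p$; I would define $g(p) = -\log s - c(\lambda)$ and show $g(p)>0$ on $(1/2,1)$. The natural route is to check $g(1/2) = 0$ (at $p=1/2$: $s=1$, $\lambda=1$, so both sides vanish) and then show $g'(p)>0$ for $p\in(1/2,1)$, or equivalently analyze the sign of $g'$ directly. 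After simplification $g'(p)$ should reduce to a rational expression whose positivity on $(1/2,1)$ is elementary; this monotonicity-from-a-common-boundary-value argument is the cleanest way to avoid messy estimates.

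The main obstacle is purely computational: getting $g'(p)$ into a manifestly-positive form. A slicker alternative that sidesteps differentiation: substitute $\lambda = q/p$ and note $s = 4pq = \frac{4\lambda}{(1+\lambda)^2}$ (since $p = 1/(1+\lambda)$, $q=\lambda/(1+\lambda)$), so $-\log s = \log\frac{(1+\lambda)^2}{4\lambda} = 2\log(1+\lambda) - \log 4 - \log\lambda$. The desired inequality $-\log s > \lambda - 1 - \log\lambda$ becomes, after the $\log\lambda$ terms cancel,
$$
2\log(1+\lambda) - \log 4 > \lambda - 1, \qquad \text{i.e.} \qquad 2\log\frac{1+\lambda}{2} > \lambda - 1,
$$
for $\lambda\in(0,1)$. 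Setting $u = \frac{1+\lambda}{2}\in(1/2,1)$ this is $2\log u > 2u - 2$, i.e. $\log u > u-1$, which holds for all $u\neq 1$ by strict concavity of $\log$ (the tangent line inequality, already invoked in the paper to define $c$). This last form is the proof I would actually write: it is two lines and uses only a fact the paper has already stated.

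\begin{proof}
By Proposition \ref{prop_asymp}, $P(z)\sim \frac{s^z}{\sqrt{\pi(1-s)z}}$ with $s=4pq$, and by Proposition \ref{prop_exp}, $P_{SN}(z)\sim \frac12 e^{-zc(\lambda)}$ with $\lambda=q/p$. Hence
$$
\frac{P_{SN}(z)}{P(z)}\sim \frac{\sqrt{\pi(1-s)z}}{2}\, e^{-z(c(\lambda)+\log s)} \ ,
$$
so it suffices to show $c(\lambda)+\log s>0$, i.e. $-\log s>c(\lambda)$. Writing $p=1/(1+\lambda)$, $q=\lambda/(1+\lambda)$, we get $s=4pq=\frac{4\lambda}{(1+\lambda)^2}$, so $-\log s=2\log(1+\lambda)-\log 4-\log\lambda$. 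Since $c(\lambda)=\lambda-1-\log\lambda$, the inequality $-\log s>c(\lambda)$ is equivalent, after cancelling $\log\lambda$, to $2\log\frac{1+\lambda}{2}>\lambda-1$. Setting $u=\frac{1+\lambda}{2}$, with $u\in(1/2,1)$ since $0<\lambda<1$, this reads $\log u>u-1$, which holds because $u\neq 1$ and the graph of $x\mapsto x-1$ is the tangent at $x=1$ to the strictly concave graph of $\log$. Therefore $c(\lambda)+\log s>0$ and $P_{SN}(z)/P(z)\to 0$.
\end{proof}
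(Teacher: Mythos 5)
Your overall strategy --- compare the two established asymptotics and reduce the claim to a single inequality between the exponential rates --- is sound, and your algebraic reduction via $s=\frac{4\lambda}{(1+\lambda)^2}$ is the right one. However, the proof as written contains two compensating sign errors, so that an intermediate assertion is false. The inequality you actually need is $c(\lambda)+\log s>0$, which is $c(\lambda)>-\log s$, i.e. $-\log s<c(\lambda)$ --- \emph{not} $-\log s>c(\lambda)$ as your ``equivalently'' claims. (You were likely misled by the paper's introduction, which states $-\log s>c(\lambda)$ with the sign reversed; the body of the paper proves $0<\log\frac{1}{s}<c(\lambda)$, consistent with the numerics, e.g. $q=0.1$ gives $-\log s\approx 1.02$ and $c(\lambda)\approx 1.31$.) Your substitution $u=\frac{1+\lambda}{2}$ then correctly converts the comparison of $-\log s$ with $c(\lambda)$ into a comparison of $\log u$ with $u-1$, but the tangent-line fact you invoke gives $\log u<u-1$ for $u\neq 1$, not $\log u>u-1$: the concave graph of $\log$ lies \emph{below} its tangent at $1$. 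Your displayed claim $\log u>u-1$ for $u\in(1/2,1)$ is false (e.g. $\log\frac12\approx -0.69<-\frac12$). The two errors cancel: the correct chain is $c(\lambda)+\log s>0\iff -\log s<c(\lambda)\iff 2\log\frac{1+\lambda}{2}<\lambda-1\iff \log u<u-1$, and the last inequality is exactly the tangent-line inequality. Reversing every inequality from ``$-\log s>c(\lambda)$'' onward yields a correct proof.

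Once corrected, your argument coincides with the paper's. The paper establishes the identity $c(\lambda)-\log\frac{1}{s}=2\left[\frac{1}{2p}-1-\log\frac{1}{2p}\right]=2\,c\!\left(\frac{1}{2p}\right)>0$, and since $\frac{1}{2p}=\frac{1+\lambda}{2}=u$, the positivity of $c(u)$ is precisely the inequality $\log u<u-1$. The only genuine difference is that you parametrize by $\lambda$ where the paper parametrizes by $p$; the reduction and the final one-line concavity argument are the same.
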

\begin{proof}
Note that
$$
 \frac{q}{p} - 1 - \log \left(  \frac{q}{p} \right) - \log
    \left(\frac{1}{4pq}\right ) =  2 \left[ \frac{1}{2 p} - 1 - \log \left( \frac{1}{2 p} \right)
    \right] >0
$$

So with $s=4pq<1$ we have
$$
0<\log \frac{1}{s}<\frac{q}{p}-1-\log \frac{q}{p}=c(q/p)=c(\lambda) \ ,
$$
and for $z$ large
$$
P_{SN}(z) <e^{-z c(\lambda)} \prec \frac{s^{z}}{\sqrt{\pi (1-s) z}} \sim P(z) \ .
$$
\end{proof}

As we will see later we can be more explicit about the inequality between $P_{SN}(z)$ and $P(z)$.

\section{Recovering $P(z)$ from $P(z,\kappa)$.}

We have seen above that $P_{SN}(z)$ can be recover from $P(z, \kappa)$ by taking the value at $\kappa=1$.
It turns out that we can also recover $P(z)$ as a weighted average on $\kappa$ of $P(z, \kappa)$.

\begin{theorem*}
We have 
$$
P(z)=\int_0^{+\infty } P(z, \kappa) \, d\rho_z(\kappa)
$$
with the density function
$$
d\rho_z(\kappa) = \frac{z^z}{(z-1)!}\kappa^{z-1} e^{-z\kappa}\, d\kappa \ .
$$ 
\end{theorem*}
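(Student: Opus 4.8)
The plan is to recognize the right-hand side as a disguised instance of the tower property (law of total probability), conditioning on the honest miners' cumulative validation time $\bdS_z$.

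First I would recall that the actual validation time of the first $z$ blocks is $\tau_1 = \bdS_z = \bdT_1 + \cdots + \bdT_z$, a sum of $z$ i.i.d.\ exponential variables of parameter $\alpha$, hence gamma-distributed with density $f_{\bdS_z}(t) = \frac{\alpha^z}{(z-1)!}\, t^{z-1} e^{-\alpha t}$ as established earlier. Since $t_0 = 1/\alpha$, the dimensionless parameter is the random variable $\kappa = \bdS_z/(z t_0) = \alpha\,\bdS_z/z$. A one-line change of variables identifies its distribution: substituting $t = z t_0\kappa$, $dt = z t_0\, d\kappa$ into $f_{\bdS_z}(t)\,dt$ and using $\alpha t_0 = 1$ gives exactly $d\rho_z(\kappa) = \frac{z^z}{(z-1)!}\kappa^{z-1} e^{-z\kappa}\, d\kappa$. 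Thus $d\rho_z$ is precisely the law of the observed value of $\kappa$.

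Second I would argue that $P(z,\kappa)$ is exactly the probability of success of the attack conditioned on $\{\bdS_z = z t_0\kappa\}$. Indeed, the derivation of $P(z,\kappa)$ uses only that, on this event, the number $\bdN'(\tau_1)$ of attacker blocks mined up to time $\tau_1$ is Poisson with parameter $\alpha'\tau_1 = \kappa z q/p$ --- legitimate because $\bdN'$ is independent of $\bdS_z$ --- together with the catch-up probabilities $q_{z-k} = (q/p)^{z-k}$, which by memorylessness do not depend on how long the first $z$ honest blocks took. Since $\bdS_z$ admits a density this regular conditional probability is well-defined, and the identity
$$
P(z) = \PP[\bdN'(\bdS_z)\geq z] + \sum_{k=0}^{z-1}\PP[\bdN'(\bdS_z)=k]\, q_{z-k}
$$
obtained earlier for $P(z)$ is, by the tower property, equal to $\EE\big[\PP[\text{success}\mid \bdS_z]\big] = \int_0^{+\infty} P\big(z,\, t/(z t_0)\big) f_{\bdS_z}(t)\, dt$. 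Changing variables $\kappa = t/(z t_0)$ as above turns this into $\int_0^{+\infty} P(z,\kappa)\, d\rho_z(\kappa)$, which is the claim.

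I expect the only delicate point to be the second step: making precise that $P(z,\kappa)$ is the conditional success probability given the value of $\bdS_z$, which rests on the independence of $\bdN'$ from $\bdS_z$ and on memorylessness in the catch-up phase. As an alternative, one can bypass this bookkeeping and verify the identity by direct term-by-term integration: writing $P(z,\kappa) = 1 - \sum_{k=0}^{z-1}\bigl(1 - (q/p)^{z-k}\bigr)\frac{(\kappa z q/p)^k}{k!} e^{-\kappa z q/p}$ and integrating against $d\rho_z(\kappa)$, each summand becomes a gamma integral $\int_0^{+\infty}\kappa^{z-1+k} e^{-\kappa z/p}\, d\kappa$ (using $1 + q/p = 1/p$), which evaluates to $p^z q^k\binom{z+k-1}{k}$ up to the factor $(q/p)^{z-k}$; summing reproduces $P(z) = 1 - \sum_{k=0}^{z-1}(p^z q^k - q^z p^k)\binom{z+k-1}{k}$. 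This is elementary but longer, so I would write up the probabilistic argument, which also explains why the identity holds.
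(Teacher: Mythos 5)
Your proposal is correct and follows essentially the same route as the paper's (second, conceptual) proof: identify $d\rho_z$ as the law of $\bdkappa=\bdS_z/(zt_0)$, recognize $P(z,\kappa)$ as the conditional success probability given $\bdS_z$ (via independence of $\bdN'$ from $\bdS_z$), and conclude by the tower property. Your fallback term-by-term gamma-integral computation is also the paper's first proof, stated there as a lemma.
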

We check that 
$$
\int_0^{+\infty } d\rho_z(\kappa )=1 \ .
$$
We can write 
$$
P(z)=1-\sum_{k=0}^{z-1} f_k(\kappa) \ ,
$$
where
$$
f_k(\kappa )=\left (1-\left ( \frac{q}{p}\right )^{z-k}\right ) \frac{(zq/p)^k}{k!} \kappa^k e^{\frac{zq}{p}\kappa} \ .
$$
Then the Theorem follows from a direct computation,
\begin{lemma}
For $k\geq 0$, we have
$$
\int_0^{+\infty} f_k(\kappa) \, d\rho_z(\kappa) =(p^zq^k-q^zp^k) \binom{k+z-1}{k} \ .
$$
\end{lemma}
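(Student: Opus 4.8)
The plan is to establish the Lemma by a direct computation and then deduce the Theorem by interchanging a finite sum with the integral. Substituting the explicit expressions for $f_k$ and $d\rho_z$, and using $p+q=1$ so that $z\bigl(1+\tfrac{q}{p}\bigr)=\tfrac{z}{p}$, one gets
$$
f_k(\kappa)\,d\rho_z(\kappa)=\left(1-\Bigl(\tfrac{q}{p}\Bigr)^{z-k}\right)\frac{(zq/p)^k}{k!}\cdot\frac{z^z}{(z-1)!}\,\kappa^{k+z-1}e^{-z\kappa/p}\,d\kappa \ .
$$
(Here one should read the exponential in $f_k$ as $e^{-\frac{zq}{p}\kappa}$, i.e. $\sum_{k\ge0}f_k(\kappa)=1-P(z,\kappa)$; with the wrong sign the integral would not converge.) The only analytic ingredient is the elementary Gamma integral
$$
\int_0^{+\infty}\kappa^{k+z-1}e^{-z\kappa/p}\,d\kappa=(k+z-1)!\left(\frac{p}{z}\right)^{k+z} \ .
$$

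First I would carry out this substitution and evaluation, then simplify: the powers of $z$ cancel entirely, since $z^k\,z^z/z^{k+z}=1$; the surviving power of $p$ is $p^{k+z}/p^k=p^z$; and the factorials recombine as $(k+z-1)!/(k!\,(z-1)!)=\binom{k+z-1}{k}$. This leaves
$$
\int_0^{+\infty} f_k(\kappa)\,d\rho_z(\kappa)=\left(1-\Bigl(\tfrac{q}{p}\Bigr)^{z-k}\right)p^z q^k\binom{k+z-1}{k} \ ,
$$
and the identity $p^z q^k\,(q/p)^{z-k}=p^k q^z$ turns this into $(p^z q^k-q^z p^k)\binom{k+z-1}{k}$, which is the claim. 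The same Gamma integral with $k=0$ (and without the bracketed factor) gives $\int_0^{+\infty} d\rho_z(\kappa)=\frac{z^z}{(z-1)!}\cdot\frac{(z-1)!}{z^z}=1$, the asserted normalization.

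Finally, since the sum over $k$ is finite I may freely interchange it with the integral:
$$
\int_0^{+\infty}P(z,\kappa)\,d\rho_z(\kappa)=\int_0^{+\infty}\Bigl(1-\sum_{k=0}^{z-1}f_k(\kappa)\Bigr)d\rho_z(\kappa)=1-\sum_{k=0}^{z-1}(p^z q^k-q^z p^k)\binom{k+z-1}{k}=P(z) \ ,
$$
the last equality being the closed form for $P(z)$ from the Proposition on the probability of success of the attackers. There is no genuine obstacle in this argument; it is pure bookkeeping of exponents, and the single point demanding care is the sign convention in $f_k$ (which is exactly what makes both the convergence and the identity $\sum_k f_k=1-P(z,\kappa)$ work).
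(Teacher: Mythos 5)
Your computation is correct and is precisely the ``direct computation'' the paper asserts for this Lemma without writing out: the reduction to the Gamma integral $\int_0^{+\infty}\kappa^{k+z-1}e^{-z\kappa/p}\,d\kappa=(k+z-1)!\left(p/z\right)^{k+z}$, the cancellation of the powers of $z$ and $p$, and the identity $p^zq^k(q/p)^{z-k}=q^zp^k$ all check, and the finite interchange of sum and integral then yields the Theorem exactly as in the paper's first argument (the paper's second, ``more conceptual'' proof instead conditions on $\bdS_z$). You are also right that the exponential in the paper's displayed $f_k$ carries a sign typo and should read $e^{-\frac{zq}{p}\kappa}$.
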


We give a second more conceptual proof.

\begin{proof}
Consider the random variable
$$
\bdkappa=\frac{p}{z\tau_0} \bdS_z \ .
$$
We have seen above that $\bdS_z \sim \Gamma (z , \alpha )$ so 
$\bdkappa \sim \Gamma(z, \alpha \frac{z\tau_0}{p})=\Gamma (z, z)$. So the density $d\rho_z$ is the 
distribution of $\bdkappa$. It is enough to prove that
$$
P(z)=\EE \left[P(z,\bdkappa )\right] \ .
$$
We have 
\begin{align*}
 P(z) &=\PP[\bdN'(\bdS_z)\geq z] + \sum_{k=0}^{z-1} \PP [\bdN'(\bdS_z)=k]\, .\, q_{z-k} \\
 &= 1- \sum_{k=0}^{z-1} (1-q_{z-k}) \PP[\bdN'(\bdS_z)=k] \ .
\end{align*}
And by conditioning by $\bdS_z$ we get
\begin{align*}
 P(z) &= 1- \sum_{k=0}^{z-1} (1-q_{z-k}) \EE[\PP[\bdN'(\bdS_z)=k|\bdS_z]] \\
 &=1- \EE \left [\sum_{k=0}^{z-1} \frac{(\alpha' \bdS_z )^k}{k!} e^{-\alpha' \bdS_z}\right ]+
 \left ( \frac{q}{p}\right )^z \, \EE\left[ e^{\alpha' \frac{p-q}{q} \bdS_z} \, \sum_{k=0}^{z-1}
 \frac{\left (\frac{\alpha' p}{q} \bdS_z \right )^k}{k!}   e^{-\frac{\alpha'p}{q} \bdS_z} \right ] \\
 &=\EE\left[ 1-Q\left (z, \frac{zq}{p} \bdkappa \right ) + \left ( \frac{q}{p}\right )^z 
 e^{z\left (1-\frac{q}{p}\right ) \bdkappa} Q(z, z\bdkappa ) \right ] \\
 &=\EE \left[P(z,\bdkappa )\right] \, ,
\end{align*}
since $\PP[\bdN'(\bdS_z)=k|\bdS_z]=\frac{(\alpha' \bdS_z)^k}{k!} e^{\alpha' \bdS_z}$, $q_{z-k}=(q/p)^{z-k}$,
and 
$$
Q(z,x)=\sum_{k=0}^{z-1} \frac{x^k}{k!}e^{-x} \ .
$$
\end{proof}
We also note that $\EE[\bdkappa]=1$.
\bigskip

\section{Range of $\kappa$.}

The probability to observe a deviation greater than $\kappa$ is $\PP
[\bdkappa > \kappa]$ with $\bdkappa= \frac{p}{z \tau_0}
\bdS_z$. We have that $\bdkappa$ follows a $\Gamma$-distribution, $\bdkappa \sim \Gamma (z, z)$,  so
\begin{align*}
  \PP [\bdkappa> \kappa] & =  \frac{1}{\Gamma (z)} 
  \int_{\kappa}^{+ \infty} z^z t^{z - 1} e^{- zt} \, dt\\
  & =  \frac{1}{\Gamma (z)}  \int_{\kappa z}^{+ \infty} t^{z - 1} e^{-
  t} \, dt\\
  & =  \frac{\Gamma (z, \kappa z)}{\Gamma (z)}\\
  & =  Q (z, \kappa z) \ .
\end{align*}
Then, by Lemma \ref{lemma_asymp}, $\PP [\bdkappa> \kappa] \sim
\frac{1}{\kappa - 1}  \frac{1}{\sqrt{2 \pi z}} e^{- zc (\kappa)}$ for
$\kappa > 1$. Note that this probability does not depend on $p$. For $z = 6$, we have
$\PP [\bdkappa> 4] \approx 3 \cdot 10^{- 6}$ and for $z = 10$,
$\PP [\bdkappa> 4] \approx 4 \cdot 10^{- 9}$. So, in practice, the probability to
have $\bdkappa> 4$ is very unlikely. Below, we have represented the
graph of $\kappa \longmapsto P (z, \kappa)$ for different values of $z$ ($q =
0.1$) and $0 < \kappa < 4$.

\begin{center}
\includegraphics[scale=0.5]{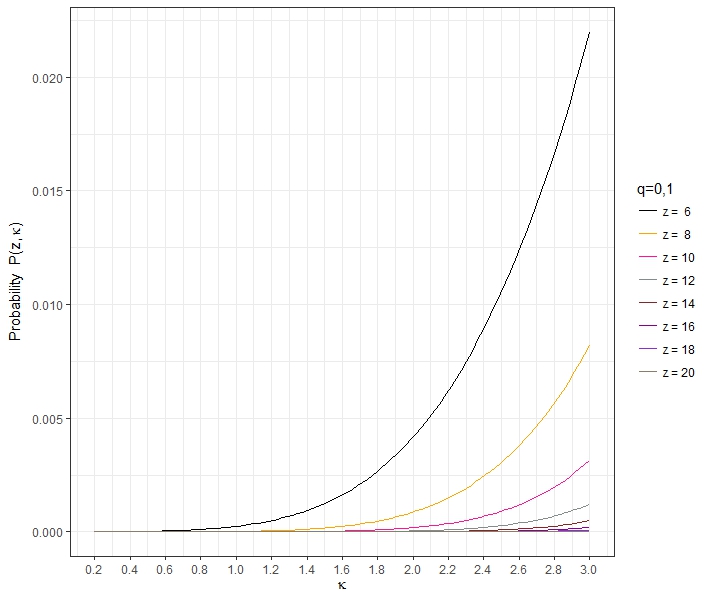} \\
{\footnotesize Figure 4. Probability $P(z,\kappa)$ as a function of $\kappa$}
\end{center}

We see that $\kappa \longmapsto P (z, \kappa)$ is convex in the range of values of $\kappa$ considered. We study 
the convexity in more detail in the next section.

\section{Comparing $P_{SN}(z)$ and $P(z)$.}

Now we study the convexity of $\kappa \mapsto P(z,\kappa )$.
Recall that $\lambda=q/p <1$. From Theorem \ref{thm_closedform2} we have
$$
P(z,\kappa)= 1 -Q(z,z\lambda \kappa)+\lambda^z e^{z(1-\lambda) \kappa} Q(z, z \kappa) \ .
$$
Since
$$
\Gamma (z) \partial_2 Q(z,x) = - x^{z-1} e^{-x} \ ,
$$
we get, after some cancellations,
$$
\Gamma (z) \, \partial_2 P(z,\kappa ) = \lambda^z z (1-\lambda)e^{z(1-\lambda) \kappa} \Gamma (z, z\kappa )\ .
$$
We observe that $\partial_2 P(z,\kappa ) >0$, so $P(z,\kappa )$ is an increasing function 
of $\kappa$ as expected. For the second derivative we have
\begin{align*}
\Gamma (z) \, \partial^2_2 P(z,\kappa ) &= \lambda^z z^2 (1-\lambda)e^{z(1-\lambda) \kappa} 
\left [(1-\lambda) \Gamma (z, z\kappa )- (z \kappa)^{z-1} e^{-\kappa z}\right ]\\
&=\lambda^z z (1-\lambda)e^{-\lambda \kappa z} (z \kappa)^{z} 
\left [(1-\lambda) Q (z, z\kappa ) z! e^{\kappa z}(z \kappa)^{-z}- \kappa^{-1} \right ] \ .
\end{align*}
Therefore we study the sign of
\begin{align*}
g_{\lambda, z} (\kappa) &=(1-\lambda) Q (z, z\kappa )z! e^{\kappa z}(z \kappa)^{-z}- \kappa^{-1}  \\
&=(1-\lambda) \sum_{k=0}^{z-1} \frac{z!}{z^{z-k} k!} \frac{1}{\kappa^{z-k}} -\kappa^{-1} \\
&= \frac{1-\lambda}{\kappa} \left ( \left(1-\frac{1}{z}\right ) \frac{1}{\kappa} +  \left(1-\frac{1}{z}\right )
\left(1-\frac{2}{z}\right ) \frac{1}{\kappa^2}+\ldots \right ) -\frac{\lambda}{\kappa}
\end{align*}

For $z=1$ we have 
$$
g_{\lambda, 1}(\kappa)=-\lambda/\kappa <0 \ ,
$$
therefore $\kappa \mapsto P(1, \kappa)$ is a concave function and by Jensen's inequality 
 $$
 P(1)=\int_0^{+\infty} P(z,\kappa ) \, d\rho_1(\tau) \leq P(1,\bar \kappa)=P(1,1)=P_{SN}(1)\ .
 $$
\begin{corollary}
 We have (for all $0<q<1/2$)
 $$
 P(1)\leq P_{SN}(1) \ .
 $$
\end{corollary}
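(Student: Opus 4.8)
The plan is to derive $P(1)\le P_{SN}(1)$ from the integral representation $P(z)=\int_0^{+\infty}P(z,\kappa)\,d\rho_z(\kappa)$ established above, combined with Jensen's inequality. Recall that $d\rho_z$ is the law of $\bdkappa=\frac{p}{z\tau_0}\bdS_z\sim\Gamma(z,z)$, so for $z=1$ the variable $\bdkappa$ is exponential of mean one and in particular $\EE[\bdkappa]=1$; recall also that $P_{SN}(1)=P(1,1)$. Hence $P(1)=\EE[P(1,\bdkappa)]$, and if $\kappa\mapsto P(1,\kappa)$ is concave then Jensen gives
$$
P(1)=\EE\bigl[P(1,\bdkappa)\bigr]\le P\bigl(1,\EE[\bdkappa]\bigr)=P(1,1)=P_{SN}(1)\ .
$$

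So the only point to establish is the concavity of $\kappa\mapsto P(1,\kappa)$. I would specialize the closed-form formula of Theorem \ref{thm_closedform2}: since $Q(1,x)=e^{-x}$, it collapses to
$$
P(1,\kappa)=1-e^{-\lambda\kappa}+\lambda e^{(1-\lambda)\kappa}e^{-\kappa}=1-(1-\lambda)e^{-\lambda\kappa}\ ,
$$
with $\lambda=q/p\in(0,1)$, whence $\partial_\kappa^2 P(1,\kappa)=-(1-\lambda)\lambda^2 e^{-\lambda\kappa}<0$. Equivalently, one may feed $z=1$ into the general expression for $\partial_2^2 P(z,\kappa)$ in terms of $g_{\lambda,z}$ and note $g_{\lambda,1}(\kappa)=-\lambda/\kappa<0$.

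I do not foresee a genuine obstacle: the expectation representation is already proved, $\EE[\bdkappa]=1$ is immediate, and the concavity at $z=1$ is a one-line computation. The real content is that the argument is special to $z=1$: for $z\ge 2$ the map $\kappa\mapsto P(z,\kappa)$ ceases to be concave — as the sign analysis of $g_{\lambda,z}$ indicates — so the opposite inequality $P_{SN}(z)\le P(z)$ for large $z$ cannot be obtained this way and must instead come from the quantitative asymptotic comparison of $P(z)$ and $P_{SN}(z)$.
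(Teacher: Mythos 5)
Your proposal is correct and follows essentially the same route as the paper: the integral representation $P(1)=\EE[P(1,\bdkappa)]$ with $\EE[\bdkappa]=1$, concavity of $\kappa\mapsto P(1,\kappa)$, and Jensen's inequality, the paper establishing concavity via the sign of $g_{\lambda,1}(\kappa)=-\lambda/\kappa<0$. Your explicit simplification $P(1,\kappa)=1-(1-\lambda)e^{-\lambda\kappa}$ is just a slightly more direct way of verifying the same concavity.
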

In general, for $z\geq 2$, we have the reverse inequality. To determine the sign of $g_{\lambda, z}$ we 
study its zeros. 

The equation to solve is 
 \begin{equation*}
 \left (1-\frac{1}{z}\right ) \frac{1}{\kappa} + \left (1-\frac{1}{z}\right )
 \left (1-\frac{2}{z}\right ) \frac{1}{\kappa^2}+\ldots+\left (1-\frac{1}{z}\right )
 \ldots \left (1-\frac{z-1}{z}\right )  \frac{1}{\kappa^{z-1}} =\frac{\lambda}{1-\lambda} \ .
 \end{equation*}
This is a polynomial equation in $1/\kappa$, the coefficients are increasing on $z$, and the left hand side is 
decreasing on $\kappa \in (0,+\infty )$ from $+\infty$ to $0$, therefore there is a unique solution $\kappa (z)$,
and 
$$
\kappa (2) <\kappa (3) < \ldots
$$
We compute 
$$
\kappa (2)=\frac{1-\lambda}{2\lambda}=\frac{1}{2q}-1 >0 \ .
$$
In this case the function $\kappa\mapsto P(z,\kappa)$ is convex only in the interval $(0, \kappa(z))$. 
For $z$ large, most of the support of the measure $d\rho_z$ is contained in this interval and we have by
Jensen's inequality
 $$
 P(z) \approx \int_0^{\kappa (z)} P(z, \kappa) \, d\rho_z(\kappa) \geq P(z, \bar \kappa_z)
 \approx P(z,1)=P_{SN}(z) \ ,
 $$
 where 
$$
 \bar \kappa_z =\int_0^{\kappa (z)} \kappa \, d\rho_z(\kappa) \approx \int_0^{+\infty} \kappa \, d\rho_z(\kappa) =1\, .
$$

We can get some estimates on $\kappa(z)$ for $z\to +\infty$. The first observation is that 
for $z$ large we have $\kappa(z)>1$. The asymptotic limits for $Q(z,\kappa z)$ for $\kappa <1$ and $\kappa =1$ 
(Lemma \ref{lemma_asymp}) and Stirling
asymptotic formula give that 
$$
Q(z, \kappa z) z! e^{\kappa z} (z\kappa )^z \to +\infty \ ,
$$
and $g_{\lambda, z}(\kappa ) \not=0$.

For $\kappa >1$, we can use the asymptotic \cite {DLMF} (8.11.7),  $z\to +\infty$,
$$
\Gamma(z,\kappa z) \sim \frac{(\kappa z)^z e^{-\kappa z}}{(\kappa-1)z} 
$$
and 
$$
(1-\lambda) \Gamma (z, \kappa z)-(\kappa z)^{z-1} e^{-\kappa z} \sim 
(\kappa z)^{z-1} e^{-\kappa z} \left ( (1-\lambda) \frac{\kappa}{\kappa-1} -1\right ) \ ,
$$
thus, since
$$
g_{\lambda, z}(\kappa) = (1-\lambda) \Gamma (z, \kappa z) z e^{\kappa z}(\kappa z)^{-z}-\kappa^{-1} \ ,
$$
we have
$$
g_{\lambda, \infty}(\kappa)=\lim_{z\to +\infty} g_{\lambda, z}(\kappa) = \frac{1}{\kappa}\left ( (1-\lambda) \frac{\kappa}{\kappa-1} -1 \right ) = \frac{1-\lambda}{\kappa-1} -\frac{1}{\kappa}.
$$
Now, if 
$$
\kappa(\infty) =\lim_{z\to +\infty} \kappa (z) \ ,
$$
we have $g_{\lambda, \infty}(\kappa_\infty) = 0$, so we get:

\begin{proposition}
$$
\kappa(\infty) =\lim_{z\to +\infty} \kappa (z) =\lambda^{-1} =\frac{p}{q} \ .
$$ 
\end{proposition}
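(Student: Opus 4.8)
The plan is to recast the relation defining $\kappa(z)$ as a level set of a monotone family of functions and then run a sandwich argument. Write
$$\varphi_z(\kappa)=\sum_{j=1}^{z-1}\left(\prod_{i=1}^{j}\left(1-\frac{i}{z}\right)\right)\frac{1}{\kappa^{j}},$$
which is exactly the left-hand side of the equation solved by $\kappa(z)$: as already observed, $\kappa\mapsto\varphi_z(\kappa)$ is strictly decreasing from $+\infty$ to $0$ on $(0,+\infty)$, and $\kappa(z)$ is its unique point with $\varphi_z(\kappa(z))=\frac{\lambda}{1-\lambda}$. Two elementary properties of the coefficients $c_{j,z}=\prod_{i=1}^{j}(1-i/z)$ do all the work: they lie in $[0,1]$ and are nondecreasing in $z$ (the monotonicity noted in the text), and $c_{j,z}\to1$ as $z\to+\infty$ for each fixed $j$. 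Hence, for every fixed $\kappa>1$ the terms $c_{j,z}\kappa^{-j}$ increase to $\kappa^{-j}$ and are dominated by the summable sequence $\kappa^{-j}$, so
$$\varphi_z(\kappa)\ \uparrow\ \varphi_\infty(\kappa):=\sum_{j\ge1}\kappa^{-j}=\frac{1}{\kappa-1}\qquad(z\to+\infty),$$
while trivially $\varphi_z(\kappa)\le\varphi_\infty(\kappa)$ for all $z$. Note $1/\lambda>1$ since $\lambda=q/p<1$, that $\varphi_\infty(1/\lambda)=\frac{1}{1/\lambda-1}=\frac{\lambda}{1-\lambda}$, and that $\varphi_\infty$ is strictly decreasing on $(1,+\infty)$.

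First I would get the upper bound $\kappa(\infty)\le 1/\lambda$. From $\varphi_z(1/\lambda)\le\varphi_\infty(1/\lambda)=\frac{\lambda}{1-\lambda}=\varphi_z(\kappa(z))$ and strict monotonicity of $\varphi_z$, we get $\kappa(z)\le 1/\lambda$ for every $z$; as $\kappa(z)$ is increasing in $z$, its limit $\kappa(\infty)$ exists and satisfies $\kappa(\infty)\le 1/\lambda$.

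Next I would match it with the lower bound. Given $\varepsilon\in(0,1/\lambda-1)$, set $\kappa_0=1/\lambda-\varepsilon\in(1,1/\lambda)$. Strict monotonicity of $\varphi_\infty$ gives $\varphi_\infty(\kappa_0)>\varphi_\infty(1/\lambda)=\frac{\lambda}{1-\lambda}$, and since $\varphi_z(\kappa_0)\uparrow\varphi_\infty(\kappa_0)$ there is $z_\varepsilon$ with $\varphi_z(\kappa_0)>\frac{\lambda}{1-\lambda}=\varphi_z(\kappa(z))$ for all $z\ge z_\varepsilon$; strict monotonicity of $\varphi_z$ then forces $\kappa(z)>\kappa_0=1/\lambda-\varepsilon$. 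Thus $\kappa(\infty)=\lim_z\kappa(z)\ge 1/\lambda-\varepsilon$ for every $\varepsilon>0$, so $\kappa(\infty)\ge 1/\lambda$. Combined with the previous step, $\kappa(\infty)=1/\lambda=p/q$.

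The one point that needs care — and the reason this is not immediate from the pointwise limit $g_{\lambda,\infty}$ already computed — is that pointwise convergence of the functions $g_{\lambda,z}$ does not by itself entail convergence of their zeros; the extra input making the squeeze work is the monotonicity of the coefficients $c_{j,z}$ in $z$, which yields simultaneously the global one-sided bound $\varphi_z(\kappa)\le\varphi_\infty(\kappa)$ for $\kappa>1$ (hence $\kappa(z)\le 1/\lambda$ for every $z$, not merely in the limit) and the monotone convergence used to push $\varphi_z(\kappa_0)$ above the level $\frac{\lambda}{1-\lambda}$. Verifying those two properties of $c_{j,z}$, both already asserted before the statement, is routine, so the argument above is essentially complete.
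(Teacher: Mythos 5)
Your proof is correct, and it takes a genuinely different route from the paper's. The paper works with $g_{\lambda,z}$ directly: using the asymptotic expansion of $\Gamma(z,\kappa z)$ from the DLMF it computes the pointwise limit $g_{\lambda,\infty}(\kappa)=\frac{1-\lambda}{\kappa-1}-\frac{1}{\kappa}$ for $\kappa>1$ (after first checking that $\kappa(z)>1$ for large $z$), and then concludes that $\kappa(\infty)$ must be the zero $1/\lambda$ of $g_{\lambda,\infty}$ --- a step that, as you note, tacitly interchanges ``limit of zeros'' with ``zero of the limit'' and is not justified by pointwise convergence alone. You instead exploit the explicit polynomial form of the equation: the coefficients $c_{j,z}=\prod_{i=1}^{j}(1-i/z)$ lie in $[0,1]$ and increase to $1$, so $\varphi_z(\kappa)\uparrow\frac{1}{\kappa-1}$ for each fixed $\kappa>1$, and the two-sided squeeze against the level $\frac{\lambda}{1-\lambda}$ pins the zeros down. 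Your argument is more elementary (no incomplete-gamma asymptotics), fully rigorous where the paper is heuristic, and gives the extra uniform bound $\kappa(z)\le p/q$ for every $z$ rather than only in the limit; the paper's method, on the other hand, feeds directly into the second-order refinement $\kappa(z)=\frac{p}{q}-\frac{p^2}{q(p-q)}\frac{1}{z}+o(z^{-1})$ that follows, which your qualitative squeeze does not produce by itself.
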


Using the second order asymptotic (\cite{DLMF} (8.11.7)), for $\kappa >1$, $z\to +\infty$,
$$
\Gamma(z,\kappa z) \sim \frac{(\kappa z)^z e^{-\kappa z}}{z(\kappa -1)} 
\left ( 1 -\frac{\kappa }{(\kappa -1)^2 z}\right ) \ ,
$$
so
$$
g_{\lambda , z}(\kappa)\sim \frac{1-\lambda}{\kappa -1} \left (1 -\frac{\kappa}{(\kappa-1)^2 z} \right ) -\kappa^{-1} \ .
$$
Writing 
$$
\kappa (z)=\frac{p}{q}-\frac{a}{z} +o(z^{-1}) \ ,
$$
and using 
$$
\frac{1-\lambda}{\kappa (z) -1} \left (1 -\frac{\kappa (z)}{(\kappa (z)-1)^2 z} \right ) -\kappa (z)^{-1}
$$
we get 
\begin{proposition}
For $z\to +\infty$
$$
\kappa (z)=\frac{p}{q}-\frac{p^2}{q (p-q)} \, \frac{1}{z} +o(z^{-1}) \ .
$$
\end{proposition}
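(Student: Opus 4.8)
The plan is to plug the first-order ansatz $\kappa(z)=\frac{p}{q}-\frac{a}{z}+o(z^{-1})$ into the asymptotic equation $g_{\lambda,z}(\kappa(z))=0$ and read off the coefficient $a$. Write $h(\kappa)=g_{\lambda,\infty}(\kappa)=\frac{1-\lambda}{\kappa-1}-\frac{1}{\kappa}$ for the limit function; expanding the displayed asymptotic for $g_{\lambda,z}$ above, for $\kappa$ near $\frac{p}{q}$ we have
$$
g_{\lambda,z}(\kappa)=h(\kappa)-\frac{(1-\lambda)\,\kappa}{(\kappa-1)^{3}}\,\frac{1}{z}+o(z^{-1})\ .
$$
By the previous proposition $\kappa(z)\to\frac{p}{q}$, and $\frac{p}{q}$ is the unique zero of $h$ on $(1,+\infty)$, so $h(p/q)=0$.

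First I would Taylor expand $h$ at $\frac{p}{q}$, using $h(p/q)=0$, to obtain $h(\kappa(z))=-a\,h'(p/q)\,z^{-1}+o(z^{-1})$, where $h'(\kappa)=-\frac{1-\lambda}{(\kappa-1)^{2}}+\frac{1}{\kappa^{2}}$. Substituting $\lambda=\frac{q}{p}$ and $\frac{p}{q}-1=\frac{p-q}{q}$, a short computation gives $h'(p/q)=-\frac{q^{3}}{p^{2}(p-q)}\neq 0$; in particular $\frac{p}{q}$ is a simple zero of $h$. In the $z^{-1}$ coefficient of $g_{\lambda,z}$ I may replace $\kappa(z)$ by its limit $\frac{p}{q}$ at the cost of an $o(z^{-1})$ error, its value being $\frac{(1-\lambda)(p/q)}{(p/q-1)^{3}}=\frac{q^{2}}{(p-q)^{2}}$. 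Collecting the $z^{-1}$ terms in $g_{\lambda,z}(\kappa(z))=0$ then gives the linear equation $-a\,h'(p/q)-\frac{q^{2}}{(p-q)^{2}}=0$, whence
$$
a=-\frac{q^{2}/(p-q)^{2}}{h'(p/q)}=\frac{p^{2}}{q(p-q)}\ ,
$$
which is precisely the claimed coefficient.

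The elementary computation above is not the delicate part. The step that needs genuine care is the validity of the expansion of $g_{\lambda,z}$ along the moving point $\kappa=\kappa(z)$: the asymptotics for $\Gamma(z,\kappa z)$ from \cite{DLMF} (8.11.7) are stated pointwise in $\kappa$, so I would first upgrade them to hold uniformly for $\kappa$ in a fixed closed neighbourhood of $\frac{p}{q}$ (legitimate because $\frac{p}{q}>1$, as $q<1/2$), with remainder $o(z^{-1})$ uniform there. Since $\kappa(z)$ eventually lies in such a neighbourhood and $\kappa(z)-\frac{p}{q}\to 0$, the displayed expansion and the Taylor step are then justified, and the simplicity of the zero $\frac{p}{q}$ of $h$ ensures that $\kappa(z)$ genuinely admits the asserted first-order expansion, completing the proof.
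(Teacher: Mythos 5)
Your proposal is correct and follows essentially the same route as the paper: both substitute the ansatz $\kappa(z)=\frac{p}{q}-\frac{a}{z}+o(z^{-1})$ into the second-order asymptotic of $g_{\lambda,z}$ obtained from \cite{DLMF} (8.11.7) and solve the resulting linear equation for $a$, and your computations ($h'(p/q)=-q^{3}/(p^{2}(p-q))$, the $z^{-1}$ coefficient $q^{2}/(p-q)^{2}$, hence $a=p^{2}/(q(p-q))$) check out. The only difference is that you make explicit the uniformity of the expansion along the moving point and the simplicity of the zero of $g_{\lambda,\infty}$, details the paper leaves implicit.
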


Also we have 
$$
\frac{p}{q}-1 > \frac{p^2}{q (p-q)} \, \frac{1}{z}
$$
for 
$$
z > \left ( \frac{p}{p-q}\right )^2 \ ,
$$
so, for $z$ of the order of $( 1-\lambda)^{-2}$ we have $\kappa (z) >1$.

\pagebreak

\section{Bounds for $P (z)$}

Remember that we have set $s =4 pq$. We have the following inequality that is a 
particular case of more general Gautschi's inequalities \cite {G}:

\begin{lemma} \label{lemma_gautschi}
  Let $z \in \RR_+$. We have
$$
 \sqrt{\frac{z}{z + \frac{1}{2}}} \leq \frac{\Gamma \left( z +
     \frac{1}{2} \right)}{\sqrt{z} \, \Gamma ( z)} \leq 1  \ .
$$
\end{lemma}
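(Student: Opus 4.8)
The plan is to derive both inequalities from the logarithmic convexity of the Gamma function, i.e. from the fact that $\log\Gamma$ is convex on $(0,+\infty)$. This is classical: it follows from the Cauchy--Schwarz inequality applied to the integral representation $\Gamma(x)=\int_0^{+\infty}t^{x-1}e^{-t}\,dt$, by writing $t^{x-1}e^{-t}=\left(t^{a-1}e^{-t}\right)^{\theta}\left(t^{b-1}e^{-t}\right)^{1-\theta}$ for $x=\theta a+(1-\theta)b$, so I would simply invoke it. Concretely, convexity of $\log\Gamma$ gives, for any $a,b>0$, the midpoint bound $\Gamma\!\left(\frac{a+b}{2}\right)^2\le\Gamma(a)\,\Gamma(b)$.

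For the upper bound I would apply this at the midpoint $z+\frac12=\frac{z+(z+1)}{2}$, which gives
$$
\Gamma\!\left(z+\tfrac12\right)^2\le \Gamma(z)\,\Gamma(z+1)=z\,\Gamma(z)^2 ,
$$
using $\Gamma(z+1)=z\,\Gamma(z)$; dividing by $z\,\Gamma(z)^2>0$ and taking square roots yields $\frac{\Gamma(z+1/2)}{\sqrt z\,\Gamma(z)}\le 1$. For the lower bound I would apply the same midpoint inequality shifted one half-step, at $z+1=\frac{(z+\frac12)+(z+\frac32)}{2}$, which gives
$$
\Gamma(z+1)^2\le \Gamma\!\left(z+\tfrac12\right)\Gamma\!\left(z+\tfrac32\right)=\left(z+\tfrac12\right)\Gamma\!\left(z+\tfrac12\right)^2 ,
$$
using $\Gamma(z+\frac32)=(z+\frac12)\Gamma(z+\frac12)$. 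Substituting $\Gamma(z+1)=z\,\Gamma(z)$ gives $z^2\,\Gamma(z)^2\le\left(z+\tfrac12\right)\Gamma\!\left(z+\tfrac12\right)^2$, hence $\frac{\Gamma(z+1/2)^2}{z\,\Gamma(z)^2}\ge\frac{z}{z+1/2}$, and taking square roots completes the proof.

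There is essentially no obstacle here: the whole content is the log-convexity of $\Gamma$, and the two bounds are just the same midpoint inequality evaluated at $z+\frac12$ and at $z+1$. The only point that would need spelling out for a fully self-contained argument is the Cauchy--Schwarz proof of log-convexity, which is routine; alternatively one could cite \cite{G} directly since the statement is a special case of Gautschi's inequalities.
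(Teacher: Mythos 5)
Your proof is correct and is essentially the paper's own argument: the paper applies Cauchy--Schwarz directly to the integral $\Gamma\left(z+\tfrac12\right)=\int_0^{+\infty}\left(t^{z/2}e^{-t/2}\right)\left(t^{(z-1)/2}e^{-t/2}\right)dt$ to get $\Gamma\left(z+\tfrac12\right)\le\sqrt{z}\,\Gamma(z)$, which is exactly your midpoint log-convexity inequality at $\frac{z+(z+1)}{2}$, and then obtains the lower bound by the same substitution $z\mapsto z+\tfrac12$ that you use. The only difference is presentational (abstract log-convexity versus the explicit integral), so nothing further is needed.
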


\begin{proof}
  By Cauchy-Schwarz inequality, we have:
  \begin{align*}
    \Gamma \left( z + \frac{1}{2} \right) & =  \int_0^{+ \infty} t^{z -
    \frac{1}{2}} e^{-t} \, dt\\
    & \leq  \int_0^{+ \infty}  \left( t^{\frac{z}{2}} e^{-
    \frac{t}{2}} \right) \cdot \left( t^{\frac{z}{2} - \frac{1}{2}} e^{-
    \frac{t}{2}} \right) \, dt\\
    & \leq  \left( \int_0^{+ \infty} t^z e^{- t} \, dt
    \right)^{\frac{1}{2}} \cdot \left( \int_0^{+ \infty} t^{z - 1} e^{-
    t} \, dt \right)^{\frac{1}{2}}\\
    & \leq  \Gamma ( z + 1)^{\frac{1}{2}} \cdot \Gamma (
    z)^{\frac{1}{2}}\\
    & \leq  (z \Gamma ( z))^{\frac{1}{2}} \cdot \Gamma (
    z)^{\frac{1}{2}}\\
    & \leq  \sqrt{z} \Gamma (z)
  \end{align*}
  On the other side, the last inequality with $z$ replaced by $z +
  \frac{1}{2}$ gives:
  $$ 
z \Gamma ( z) = \Gamma \left( z + \frac{1}{2} + \frac{1}{2} \right)
     \leq \sqrt{z + \frac{1}{2}} \Gamma \left( z + \frac{1}{2} \right) 
$$
\end{proof}

\begin{lemma} \label{boundpnsz}
 For $z > 1$, we have
$$
  \sqrt{\frac{z}{z + \frac{1}{2}}} \cdot \frac{s^z}{\sqrt{\pi z}} \leq
     P(z) \leq \frac{1}{\sqrt{1-s}} \cdot \frac{s^z}{\sqrt{\pi z}}  \ .
$$
\end{lemma}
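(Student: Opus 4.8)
The plan is to combine the closed-form expression $P(z)=I_s(z,1/2)=B_s(z,1/2)/B(z,1/2)$ from Theorem \ref{prop_close} with crude but effective bounds on the numerator and the denominator, obtained separately.

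First I would bound the incomplete beta integral
$$
B_s(z,1/2)=\int_0^s t^{z-1}(1-t)^{-1/2}\,dt \ .
$$
On the interval $[0,s]$ the factor $(1-t)^{-1/2}$ is increasing, so $1\le(1-t)^{-1/2}\le(1-s)^{-1/2}$ (recall $s=4pq<1$). Multiplying by $t^{z-1}\ge 0$ and integrating gives
$$
\frac{s^z}{z}\le B_s(z,1/2)\le\frac{s^z}{z\sqrt{1-s}} \ .
$$
Next I would bound the complete beta function using Gautschi's inequality. Since $B(z,1/2)=\Gamma(z)\Gamma(1/2)/\Gamma(z+1/2)=\sqrt{\pi}\,\Gamma(z)/\Gamma(z+1/2)$, Lemma \ref{lemma_gautschi} rewritten as $z/\sqrt{z+1/2}\le\Gamma(z+1/2)/\Gamma(z)\le\sqrt{z}$ yields
$$
\frac{\sqrt{\pi}}{\sqrt{z}}\le B(z,1/2)\le\frac{\sqrt{\pi}}{\sqrt{z}}\sqrt{\frac{z+1/2}{z}} \ .
$$

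Finally I would assemble these four one-sided estimates. Dividing the lower estimate for $B_s(z,1/2)$ by the upper estimate for $B(z,1/2)$ gives the claimed lower bound $P(z)\ge\sqrt{z/(z+1/2)}\cdot s^z/\sqrt{\pi z}$, while dividing the upper estimate for $B_s(z,1/2)$ by the lower estimate for $B(z,1/2)$ gives the claimed upper bound $P(z)\le(1-s)^{-1/2}\cdot s^z/\sqrt{\pi z}$. The only thing requiring attention is to pair each inequality (the monotonicity of $(1-t)^{-1/2}$ and each half of Gautschi's inequality) in the direction compatible with the quotient $B_s/B$; beyond that bookkeeping there is no real analytic obstacle, and in fact the argument works verbatim for every $z>0$, not just $z>1$.
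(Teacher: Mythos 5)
Your proof is correct and follows essentially the same route as the paper: bound $(1-t)^{-1/2}$ between $1$ and $(1-s)^{-1/2}$ on $[0,s]$ to control $B_s(z,1/2)$, then apply Gautschi's inequality (Lemma \ref{lemma_gautschi}) to the Gamma-function prefactor. The only cosmetic difference is that you bound numerator and denominator of $B_s/B$ separately rather than carrying the prefactor through the integral estimate, and your remark that the argument works for all $z>0$ is accurate.
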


\begin{proof}
  The function $x \longmapsto ( 1 - x)^{- \frac{1}{2}}$ is non-decreasing. So,
  by definition of $I_s$ and the upper bound of the inequality of Lemma \ref{lemma_gautschi}, we have
  \begin{align*}
    P (z) = I_s \left( z, \frac{1}{2} \right) & =  \frac{\Gamma
    \left( z + \frac{1}{2} \right)}{\Gamma \left( \frac{1}{2} \right) \Gamma (
    z)}  \int_0^s t^{z - 1}  ( 1 - t)^{- \frac{1}{2}} \, dt\\
    & \leq  \frac{1}{\sqrt{\pi}}  \frac{\Gamma \left( z +
    \frac{1}{2} \right)}{\Gamma ( z)}  \int_0^s t^{z - 1}  ( 1 - s)^{-
    \frac{1}{2}} \, dt\\
    & \leq  \frac{\Gamma \left( z + \frac{1}{2}
    \right)}{\sqrt{z} \, \Gamma ( z)}  \cdot \frac{s^z}{\sqrt{\pi ( 1 - s)
    z}} \\
    & \leq \frac{1}{\sqrt{1-s}} \cdot \frac{s^z}{\sqrt{\pi z}} \ .
  \end{align*}
 In the same way, using the lower bound of the inequality of Lemma \ref{lemma_gautschi}, we have
  \begin{align*}
    P  ( z) = I_s \left( z, \frac{1}{2} \right) & \geq 
    \frac{1}{\sqrt{\pi}}  \frac{\Gamma \left( z + \frac{1}{2}
    \right)}{\Gamma ( z)}  \int_0^s t^{z - 1} \, dt\\
    & \geq  \frac{\Gamma \left ( z + \frac12 \right )}  {\sqrt{z} \, \Gamma (z)}   \cdot \frac{s^z}{\sqrt{\pi z}}\\
    & \geq  \sqrt{\frac{z}{z + \frac{1}{2}}} \cdot \frac{s^z}{\sqrt{\pi
    z}} \ .
  \end{align*}
\end{proof}

Note that this gives again the exponential decrease of Nakamoto's probability.

\section{An upper bound for $P_{SN} (z)$}

\begin{proposition} \label{inepnsz}
 We have,
  $$
    P_{SN} (z) < \frac{1}{1 - \frac{q}{p}}  \frac{1}{\sqrt{2
    \pi z}} e^{- \left( \frac{q}{p} - 1 - \log \frac{q}{p} \right) z}
    + \frac{1}{2} e^{- \left( \frac{q}{p} - 1 - \log \left(  \frac{q}{p}
    \right) z \right)}  
  $$
\end{proposition}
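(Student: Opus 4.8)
The plan is to read the right-hand side of the asserted inequality as
$\frac{1}{1-q/p}\,\frac{1}{\sqrt{2\pi z}}\,e^{-zc(q/p)}+\frac12\,e^{-zc(q/p)}$, and to obtain it by starting from the closed form for $P_{SN}(z)=P(z,1)$ given by Theorem \ref{thm_closedform2},
$$
P_{SN}(z)=\bigl(1-Q(z,zq/p)\bigr)+(q/p)^{z}e^{z(p-q)/p}\,Q(z,z)\ ,
$$
and bounding the two summands separately. Throughout, write $\lambda=q/p<1$ (valid since $q<1/2$) and recall $c(x)=x-1-\log x$; the only analytic inputs beyond elementary manipulations are the Stirling lower bound $z!>\sqrt{2\pi z}\,(z/e)^{z}$ and the identity $(q/p)^{z}e^{z(p-q)/p}=e^{z\log\lambda+z-z\lambda}=e^{-zc(\lambda)}$, so both target terms are $e^{-zc(\lambda)}$ times an explicit factor.

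For the first summand I would use $Q(z,x)=\sum_{k=0}^{z-1}\frac{x^{k}}{k!}e^{-x}$ to write $1-Q(z,\lambda z)=e^{-\lambda z}\sum_{k\ge z}\frac{(\lambda z)^{k}}{k!}$, factor out the $k=z$ term, and bound the remaining tail by a geometric series: for $k=z+j$ with $j\ge1$ one has $\frac{(\lambda z)^{j}}{(z+1)\cdots(z+j)}<\lambda^{j}$, hence
$$
1-Q(z,\lambda z)<\frac{(\lambda z)^{z}e^{-\lambda z}}{z!}\cdot\frac{1}{1-\lambda}\ .
$$
Inserting the Stirling lower bound for $z!$ turns the right-hand side into $\frac{1}{1-\lambda}\,\frac{1}{\sqrt{2\pi z}}\,e^{-zc(\lambda)}$, which is exactly the first term of the asserted bound, with strict inequality.

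For the second summand the essential point is the strict inequality $Q(z,z)<\tfrac12$ for every positive integer $z$. This is the classical statement that $z-1$ lies strictly below the median of a $\mathrm{Poisson}(z)$ variable $N$: one pairs $\PP[N=z-j]$ with $\PP[N=z+j]$, checks their ratio is $<1$ for $1\le j\le z$, concludes $Q(z,z)=\PP[N\le z-1]<\PP[N\ge z+1]$, and hence $2Q(z,z)<1-\PP[N=z]<1$. (Alternatively this follows from the Ramanujan $\theta_{z}\in(1/3,1/2)$ identity, or from the exact form of \cite{DLMF} (8.11.12) combined once more with the Stirling bound.) Granting it, $(q/p)^{z}e^{z(p-q)/p}Q(z,z)<\tfrac12\,e^{-zc(\lambda)}$, the second term; adding the two estimates gives the Proposition.

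The geometric tail bound and the Stirling estimate are routine; the step deserving genuine care — and where I expect the main obstacle to lie — is the strict bound $Q(z,z)<\tfrac12$, since it is precisely what upgrades the purely asymptotic statement of Lemma \ref{lemma_asymp}(2) to an inequality valid for all $z$. I would also check explicitly that each inequality used is strict, so that the final estimate is strict as stated.
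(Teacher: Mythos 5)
Your overall route is the same as the paper's: split $P_{SN}(z)=P(z,1)=\bigl(1-Q(z,\lambda z)\bigr)+(q/p)^z e^{z(p-q)/p}Q(z,z)$ and prove the two bounds of Lemma \ref{inegb} separately. Your treatment of the first summand is correct and is in substance identical to the paper's: your tail sum $e^{-\lambda z}\sum_{k\ge z}(\lambda z)^k/k!$ is exactly the series \cite{DLMF} (8.7.1) for $\gamma(z,\lambda z)/\Gamma(z)$, the geometric domination $\tfrac{(\lambda z)^j}{(z+1)\cdots(z+j)}<\lambda^j$ is the paper's bound $\tfrac{1}{z(z+1)\cdots(z+n)}\le z^{-(n+1)}$, and the Stirling lower bound $z!>\sqrt{2\pi z}(z/e)^z$ is \cite{DLMF} (5.6.1).

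The gap is in your self-contained proof of $Q(z,z)<\tfrac12$. The pairing you propose is wrong: for $N\sim\mathrm{Poisson}(z)$ one has $\PP[N=z-j]/\PP[N=z+j]=(z+j)!/\bigl((z-j)!\,z^{2j}\bigr)$, which for $j=1$ equals $(z+1)/z>1$, not $<1$. Worse, the intermediate inequality you want, $\PP[N\le z-1]<\PP[N\ge z+1]$, is simply false: at $z=1$ it reads $e^{-1}<1-2e^{-1}$, i.e.\ $0.368<0.264$, and asymptotically the left side is $\tfrac12-\tfrac{1}{3\sqrt{2\pi z}}+o(z^{-1/2})$ while the right side is $\tfrac12-\tfrac{2}{3\sqrt{2\pi z}}+o(z^{-1/2})$. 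The statement you actually need is $\PP[N\le z-1]<\PP[N\ge z]$, with the $N=z$ term kept on the right; an elementary pairing that works is $z-j\leftrightarrow z+j-1$, since $\PP[N=z-j]/\PP[N=z+j-1]=\prod_{i=1}^{j-1}\bigl(1-i^2/z^2\bigr)\le 1$ (with equality only at $j=1$), giving $Q(z,z)\le\PP[z\le N\le 2z-1]<1-Q(z,z)$. Your fallback references do close the gap — Ramanujan's $\tfrac{e^z}{2}=\sum_{k=0}^{z-1}\tfrac{z^k}{k!}+\theta_z\tfrac{z^z}{z!}$ with $\theta_z>0$ gives $Q(z,z)<\tfrac12$ immediately, and \cite{DLMF} (8.10.13) is precisely what the paper cites for this step — so the proposal is repairable, but the argument you actually wrote out for the step you yourself flagged as critical does not stand.
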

 This upper bound is quite sharp in view of the asymptotics in Proposition \ref{prop_asymp2} (2).

\begin{lemma} \label{inegb}
 Let $z \in \NN^{\ast}$ and $\lambda \in \RR_+^{\ast}$.
  \begin{enumerate}
    \item If $\lambda \in ]0,1[$, then $1 - Q ( z, \lambda z)
    < \frac{1}{1 - \lambda}  \frac{1}{\sqrt{2 \pi z}} e^{- (
    \lambda - 1 - \log \lambda) z}$
    
    \item $Q ( z, z) < \frac{1}{2}$
  \end{enumerate}
\end{lemma}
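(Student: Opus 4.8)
The plan is to establish both parts by elementary, fully non-asymptotic estimates, since each inequality is claimed for \emph{every} $z\in\NN^{\ast}$, whereas Lemma~\ref{lemma_asymp} only handles the limit $z\to+\infty$.

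For (1) I would begin, as in the proof of Lemma~\ref{lemma_asymp}, from $1-Q(z,\lambda z)=\gamma(z,\lambda z)/\Gamma(z)=\frac{1}{\Gamma(z)}\int_0^{\lambda z}t^{z-1}e^{-t}\,dt$, and rescale by $t=zu$ to get $1-Q(z,\lambda z)=\frac{z^z}{\Gamma(z)}\int_0^{\lambda}u^{z-1}e^{-zu}\,du$. The pivotal identity is $u^{z-1}e^{-zu}=\frac{1}{z(1-u)}\frac{d}{du}\bigl(u^z e^{-zu}\bigr)$; integrating by parts on $[0,\lambda]$ and dropping the positive term $\frac{1}{z}\int_0^\lambda \frac{u^z e^{-zu}}{(1-u)^2}\,du$ yields the clean estimate $\int_0^\lambda u^{z-1}e^{-zu}\,du<\frac{\lambda^z e^{-z\lambda}}{z(1-\lambda)}$, valid for all $\lambda\in(0,1)$. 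Combining this with the Stirling lower bound $\Gamma(z)=(z-1)!>\sqrt{2\pi z}\,z^{z-1}e^{-z}$ (the form with strictly positive remainder), hence $z^z/\Gamma(z)<\sqrt{z}\,e^z/\sqrt{2\pi}$, the product telescopes to $\frac{1}{(1-\lambda)\sqrt{2\pi z}}\,e^{z(1-\lambda+\log\lambda)}=\frac{1}{(1-\lambda)\sqrt{2\pi z}}\,e^{-(\lambda-1-\log\lambda)z}$, which is exactly the claimed bound.

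For (2) I would use $Q(z,z)=e^{-z}\sum_{k=0}^{z-1}\frac{z^k}{k!}$, so that the claim is equivalent to $\sum_{k=0}^{z-1}\frac{z^k}{k!}<\sum_{k=z}^{+\infty}\frac{z^k}{k!}$. I would pair the $k$-th left-hand term ($0\le k\le z-1$) with the $(2z-1-k)$-th right-hand term and compare them: writing $j=z-1-k$, the ratio is $\frac{z^{2j+1}}{(z-j)(z-j+1)\cdots(z+j)}=\frac{z^{2j+1}}{z\prod_{i=1}^j(z^2-i^2)}\ge 1$, with equality only for $j=0$, since $z^2-i^2<z^2$ for $i\ge 1$. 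Summing yields $\sum_{k=0}^{z-1}\frac{z^k}{k!}\le\sum_{k=z}^{2z-1}\frac{z^k}{k!}$, and the strictly positive tail $\sum_{k\ge 2z}\frac{z^k}{k!}$ upgrades this to the strict inequality, hence $Q(z,z)<1/2$.

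I expect part (1) to be the crux: the delicate point is to produce the bound with the sharp constant $\frac{1}{1-\lambda}$ \emph{uniformly} in $z$, rather than only the asymptotic $1-Q(z,\lambda z)\sim\frac{1}{(1-\lambda)\sqrt{2\pi z}}e^{-c(\lambda)z}$ that a direct appeal to \cite{DLMF}~(8.11.6) would give. The integration-by-parts identity is what makes the passage from $\sim$ to $<$ possible without invoking enveloping properties of the asymptotic series, and the explicit Stirling inequality supplies the last bit of uniformity. Part (2) is then routine once the pairing $k\leftrightarrow 2z-1-k$ and the estimate $(z-i)(z+i)<z^2$ are in hand.
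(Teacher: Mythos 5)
Your proof is correct, and for both parts it reaches the paper's conclusions by a different route. For (1) the paper invokes the series expansion of the lower incomplete gamma function (DLMF 8.7.1), bounds $\Gamma(z)/\Gamma(z+n+1)\leq z^{-(n+1)}$ termwise, and sums the resulting geometric series to get $\gamma(z,\lambda z)<\frac{(\lambda z)^z e^{-\lambda z}}{z(1-\lambda)}$; you obtain exactly the same intermediate bound by integrating by parts with the identity $u^{z-1}e^{-zu}=\frac{1}{z(1-u)}\frac{d}{du}\bigl(u^z e^{-zu}\bigr)$ and discarding the positive remainder $\frac{1}{z}\int_0^\lambda \frac{u^z e^{-zu}}{(1-u)^2}\,du$. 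From there both arguments finish identically with the Stirling lower bound $\Gamma(z)>\sqrt{2\pi z}\,z^{z-1}e^{-z}$ (DLMF 5.6.1). Your integration-by-parts step is arguably cleaner in that it produces the strict inequality in one stroke and makes transparent what is being thrown away, whereas the series route requires the termwise comparison; neither buys more generality. For (2) the paper simply cites DLMF (8.10.13), while you give a self-contained elementary proof via the pairing $k\leftrightarrow 2z-1-k$ and the estimate $(z-i)(z+i)<z^2$; this is a genuine addition in that it makes the lemma independent of the reference, and the argument checks out (the pairing gives $\sum_{k=0}^{z-1}z^k/k!\leq\sum_{k=z}^{2z-1}z^k/k!$ and the nonempty tail $\sum_{k\geq 2z}z^k/k!>0$ forces strictness, covering the borderline case $z=1$ where the pairing alone gives equality).
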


\begin{proof}
  For (1) We use \cite {DLMF} (8.7.1)
 $$
    \gamma ( a, x) = e^{- x} x^a  \sum_{n = 0}^{\infty}  \frac{\Gamma
    ( a)}{\Gamma ( a + n + 1)} x^n \ ,
 $$
 which is valid for $a, x \in \RR$. Let $\lambda \in] 0, 1 [$. Using $\Gamma ( z + 1) = z \Gamma ( z)$, we get:
  \begin{align*}
    \gamma (z, \lambda z) & =  e^{-\lambda z}  (\lambda z)^z  \sum_{n
    = 0}^{+ \infty}  \frac{\Gamma (z)}{\Gamma (z + n + 1)}  (\lambda z)^n \\
    & =  e^{-\lambda z}  (\lambda z)^z  \left( \frac{1}{z} + \frac{1}{z ( z + 1)}  ( \lambda z) +
    \frac{1}{z ( z + 1) ( z + 2)}  (\lambda z)^2 + \ldots \right)\\
    & \leq e^{-\lambda z}  (\lambda z)^z   \left( \frac{1}{z} + \frac{1}{z^2}  (
    \lambda z) + \frac{1}{z^3}  (\lambda z)^2 + \ldots \right) \\
    & \leq e^{-\lambda z}  (\lambda z)^z  \frac{1}{z}  \frac{1}{1 - \lambda}\\
    & \leq\frac{\lambda^z z^{z - 1} e^{-\lambda z}}{1 -
    \lambda}
  \end{align*}
  On the other hand, by \cite {DLMF} (5.6.1), we have
  $$
    \frac{1}{\Gamma ( z)} < \frac{e^z}{\sqrt{2 \pi z} z^{z - 1}} \ ,
  $$
  and for any $0 < \lambda < 1$,
   \begin{align*}
    1 - Q ( z, \lambda z) & =  \frac{\gamma ( z, \lambda z)}{\Gamma ( z)}\\
    & < \frac{1}{1 - \lambda}  \frac{1}{\sqrt{2 \pi z}} e^{- (
    \lambda - 1 - \log \lambda)z}
  \end{align*}
  
  For (2) this comes directly from \cite {DLMF} (8.10.13).
\end{proof}

Recalling that  $P_{SN} (z) = P (z, 1) = 1 - Q \left( z,  \frac{q}{p} z \right)
+ (q/p)^z e^{z (p-q)/p} Q ( z, z)$, we get Proposition \ref{inepnsz}.

\pagebreak

\section{Comparing again $P_{SN}(z)$ and $P(z)$.}

The aim of this section is to compute an explicit rank $z_0$ (no sharp) for which
$P_{SN}  ( z) < P  ( z)$ for $z \geq z_0$.

\begin{lemma} \label{m1}
 Let $\alpha > 0$. For all $x > \log \alpha$, $e^x - \alpha x > \frac{\alpha}{2}  ( x - \log \alpha)^2 + \alpha ( 1 -
  \log \alpha)$.
\end{lemma}

\begin{proof}
  Let $g (x) = e^x - \alpha x - \frac{\alpha}{2} ( x - \log \alpha)^2 - \alpha ( 1 - \log \alpha)$. 
 We have $g' ( x) =e^x - \alpha - \alpha ( x - \log \alpha)$, $g''(x) = e^x - \alpha$
  and $g^{(3)} (x) = e^x$. So, $g ( \log \alpha) = g' ( \log \alpha) =
  g'' ( \log \alpha) = 0$ and $g^{( 3)} > 0$. Therefore, $g ( x) > 0$ for $x >
  \log \alpha$.
\end{proof}

\begin{lemma} \label{m2}
 For $\alpha > 0$ and $x > \left( 1 + 1/\sqrt{2}
  \right) \log \alpha$ we have  $e^x
  > \alpha x$.
\end{lemma}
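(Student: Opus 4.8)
The plan is to deduce this from Lemma \ref{m1}. First I would dispose of the easy range: if $\alpha \leq 1$, then for $x \leq 0$ we have $\alpha x \leq 0 < e^x$, while for $x > 0$ we have $\alpha x \leq x < e^x$; so the inequality holds with no hypothesis on $x$ needed. Hence I may assume $\alpha > 1$, so that $\log \alpha > 0$ and the hypothesis $x > \left(1 + 1/\sqrt{2}\right)\log \alpha$ forces in particular $x > \log \alpha$, putting us in the regime where Lemma \ref{m1} applies.

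Next, Lemma \ref{m1} gives
$$
e^x - \alpha x > \frac{\alpha}{2}(x - \log \alpha)^2 + \alpha(1 - \log \alpha) \ ,
$$
so it suffices to check that the right-hand side is nonnegative, i.e. that $\tfrac{1}{2}(x - \log \alpha)^2 \geq \log \alpha - 1$. From the hypothesis, $x - \log \alpha > \tfrac{1}{\sqrt{2}}\log \alpha > 0$, hence $\tfrac{1}{2}(x - \log \alpha)^2 > \tfrac{1}{4}(\log \alpha)^2$. So it is enough to verify the elementary inequality $\tfrac{1}{4}(\log \alpha)^2 \geq \log \alpha - 1$, which rearranges to $(\log \alpha - 2)^2 \geq 0$ and is therefore always true. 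Chaining these gives $\tfrac{\alpha}{2}(x-\log\alpha)^2 + \alpha(1-\log\alpha) > 0$, whence $e^x > \alpha x$.

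The only point requiring any care is the edge behaviour when $\alpha$ is close to $1$ (so $\log \alpha$ is small) and when $x \leq 0$, where $\log \alpha$ and the threshold $\left(1 + 1/\sqrt{2}\right)\log \alpha$ change sign; this is why I would split off the case $\alpha \leq 1$ at the outset rather than try to run the argument uniformly. Everything else is a one-line application of Lemma \ref{m1} followed by completing the square, so I do not anticipate any genuine obstacle.
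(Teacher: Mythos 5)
Your proof is correct and follows essentially the same route as the paper: reduce to Lemma \ref{m1} after disposing of $\alpha \leq 1$, then show the quadratic lower bound is nonnegative beyond the threshold, the key elementary fact being $(\log\alpha - 2)^2 \geq 0$ (which is exactly the paper's inequality $\sqrt{2(u-1)} \leq u/\sqrt{2}$ in disguise). Your version is marginally tidier in that it handles all $\alpha > 1$ uniformly instead of splitting at $\alpha = e$ as the paper does, but the substance is identical.
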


\begin{proof}
  The inequality is trivial when $x \leq 0$. So, we can assume that $x > 0$.
For $0 < \alpha < 1$, we have $e^x > x > \alpha x$. For $1 < \alpha < e$, by Lemma \ref{m1}, we have $e^x -
    \alpha x > 0$ for $x > \log \alpha$. For $\alpha > e$, the largest root of the polynomial
    $\frac{\alpha}{2}  ( x - \log \alpha)^2 + \alpha ( 1 - \log \alpha)$ is $\log
    \alpha + \sqrt{2 ( \log \alpha - 1)}$ which is smaller than $( 1 +
    1/\sqrt{2} ) \log \alpha$ since $\sqrt{2 ( u - 1)} \leq
    u/\sqrt{2}$ for $u \geq 1$. So, the inequality results from
    Lemma \ref{m1} again. 
\end{proof}

\begin{lemma} \label{m3}
  For $\mu, \psi, x > 0$, if 
$$
x > \frac{1}{2 \sqrt{2}} - \frac{1+\sqrt{2}}{2\sqrt{2}}   \, \frac{\log ( 2
  \psi \mu^2)}{\psi}
$$
then we have  
$$
e^{- \psi x} < \frac{\mu}{\sqrt{x + \frac{1}{2}}} \ .
$$
\end{lemma}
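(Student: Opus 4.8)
The plan is to reduce the claimed inequality to the form $e^w > \alpha w$ already handled by Lemma~\ref{m2}, via a single linear change of variable.

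First I would note that since both sides are positive, the inequality $e^{-\psi x} < \mu/\sqrt{x+\frac12}$ is equivalent, after multiplying by $\sqrt{x+\frac12}\,e^{\psi x}$ and squaring, to the polynomial-versus-exponential inequality $x + \frac12 < \mu^2 e^{2\psi x}$. All the work will be done on this reformulated inequality. Next I would substitute $w = 2\psi\bigl(x+\frac12\bigr)$, so that $x + \frac12 = \frac{w}{2\psi}$ and $2\psi x = w - \psi$; substituting, $x+\frac12 < \mu^2 e^{2\psi x}$ becomes $\frac{w}{2\psi} < \mu^2 e^{w-\psi}$, that is $e^w > \alpha w$ with $\alpha = \frac{e^\psi}{2\psi\mu^2} > 0$. (Note $w > \psi > 0$ since $x>0$, so the substitution is legitimate.)

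Then I would invoke Lemma~\ref{m2} with this $\alpha$: it yields $e^w > \alpha w$ as soon as $w > \bigl(1+\tfrac{1}{\sqrt2}\bigr)\log\alpha$. It only remains to verify that the hypothesis imposed on $x$ in the statement is exactly this threshold condition on $w$. Since $\log\alpha = \psi - \log(2\psi\mu^2)$, $1+\tfrac1{\sqrt2}=\tfrac{1+\sqrt2}{\sqrt2}$, and $w = 2\psi x + \psi$, the inequality $w > (1+\tfrac1{\sqrt2})\log\alpha$ rearranges — using $\tfrac{1+\sqrt2}{\sqrt2}-1 = \tfrac1{\sqrt2}$ and dividing through by $2\psi$ — to precisely $x > \frac{1}{2\sqrt2} - \frac{1+\sqrt2}{2\sqrt2}\,\frac{\log(2\psi\mu^2)}{\psi}$, which is the hypothesis of the lemma. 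Undoing the two reductions (the substitution and the squaring) then gives $e^{-\psi x} < \mu/\sqrt{x+\frac12}$.

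I expect no genuine obstacle in this argument; the only place requiring care is matching the threshold constant $\frac{1}{2\sqrt2} - \frac{1+\sqrt2}{2\sqrt2}\,\frac{\log(2\psi\mu^2)}{\psi}$ with $\bigl(1+\tfrac1{\sqrt2}\bigr)\log\alpha$ after the substitution — in particular keeping track of the extra additive $\psi$ in $w = 2\psi x + \psi$ that comes from the shift by $\frac12$, and of the simplification $\frac{1+1/\sqrt2}{2} = \frac{1+\sqrt2}{2\sqrt2}$. When $\log\alpha \le 0$ the threshold is $\le 0$ and the condition holds for every $x>0$, consistent with $e^w > \alpha w$ being trivial in that case.
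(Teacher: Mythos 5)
Your proof is correct and follows essentially the same route as the paper's: both reduce the inequality, after squaring, to $e^{w} > \alpha w$ with $w = 2\psi\bigl(x+\tfrac12\bigr)$ and $\alpha = \frac{e^{\psi}}{2\psi\mu^{2}}$, then apply Lemma~\ref{m2} and unwind the threshold. The algebra matching the stated bound (including the simplification $\frac{1+1/\sqrt{2}}{2}=\frac{1+\sqrt{2}}{2\sqrt{2}}$) checks out.
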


\begin{proof}
We have
  \begin{eqnarray*}
    e^{- \psi \cdot x} < \frac{\mu}{\sqrt{x + \frac{1}{2}}} &
    \Longleftrightarrow & ( x + 1/2 ) \, e^{- 2 \psi
    \cdot x} < \mu^2\\
    & \Longleftrightarrow & ( x + 1/2 ) \, e^{- 2 \psi
    \cdot ( x + 1/2 )} < \mu^2 e^{- \psi}\\
    & \Longleftrightarrow & e^{2 \psi \cdot ( x + 1/2 )} > \frac{x + 1/2}{\mu^2 e^{- \psi}}\\
    & \Longleftrightarrow & e^{2 \psi \cdot ( x + 1/2 )} > \frac{1}{2 \psi \mu^2 e^{- \psi}} \ ,2 \psi \cdot ( x + 1/2 )
  \end{eqnarray*}
  By Lemma \ref{m2}, the last inequality is satisfied as soon as 
$$
2 \psi \cdot
  ( x + 1/2 ) > ( 1 + 1/\sqrt{2} ) \log
  \left( \frac{1}{2 \psi \mu^2 e^{- \psi}}  \right) \ .
$$
Moreover, we have
  \begin{align*}
    &2 \psi \cdot ( x + 1/2 ) > ( 1 + 1/\sqrt{2} ) \log \left ( \frac{1}{2 \psi \mu^2 e^{-
    \psi}}  \right) \Longleftrightarrow  2 \psi \cdot x + \psi > ( 1 + 1/\sqrt{2} ) 
    \log \left( \frac{e^{\psi}}{2 \psi \mu^2} \right)\\
    & \Longleftrightarrow  2 \psi \cdot x + \psi > ( 1 + 1/\sqrt{2} ) \psi - ( 1 + 1/\sqrt{2} ) \log ( 2 \psi \mu^2)\\
    & \Longleftrightarrow  2 \psi \cdot x > \frac{1}{\sqrt{2}} \cdot \psi -
    ( 1 + 1/\sqrt{2} ) \log ( 2 \psi \mu^2)\\
    & \Longleftrightarrow  x > \frac{1}{2 \sqrt{2}} - \frac{ 1 + 1/\sqrt{2} }{2}  \frac{\log ( 2 \psi \mu^2)}{\psi}
  \end{align*}
  
\end{proof}

\begin{theorem}
  \label{dertheo}Let $z \in \NN$. A sufficient condition for having
  $P_{SN} (z) < P ( z)$ is $z \geq z_0$ with $z_0=\lceil z_0^*\rceil$ being the smallest integer greater or equal
  to
$$ 
 z_0^* = \max \left( \frac{2}{\pi \left( 1 -
     \frac{q}{p} \right)^2}\, , \, \frac{1}{2 \sqrt{2}} - \frac{\left( 1 +
     \frac{1}{\sqrt{2}} \right)}{2}  \frac{\log \left( \frac{2 \psi (
     p)}{\pi}  \right)}{\psi ( p)} \right) 
$$
  where $\psi ( p) = \frac{q}{p} - 1 - \log \left(  \frac{q}{p}
  \right) - \log \left( \frac{1}{4 pq} \right) > 0$.
\end{theorem}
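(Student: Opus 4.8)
The plan is to sandwich $P_{SN}(z)$ from above and $P(z)$ from below by the explicit estimates already established, and then compare. Proposition \ref{inepnsz} gives
$$
P_{SN}(z) < \left(\frac{1}{1-q/p}\,\frac{1}{\sqrt{2\pi z}} + \frac{1}{2}\right) e^{-c(\lambda)z},\qquad c(\lambda) := \frac{q}{p}-1-\log\frac{q}{p},
$$
and Lemma \ref{boundpnsz} gives
$$
P(z) \;\geq\; \sqrt{\frac{z}{z+\tfrac12}}\,\frac{s^z}{\sqrt{\pi z}} \;=\; \frac{s^z}{\sqrt{\pi\,(z+\tfrac12)}},\qquad s=4pq.
$$
The identity that aligns the two is $c(\lambda)=\log(1/s)+\psi(p)$, with $\psi(p)=\frac{q}{p}-1-\log\frac{q}{p}-\log\frac{1}{4pq}>0$ the quantity in the statement; thus $e^{-c(\lambda)z}=s^z e^{-\psi(p)z}$, and after cancelling the common factor $s^z$ it suffices to establish
$$
\left(\frac{1}{1-q/p}\,\frac{1}{\sqrt{2\pi z}} + \frac{1}{2}\right) e^{-\psi(p)z} \;\leq\; \frac{1}{\sqrt{\pi\,(z+\tfrac12)}}.
$$

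The first entry of the maximum defining $z_0^*$ is used to dominate the prefactor: squaring shows that $\frac{1}{1-q/p}\,\frac{1}{\sqrt{2\pi z}}\leq\frac12$ is equivalent to $z\geq \frac{2}{\pi(1-q/p)^2}$, so for such $z$ the bracket is $\leq\frac12+\frac12=1$ and it remains only to prove
$$
e^{-\psi(p)z} \;\leq\; \frac{1}{\sqrt{\pi\,(z+\tfrac12)}} \;=\; \frac{\mu}{\sqrt{z+\tfrac12}},\qquad \mu:=\frac{1}{\sqrt{\pi}}.
$$
This is exactly the conclusion of Lemma \ref{m3} applied with $\psi=\psi(p)$ and $\mu=1/\sqrt{\pi}$; since $2\psi\mu^2 = 2\psi(p)/\pi$, the hypothesis of that lemma is precisely that $z$ exceeds the second entry of the maximum. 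Hence for $z\geq z_0=\lceil z_0^*\rceil$ both entries are dominated, both conditions hold, and chaining the displays yields
$$
P_{SN}(z) < e^{-c(\lambda)z} = s^z e^{-\psi(p)z} < \frac{s^z}{\sqrt{\pi\,(z+\tfrac12)}} \leq P(z).
$$

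I do not foresee a genuine difficulty: the substantive work — the two-sided bounds for $P_{SN}(z)$ and $P(z)$, and the elementary Lemma \ref{m3} — is already in place, and what remains is the bookkeeping above. Two small points deserve care: Lemma \ref{m3} has a strict hypothesis while $z_0=\lceil z_0^*\rceil$ only guarantees $z\geq z_0^*$, so the (generically vacuous) case $z=z_0^*\in\NN$ should be checked directly; and one should record that $z_0>1$, so that Lemma \ref{boundpnsz} is indeed applicable.
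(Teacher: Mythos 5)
Your proposal is correct and follows essentially the same route as the paper: upper-bound $P_{SN}(z)$ by Proposition \ref{inepnsz}, lower-bound $P(z)$ by Lemma \ref{boundpnsz}, use the first entry of the maximum to reduce the prefactor to $1$, and invoke Lemma \ref{m3} with $\mu=1/\sqrt{\pi}$ via the identity $c(\lambda)=\log(1/s)+\psi(p)$. Your remarks on the edge case $z=z_0^*\in\NN$ and on $z>1$ are careful additions that the paper's own proof glosses over (it actually takes $z>z_0$), but they do not change the argument.
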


\begin{proof}
  First, note that
  \begin{align*}
    \psi ( p) & =  \frac{q}{p} - 1 - \log \left(  \frac{q}{p} \right) - \log
    \left( \frac{1}{4 p^2}  \frac{p}{q} \right)\\
    & =  2 \left[ \frac{1}{2 p} - 1 - \log \left( \frac{1}{2 p} \right)
    \right ]
  \end{align*}
  So, $\psi ( p) > 0$ and $z_0$ is well defined. Let $z > z_0$. By Lemma
  \ref{boundpnsz} and Corollary \ref{inepnsz} it is enough to prove that
  $$
    \frac{1}{1 - \frac{q}{p}}  \frac{1}{\sqrt{2 \pi z}} e^{- z \left(
    \frac{q}{p} - 1 - \log \frac{q}{p} \right)} + \frac{1}{2} e^{- z
    \left( \frac{q}{p} - 1 - \log \left(  \frac{q}{p} \right) \right)}  < S
    \sqrt{\frac{z}{z + \frac{1}{2}}}  \frac{s^z}{\sqrt{\pi z}} 
  $$
  We have $z \geq z_0 \geq \frac{2}{\pi \left( 1 - \frac{q}{p}
  \right)^2}$, thus $\frac{1}{1 - \frac{q}{p}}  \frac{1}{\sqrt{2 \pi z}}
  \leq \frac{1}{2}$. So, the inequality is satisfied as soon as $e^{-
  z \psi ( p)} < \frac{\left( \frac{1}{\sqrt{\pi}} \right)}{\sqrt{z +
  \frac{1}{2}}}$ and the result follows from Lemma \ref{m3}.
\end{proof}

%
%
%

$$
\begin{array}{|c||c|c|c|c|c|c|c|c|c|c|}
 \hline
 z_0 & 2 & 3 & 4 & 5 & 6 & 7 & 8 & 9 & 10 & 11\\ 
 \hline
 q\geq & 0.000 & 0.232 & 0.305 & 0.342 & 0.365 &0.381 &0.393 & 0.401 & 0.409 & 0.415\\ 
 \hline
\end{array}
$$
\smallskip
\centerline{\footnotesize{\textbf{Table 5. Sharp values}}}


\pagebreak

\section{Tables for $P(z,\kappa)$.}

For complete Satoshi Tables see \cite{GPM}.
%

{\small
$$
\begin{array}{|c||c|c|c|c|c|c|c|c|c|c|c|c|c|c|c|}
\hline
\kappa \backslash q & 0.02 & 0.04 & 0.06 & 0.08 & 0.1 & 0.12 & 0.14 & 0.16 & 0.18 & 0.2 & 0.22 & 0.24 & 0.26\\ \hline \hline
0.1 & 0 & 0.01 & 0.03 & 0.09 & 0.18 & 0.33 & 0.55 & 0.88 & 1.34 & 1.96 & 2.78 & 3.87 & 5.27\\ \hline
0.2 & 0 & 0.01 & 0.05 & 0.11 & 0.23 & 0.42 & 0.71 & 1.12 & 1.68 & 2.44 & 3.44 & 4.74 & 6.39\\ \hline
0.3 & 0 & 0.02 & 0.06 & 0.15 & 0.3 & 0.55 & 0.91 & 1.42 & 2.11 & 3.04 & 4.24 & 5.77 & 7.7\\ \hline
0.4 & 0 & 0.02 & 0.08 & 0.19 & 0.39 & 0.69 & 1.14 & 1.77 & 2.62 & 3.74 & 5.17 & 6.98 & 9.22\\ \hline
0.5 & 0 & 0.03 & 0.1 & 0.24 & 0.49 & 0.87 & 1.43 & 2.2 & 3.22 & 4.56 & 6.25 & 8.36 & 10.93\\ \hline
0.6 & 0 & 0.04 & 0.13 & 0.31 & 0.61 & 1.08 & 1.76 & 2.69 & 3.92 & 5.49 & 7.47 & 9.9 & 12.83\\ \hline
0.7 & 0.01 & 0.05 & 0.16 & 0.38 & 0.75 & 1.33 & 2.14 & 3.25 & 4.7 & 6.54 & 8.82 & 11.59 & 14.89\\ \hline
0.8 & 0.01 & 0.06 & 0.19 & 0.46 & 0.92 & 1.61 & 2.58 & 3.88 & 5.57 & 7.7 & 10.3 & 13.42 & 17.11\\ \hline
0.9 & 0.01 & 0.07 & 0.24 & 0.56 & 1.11 & 1.92 & 3.06 & 4.58 & 6.53 & 8.96 & 11.9 & 15.39 & 19.45\\ \hline
1 & 0.01 & 0.08 & 0.28 & 0.67 & 1.32 & 2.27 & 3.6 & 5.36 & 7.58 & 10.32 & 13.61 & 17.47 & 21.9\\ \hline
1.1 & 0.01 & 0.1 & 0.34 & 0.8 & 1.55 & 2.66 & 4.19 & 6.2 & 8.71 & 11.78 & 15.42 & 19.64 & 24.44\\ \hline
1.2 & 0.02 & 0.12 & 0.4 & 0.94 & 1.81 & 3.09 & 4.84 & 7.1 & 9.92 & 13.32 & 17.32 & 21.91 & 27.05\\ \hline
1.3 & 0.02 & 0.14 & 0.47 & 1.09 & 2.1 & 3.55 & 5.53 & 8.07 & 11.2 & 14.95 & 19.3 & 24.24 & 29.72\\ \hline
1.4 & 0.02 & 0.16 & 0.54 & 1.26 & 2.4 & 4.06 & 6.27 & 9.1 & 12.55 & 16.64 & 21.34 & 26.62 & 32.41\\ \hline
1.5 & 0.02 & 0.19 & 0.62 & 1.44 & 2.74 & 4.59 & 7.06 & 10.18 & 13.96 & 18.39 & 23.44 & 29.04 & 35.12\\ \hline
1.6 & 0.03 & 0.22 & 0.71 & 1.64 & 3.1 & 5.17 & 7.9 & 11.32 & 15.43 & 20.2 & 25.58 & 31.49 & 37.83\\ \hline
1.7 & 0.03 & 0.25 & 0.81 & 1.85 & 3.48 & 5.78 & 8.78 & 12.51 & 16.95 & 22.06 & 27.76 & 33.96 & 40.53\\ \hline
1.8 & 0.04 & 0.28 & 0.91 & 2.08 & 3.89 & 6.42 & 9.7 & 13.75 & 18.52 & 23.95 & 29.96 & 36.42 & 43.2\\ \hline
1.9 & 0.04 & 0.32 & 1.03 & 2.33 & 4.32 & 7.1 & 10.67 & 15.03 & 20.13 & 25.88 & 32.18 & 38.88 & 45.84\\ \hline
2 & 0.05 & 0.36 & 1.15 & 2.58 & 4.78 & 7.8 & 11.67 & 16.35 & 21.77 & 27.83 & 34.4 & 41.32 & 48.43\\ \hline
2.1 & 0.05 & 0.4 & 1.28 & 2.86 & 5.26 & 8.54 & 12.71 & 17.7 & 23.44 & 29.8 & 36.62 & 43.74 & 50.96\\ \hline
2.2 & 0.06 & 0.44 & 1.41 & 3.15 & 5.77 & 9.31 & 13.78 & 19.09 & 25.14 & 31.78 & 38.84 & 46.12 & 53.43\\ \hline
2.3 & 0.07 & 0.49 & 1.56 & 3.46 & 6.3 & 10.11 & 14.88 & 20.51 & 26.86 & 33.77 & 41.04 & 48.46 & 55.84\\ \hline
2.4 & 0.07 & 0.54 & 1.71 & 3.78 & 6.85 & 10.94 & 16.01 & 21.95 & 28.59 & 35.75 & 43.21 & 50.76 & 58.17\\ \hline
2.5 & 0.08 & 0.6 & 1.87 & 4.11 & 7.42 & 11.79 & 17.17 & 23.41 & 30.34 & 37.73 & 45.36 & 53 & 60.43\\ \hline
2.6 & 0.09 & 0.65 & 2.04 & 4.46 & 8.01 & 12.67 & 18.35 & 24.89 & 32.09 & 39.7 & 47.48 & 55.19 & 62.6\\ \hline
2.7 & 0.1 & 0.71 & 2.22 & 4.83 & 8.62 & 13.57 & 19.56 & 26.39 & 33.84 & 41.65 & 49.56 & 57.32 & 64.7\\ \hline
2.8 & 0.11 & 0.78 & 2.41 & 5.21 & 9.26 & 14.49 & 20.78 & 27.9 & 35.59 & 43.59 & 51.6 & 59.38 & 66.71\\ \hline
2.9 & 0.12 & 0.85 & 2.6 & 5.6 & 9.91 & 15.44 & 22.02 & 29.42 & 37.34 & 45.5 & 53.6 & 61.39 & 68.64\\ \hline
3 & 0.13 & 0.92 & 2.81 & 6.01 & 10.58 & 16.4 & 23.28 & 30.94 & 39.08 & 47.38 & 55.55 & 63.32 & 70.49\\ \hline
3.1 & 0.14 & 0.99 & 3.02 & 6.44 & 11.27 & 17.38 & 24.55 & 32.47 & 40.81 & 49.24 & 57.45 & 65.19 & 72.25\\ \hline
3.2 & 0.15 & 1.07 & 3.24 & 6.87 & 11.97 & 18.38 & 25.83 & 34 & 42.52 & 51.06 & 59.31 & 67 & 73.93\\ \hline
3.3 & 0.16 & 1.15 & 3.47 & 7.32 & 12.69 & 19.39 & 27.12 & 35.52 & 44.22 & 52.85 & 61.11 & 68.73 & 75.53\\ \hline
3.4 & 0.17 & 1.23 & 3.7 & 7.78 & 13.43 & 20.42 & 28.42 & 37.05 & 45.9 & 54.61 & 62.86 & 70.39 & 77.05\\ \hline
3.5 & 0.19 & 1.32 & 3.95 & 8.26 & 14.18 & 21.46 & 29.73 & 38.56 & 47.56 & 56.32 & 64.55 & 71.99 & 78.5\\
\hline
\end{array}
$$
}
\smallskip
\centerline{\footnotesize{\textbf{Table 6. $P(3, \kappa)$ ($z=3$) for different values of $\kappa$ and $q$ in $\%$.}}}

%

{\small
$$
\begin{array}{|c||c|c|c|c|c|c|c|c|c|c|c|c|c|c|c|}
\hline
\kappa \backslash q & 0.02 & 0.04 & 0.06 & 0.08 & 0.1 & 0.12 & 0.14 & 0.16 & 0.18 & 0.2 & 0.22 & 0.24 & 0.26\\ \hline \hline
0.1 & 0 & 0 & 0 & 0 & 0 & 0 & 0 & 0.01 & 0.02 & 0.04 & 0.08 & 0.15 & 0.28\\ \hline
0.2 & 0 & 0 & 0 & 0 & 0 & 0 & 0.01 & 0.01 & 0.03 & 0.06 & 0.12 & 0.23 & 0.41\\ \hline
0.3 & 0 & 0 & 0 & 0 & 0 & 0 & 0.01 & 0.02 & 0.05 & 0.09 & 0.18 & 0.34 & 0.6\\ \hline
0.4 & 0 & 0 & 0 & 0 & 0 & 0.01 & 0.01 & 0.03 & 0.07 & 0.15 & 0.28 & 0.51 & 0.88\\ \hline
0.5 & 0 & 0 & 0 & 0 & 0 & 0.01 & 0.02 & 0.05 & 0.11 & 0.23 & 0.42 & 0.75 & 1.28\\ \hline
0.6 & 0 & 0 & 0 & 0 & 0 & 0.01 & 0.04 & 0.08 & 0.17 & 0.34 & 0.63 & 1.1 & 1.84\\ \hline
0.7 & 0 & 0 & 0 & 0 & 0.01 & 0.02 & 0.06 & 0.13 & 0.26 & 0.51 & 0.91 & 1.57 & 2.57\\ \hline
0.8 & 0 & 0 & 0 & 0 & 0.01 & 0.03 & 0.08 & 0.19 & 0.39 & 0.73 & 1.3 & 2.19 & 3.53\\ \hline
0.9 & 0 & 0 & 0 & 0 & 0.02 & 0.05 & 0.12 & 0.28 & 0.55 & 1.03 & 1.81 & 2.99 & 4.73\\ \hline
1 & 0 & 0 & 0 & 0.01 & 0.02 & 0.07 & 0.18 & 0.39 & 0.78 & 1.43 & 2.45 & 3.99 & 6.19\\ \hline
1.1 & 0 & 0 & 0 & 0.01 & 0.04 & 0.1 & 0.25 & 0.54 & 1.06 & 1.92 & 3.25 & 5.2 & 7.93\\ \hline
1.2 & 0 & 0 & 0 & 0.01 & 0.05 & 0.14 & 0.35 & 0.74 & 1.42 & 2.53 & 4.21 & 6.63 & 9.94\\ \hline
1.3 & 0 & 0 & 0 & 0.02 & 0.07 & 0.2 & 0.47 & 0.98 & 1.86 & 3.26 & 5.35 & 8.29 & 12.23\\ \hline
1.4 & 0 & 0 & 0 & 0.03 & 0.09 & 0.26 & 0.62 & 1.28 & 2.39 & 4.14 & 6.68 & 10.19 & 14.79\\ \hline
1.5 & 0 & 0 & 0.01 & 0.03 & 0.12 & 0.34 & 0.8 & 1.64 & 3.02 & 5.15 & 8.19 & 12.3 & 17.58\\ \hline
1.6 & 0 & 0 & 0.01 & 0.05 & 0.16 & 0.45 & 1.02 & 2.06 & 3.76 & 6.31 & 9.89 & 14.63 & 20.59\\ \hline
1.7 & 0 & 0 & 0.01 & 0.06 & 0.21 & 0.57 & 1.29 & 2.56 & 4.6 & 7.62 & 11.77 & 17.16 & 23.78\\ \hline
1.8 & 0 & 0 & 0.02 & 0.08 & 0.27 & 0.71 & 1.6 & 3.14 & 5.56 & 9.07 & 13.82 & 19.86 & 27.13\\ \hline
1.9 & 0 & 0 & 0.02 & 0.1 & 0.34 & 0.89 & 1.96 & 3.79 & 6.63 & 10.67 & 16.04 & 22.72 & 30.59\\ \hline
2 & 0 & 0 & 0.03 & 0.12 & 0.42 & 1.09 & 2.37 & 4.53 & 7.82 & 12.42 & 18.4 & 25.71 & 34.14\\ \hline
2.1 & 0 & 0 & 0.03 & 0.15 & 0.51 & 1.32 & 2.83 & 5.35 & 9.12 & 14.29 & 20.9 & 28.81 & 37.73\\ \hline
2.2 & 0 & 0 & 0.04 & 0.19 & 0.62 & 1.58 & 3.36 & 6.26 & 10.54 & 16.29 & 23.51 & 31.98 & 41.34\\ \hline
2.3 & 0 & 0 & 0.05 & 0.23 & 0.75 & 1.88 & 3.95 & 7.26 & 12.06 & 18.41 & 26.23 & 35.21 & 44.94\\ \hline
2.4 & 0 & 0.01 & 0.06 & 0.28 & 0.89 & 2.21 & 4.59 & 8.35 & 13.69 & 20.64 & 29.02 & 38.47 & 48.49\\ \hline
2.5 & 0 & 0.01 & 0.07 & 0.33 & 1.05 & 2.59 & 5.3 & 9.52 & 15.42 & 22.95 & 31.87 & 41.73 & 51.97\\ \hline
2.6 & 0 & 0.01 & 0.09 & 0.4 & 1.24 & 3 & 6.08 & 10.78 & 17.24 & 25.35 & 34.77 & 44.98 & 55.35\\ \hline
2.7 & 0 & 0.01 & 0.1 & 0.47 & 1.44 & 3.45 & 6.92 & 12.12 & 19.15 & 27.81 & 37.69 & 48.19 & 58.63\\ \hline
2.8 & 0 & 0.01 & 0.12 & 0.55 & 1.67 & 3.95 & 7.82 & 13.54 & 21.14 & 30.33 & 40.62 & 51.34 & 61.78\\ \hline
2.9 & 0 & 0.02 & 0.14 & 0.64 & 1.92 & 4.49 & 8.79 & 15.04 & 23.19 & 32.89 & 43.54 & 54.42 & 64.8\\ \hline
3 & 0 & 0.02 & 0.17 & 0.74 & 2.2 & 5.08 & 9.82 & 16.6 & 25.31 & 35.48 & 46.44 & 57.41 & 67.66\\ \hline
3.1 & 0 & 0.02 & 0.19 & 0.85 & 2.5 & 5.71 & 10.91 & 18.24 & 27.47 & 38.08 & 49.29 & 60.3 & 70.38\\ \hline
3.2 & 0 & 0.03 & 0.22 & 0.97 & 2.83 & 6.39 & 12.06 & 19.93 & 29.68 & 40.68 & 52.1 & 63.09 & 72.94\\ \hline
3.3 & 0 & 0.03 & 0.26 & 1.11 & 3.18 & 7.11 & 13.27 & 21.68 & 31.93 & 43.28 & 54.84 & 65.75 & 75.33\\ \hline
3.4 & 0 & 0.03 & 0.3 & 1.25 & 3.57 & 7.88 & 14.54 & 23.48 & 34.2 & 45.86 & 57.52 & 68.3 & 77.57\\ \hline
3.5 & 0 & 0.04 & 0.34 & 1.41 & 3.98 & 8.69 & 15.86 & 25.33 & 36.48 & 48.41 & 60.11 & 70.72 & 79.66\\
\hline
\end{array}
$$
}
\medskip
\centerline{\footnotesize{\textbf{Table 7. $P(6, \kappa)$ ($z=6$) for different values of $\kappa$ and $q$ in $\%$.}}}


\footnotesize{\thanks{Acknowledgements: {We are grateful to N. Emerson for his comments and remarks.}}}
\newpage

\end{document}